\newenvironment{proofof}[1]{%
\par\addvspace{12pt plus3pt minus3pt}\global\logotrue%
\noindent{\bf Proof of #1.\hskip.5em}\ignorespaces}{%
	\par\iflogo\prbox\par
	\addvspace{12pt plus3pt minus3pt}\fi}
\newif\iflogo
\def\prbox{\par
	\vskip-\lastskip\vskip-\baselineskip\hbox to \hsize{\hfill\fboxsep0pt\fbox{\phantom{\vrule width5pt height5pt depth0pt}}}\global\logofalse}
\newenvironment{proof}{%
\par\addvspace{12pt plus3pt minus3pt}\global\logotrue%
\noindent{\bf Proof.\hskip.5em}\ignorespaces}{%
	\par\iflogo\prbox\par
	\addvspace{12pt plus3pt minus3pt}\fi}
\newcommand{\II}{\mathbbm{1}}
\newtheorem{thm}{Theorem}[section]
\newtheorem{dfn}[thm]{Definition}
\newtheorem{lem}[thm]{Lemma}
\newtheorem{prp}[thm]{Proposition}
\newtheorem{cor}[thm]{Corollary}
\newcommand{\balpha}{{\boldsymbol{\alpha}}}
\newcommand{\bbeta}{{\boldsymbol{\beta}}}
\newcommand{\bgamma}{{\boldsymbol{\gamma}}}
\newcommand{\bdelta}{{\boldsymbol{\delta}}}
\newcommand{\bzeta}{{\boldsymbol{\zeta}}}
\newcommand{\bchi}{{\boldsymbol{\chi}}}
\newcommand{\bxi}{{\boldsymbol{\xi}}}
\newcommand{\bpi}{{\boldsymbol{\pi}}}
\newcommand{\bPi}{{\boldsymbol{\Pi}}}
\newcommand{\bvarpi}{{\boldsymbol{\varpi}}}
\newcommand{\bC}{{\boldsymbol{C}}}
\newcommand{\bE}{{\boldsymbol{E}}}
\newcommand{\be}{{\boldsymbol{e}}}
\newcommand{\bfab}{{\boldsymbol{f}}}
\newcommand{\bg}{{\boldsymbol{g}}}
\newcommand{\bh}{{\boldsymbol{h}}}
\newcommand{\bk}{{\boldsymbol{k}}}
\newcommand{\bn}{{\boldsymbol{n}}}
\newcommand{\bp}{{\boldsymbol{p}}}
\newcommand{\bu}{{\boldsymbol{u}}}
\newcommand{\bv}{{\boldsymbol{v}}}
\newcommand{\bw}{{\boldsymbol{w}}}
\newcommand{\bX}{{\boldsymbol{X}}}
\newcommand{\bV}{{\boldsymbol{V}}}
\newcommand{\bN}{{\boldsymbol{N}}}
\DeclareMathOperator{\Data}{Data}
\DeclareMathOperator{\supp}{supp}
\DeclareMathOperator{\range}{range}
\newcounter{tightenum}
\newenvironment{tightenumerate}% 
{\begin{list}{(\roman{tightenum})}{\usecounter{tightenum} \setlength{\itemsep}{0pt}\setlength{\parsep}{0pt}\setlength{\topsep}{0pt}}}%
{\end{list}}
\numberwithin{equation}{section}
\begin{document}

\title{Quantization of linearized gravity in cosmological vacuum spacetimes}
\author{Christopher J Fewster\thanks{\tt chris.fewster@york.ac.uk}~ and
David S Hunt\thanks{\tt dsh501@york.ac.uk}\\ Department of Mathematics,
                 University of York, \\
                 Heslington,
                 York YO10 5DD, U.K.}
\maketitle

\maketitle

\begin{abstract}
Linearized Einstein gravity (with possibly nonzero cosmological constant) is quantized in the framework of algebraic quantum field theory
by analogy with Dimock's treatment of electromagnetism [\emph{Rev.\ Math.\
  Phys.} \textbf{4} (1992) 223--233]. To achieve this, the classical
theory is developed in a full, rigorous and systematic fashion, with particular attention
given to the circumstances under which the symplectic product is weakly non-degenerate
and to the related question of whether the space of solutions is 
separated by the classical observables on which the quantum theory is modelled.
\end{abstract}

\section{Introduction}

Low energy effects of quantum gravitation can usefully be modelled by the quantum field theory
of linearized gravity, treated as a quantum field theory in a fixed background spacetime. The
background of de Sitter space is particularly interesting from the cosmological viewpoint, where
linearized gravitons can induce fluctuations in the cosmic microwave background, for instance, see \cite[Ch.~4 \&~10]{Weinberg} for a discussion of inflation and tensor fluctuations.
For these reasons, the quantum field theory of linear gravitational perturbations has been studied extensively on several spacetimes, see for example \cite{FordParker} and references therein for the case of Robertson-Walker spacetimes.

In the important de Sitter case, there has been a long-running controversy concerning the existence of
a suitable de Sitter invariant vacuum state. This arises because natural constructions, for example, in the 
transverse-traceless and synchronous gauge associated with conformally flat coordinates on the
Poincar\'e patch~\cite{FordParkerInfrared} lead to a divergent expression for the two-point function as a result
of an infrared divergence in the integral over modes.  Some parties to the controversy~\cite{HiguchiMarolf} nonetheless claim the existence of a de Sitter invariant two-point function whilst others \cite{MiaoWoodward} deny those claims. At
issue is the validity of the use of Euclidean and analytic continuation methods, and the freedom to
select gauge conditions or add gauge-fixing terms. 

For this reason it seems appropriate to have a framework for the quantization of linear gravitational perturbations on a general cosmological vacuum spacetime that will, when required, allow one to reduce down to a specific choice of spacetime and will also permit a rigorous investigation to be made into the issue of Hadamard states. This paper is the first step in that process.

Of course the quantization of linear gravitational perturbations has previously been considered by numerous authors and it would not be feasible to list them all here, but, for our purposes~\cite{Ottewill} in particular stands out because of its consideration of Hadamard renormalization of the graviton stress tensor on cosmological vacuum spacetimes. However, this was achieved by introducing gauge-breaking terms and ghost fields. We prefer to avoid the introduction of such auxiliary fields and choices of gauge-breaking terms and apply a minimal approach along the same lines as Dimock's
quantization of electromagnetism~\cite{DimockEM} (see also \cite{CJFMJP}). A brief outline of the method is as follows. 
Let $(M,\bg)$ be a background spacetime that solves the cosmological vacuum Einstein equation.
The classical phase space $\mathscr{P}(M)$ of linearized gravity consists of solutions to the 
linearized Einstein equation modulo gauge-equivalence with a typical equivalence class 
denoted $[\bgamma]$. A classical observable is a function from $\mathscr{P}(M)$
to the set of complex numbers. For example, given any $[\bgamma]\in \mathscr{P}(M)$, we
may smear any representative $\bgamma$ (i.e., a solution to the linearized Einstein equation) 
against a  compactly supported test-tensor $\bfab$, to
give an observable
\[
F_{\bfab}([\bgamma]) = \int_M \gamma_{ab}f^{ab} dvol_\bg,\;
\]
which is independent of the choice of representative provided that $\bfab$ satisfies the condition $\nabla^{a}f_{(ab)}=0$. Once a presymplectic form is given on $\mathscr{P}(M)$, we
may quantize to form a $*$-algebra of observables -- essentially by Dirac quantization. 
This is a particular instance of the algebraic approach to quantum field theory in curved spacetimes
(see, e.g., \cite[Ch.~4.5]{WaldQFT} for an introduction). The advantages of this procedure are: (a) it is manifestly independent of any choice of gauge, (b)
it does not involve the addition of gauge-breaking terms or auxiliary fields in the action, (c) the
method can be implemented in arbitrary globally hyperbolic cosmological vacuum background spacetimes, (d) it separates the construction of the algebra of observables from questions concerning the 
existence or otherwise of particular vacuum states which can then be addressed separately,
and (e) it circumvents the known nonexistence of a Wightman theory of linearized gravity on
Minkowski space  allowing arbitrary smearings \cite{StrocchiII,BracciStrocchi}.
Nonetheless, there is a disadvantage, namely that we are not free to smear the metric perturbation
against arbitrary tensors, which would be necessary in order to couple it to other fields. However,
this quantization provides a rigorous framework that could be broadened in further work, 
for example, by exploiting the recent and rigorous approach to the Batalin--Vilkovisky formulation 
due to Fredenhagen and Rejzner~\cite{FredenhagenRejzner}. Following the work reported here, Hack and Schenkel~\cite{HackSchenkel2012} have shown that our treatment of linearized gravity can be put into a broader framework of quantisations of linear theories with gauge invariance. It would also be interesting
to adapt these methods to other theories related to linearized gravity~\cite{DeserWaldronII,DeserWaldronI,DeserWaldronIII}.

An important question is whether or not the class of observables identified above is sufficient
to distinguish different equivalence classes of solutions. This is closely related to the issue 
of whether the pre-symplectic form is weakly non-degenerate, and thus symplectic. We
will be able to resolve these questions positively at least in the case where the background
spacetime has compact Cauchy surfaces, by adapting splitting results of Moncrief~\cite{Moncriefdecomp} derived
in the Arnowitt-Deser-Misner (ADM) formalism (briefly reviewed in Appendix~\ref{sec:ADMnondegen}).

As mentioned above, the quantized theory leans heavily on the theory of classical linearized gravity
and a substantial part of the present paper is devoted to a clear and general presentation of the theory. Much of this material is known, of course, but we have not found a unified and full treatment in the literature that would be sufficient for our purposes.\footnote{For instance \cite[Sec.~4]{FM}
considers only compact Cauchy surfaces, while~\cite{Lichnerowicz} restricts to local results.}  
We therefore hope that our presentation of the
theory may be of independent interest and utility. To a large extent we emphasize a four-dimensional
`spacetime' viewpoint on the theory in contrast to the `sliced' viewpoint of the ADM formulation. 
This is more natural for the formulation of the quantum theory and also removes any suspicion
of dependence on particular slicings, coordinates or choices of (linearized) lapse and shift. 
However, as mentioned, we will draw on insights from the ADM formulation and make contact where necessary. 

The paper is structured as follows. In section~\ref{sec:lingrav} we begin with a brief introduction to linearized gravity and the gauge invariance of the theory. While our approach as a whole is
independent of gauge choice, particular gauges are used for technical purposes to establish
results on the full solution space and, to this end, we describe three gauges, de Donder, transverse-traceless and synchronous, and the circumstances under which they can be employed. This is largely standard, but
for our discussion of the transverse-traceless gauge, for which we find that the topology of the Cauchy surface determines whether one may pass globally to the transverse-traceless gauge. In
the synchronous case it is shown that a general perturbation is gauge-equivalent to a synchronous
perturbation in a normal neighbourhood of any Cauchy surface (relative to the induced normal field).

In section~\ref{sec:ExistenceUniqueness} we review existence and uniqueness of solutions to the linearized Einstein equation and provide the proofs for perturbations that are spacelike-compact, i.e., 
supported in the union of the causal future and past of a compact set. Similar results are sketched in \cite{FM}  for the case of compact Cauchy surfaces. We also prove various results concerning Green's operators, which will be required in section~\ref{sec:phasespace}. The 
upshot of this discussion is that, modulo gauge equivalence, all spacelike-compact solutions
to linearized gravity are given by the action of the advanced-minus-retarded solution, associated to the Lichnerowicz Laplacian $P_{ab}^{\phantom{ab}cd}=\nabla^{e}\nabla_{e}\delta^{c}_{\phantom{c}a}\delta^{d}_{\phantom{d}b}-2R^{c \phantom{ab}d}_{\phantom{c}ab}$,
acting on smooth, compactly supported, symmetric tensor fields $\bfab$ obeying
$\nabla^a (f_{ab}-\frac{1}{2} f_c^{\phantom{c}c} g_{ab}) = 0$. One should note that although Lichnerowicz considers various Green's operators in \cite{Lichnerowicz}, his existence 
results are purely local in nature, that is, they are valid in suitable open subsets of the manifold. Likewise, the book by Friedlander~\cite{Friedlander} also only considers local results. Our approach rests on
the global theory of normally hyperbolic equations (developing the results just mentioned) that is described in the book of B\"{a}r, Ginoux and Pf\"{a}ffle~\cite{Bar}. We make contact with Lichnerowicz's work by
confirming that our propagator is identical to the one posited in~\cite{Lichnerowicz}. He justified its form by using analogy with electromagnetism and the work of Fierz and Pauli~\cite{FierzPauli} for the case that the background spacetime was Minkowski. In the present paper, it emerges (in any
cosmological vacuum spacetime) as the Dirac 
quantization of Poisson brackets of the classical observables $F_{\bfab}$. 

The final part is section~\ref{sec:phasespace}, which deals with construction of the phase space for linearized gravity and its subsequent quantization. As mentioned, the weak non-degeneracy of the symplectic product is established for compact Cauchy surfaces using results from the ADM formalism, in particular results of \cite{Moncriefdecomp} providing
various splittings of the space of initial data. Although these results are not established on 
general non-compact Cauchy surfaces, we conjecture that the non-degeneracy (as we formulate it) will hold for a large class of spacetimes with non-compact Cauchy surfaces as well. This is analogous to the situation in electromagnetism \cite{DimockEM} where to prove non-degeneracy one appeals to the Hodge decomposition, again placing restrictions on the Cauchy surface. 

Once the phase space is constructed, the quantization then follows standard lines, either by Dirac quantization of the observables or a direct quantization of the symplectic space; we demonstrate that they are equivalent in section~\ref{sec:phasespacereform}. We prove that the algebra satisfies the time-slice condition, i.e., it is generated by smeared fields whose smearing tensors are supported in
any neighbourhood of a given Cauchy surface. We also briefly discuss the extent to which the construction
respects the axioms of local covariance~\cite{BFV}.

Summarizing, our approach puts the quantum field theory of linearized gravity on a rigorous mathematical footing, which will permit future investigations into the states of the theory. In particular, it will enable precise statements to be made, using techniques from microlocal analysis, about Hadamard states
(cf.\ \cite{CJFMJP} for the case of electromagnetism).  Elsewhere, we  will investigate what
can be said concerning the existence of a de Sitter invariant Hadamard state in the context of our framework. Note that the discussion of linearization instabilities would manifest itself in the properties of states~\cite{MoncriefQuantInstab} of the algebra of observables we construct and so we do not consider these issues here.

We end this introduction with some preliminaries. We take, as in \cite{HawkingEllis,Wald}, the Riemann tensor to be defined by 
\begin{equation}
\label{eqn:Riemann}
R_{abc}^{\hspace{0.4cm} d} \omega_{d} = (\nabla_{a} \nabla_{b} - \nabla_{b} \nabla_{a}) \omega_{c}
\end{equation}
and the Ricci scalar to be
\begin{equation}
\label{eqn:Ricci}
R_{ac} = R_{abc}^{\hspace{0.4cm} b}.
\end{equation}
Boldface type is used to indicate a tensor written with its indices suppressed. We also use $\bw^{\flat}$ to denote the covariant form of a vector field $\bw$ and $\bv^{\sharp}$ to denote the contravariant form of a covector field $\bv$.
A number of spaces of smooth scalar and tensor fields will be employed; all
are taken to be complex-valued unless otherwise stated. As usual, $C^{\infty}(M)$  denotes the space of smooth functions on $M$, while we write $C^{\infty}(T^{a}_{b}(M))$ for the space of smooth rank $(a,b)$ tensor fields on $M$.
Various subscripts will denote restrictions on the support of such functions; 
$C^{\infty}_{0}(M)$ and $C^{\infty}_{0}(T^{a}_{b}(M))$ denote the
compactly supported elements of  $C^{\infty}(M)$  and $C^{\infty}(T^{a}_{b}(M))$, while, following the notation of~\cite[p.~90]{Bar}, the subscript $SC$ (spacelike-compact) attached to any of these spaces (e.g., 
$C^{\infty}_{SC}(T^{a}_{b}(M))$) denotes the subspace of tensor fields of the appropriate type whose support is contained within $J(K)$ for some compact $K \subset M$. In particular this means that the intersection of the support with a Cauchy surface is compact.  The subscript $TC$ (timelike-compact) denotes
tensor fields whose support lies between two Cauchy surfaces. Finally,  $S^{0}_{2}(M)$ (resp., 
$S^{2}_{0}(M)$) denotes the symmetric elements of $T^{0}_{2}(M)$ (resp., $T^{2}_{0}(M)$), giving rise to corresponding spaces of symmetric tensor fields.

\section{Linearized gravity}
\label{sec:lingrav}

A spacetime is a pair $(M,\bg)$ consisting of a four-dimensional, smooth, real, connected, Hausdorff, orientable manifold without boundary $M$ together with a smooth Lorentzian metric $\bg$ of signature $(-+++)$, with respect
to which $M$ is time-orientable; such a spacetime is automatically paracompact by the theorem in the appendix of \cite{Geroch}. Each (background) spacetime will be assumed to be globally hyperbolic, i.e., it admits no closed causal curves and for any two points $p,q \in M$ the set $J^{+}(p) \cap J^{-}(q)$ is compact~\cite{BernalSanchezcausal}. In addition, we assume that the metric solves
the vacuum Einstein equations with cosmological constant
\begin{equation}
\label{eqn:Einsteineqn}
G_{ab}+\Lambda g_{ab}=0,
\end{equation}
where $G_{ab} = R_{ab}-\frac{1}{2}Rg_{ab}$ as usual.
As a consequence of \eqref{eqn:Einsteineqn}, the background Ricci tensor and Ricci scalar obey
\begin{equation}
\label{eqn:Riccitenscalvalue}
R_{ab} = \Lambda g_{ab},\qquad R = 4\Lambda,
\end{equation}
which will often be used in what follows. 

Our aim is to quantize linearized perturbations of the Einstein equations. To linearize, we consider a  smooth one-parameter family of solutions $\lambda \mapsto \bg(\lambda)$ to \eqref{eqn:Einsteineqn} with $\bg(0)=\bg$. 
The linearized equation satisfied by the perturbation $\bgamma:=\dot{\bg}(0)
\in C^\infty(T^0_2(M))$ on a cosmological background is
\begin{equation}
\label{eqn:linearizedequations}
L_{ab}(\bgamma)=0,
\end{equation}
where
\begin{align}
\label{eqn:nonhypeqn}
L_{ab}(\bgamma)& = -\frac{1}{2} g_{ab} (\nabla^{c} \nabla^{d} \gamma_{(cd)} - \Box \gamma - \Lambda \gamma) - \Lambda \gamma_{(ab)} - \Box \gamma_{(ab)} \notag\\
&\qquad\qquad - \frac{1}{2} \nabla_{a} \nabla_{b} \gamma + \frac{3}{2}\nabla^{c} \nabla_{(a} \gamma_{bc)},
\end{align}
$\gamma=\gamma^a_{\phantom{a}a}$ is the trace of $\bgamma$ and 
$\Box = g^{ab}\nabla_{a}\nabla_{b}$. The linearized Einstein tensor $L_{ab}(\bgamma)$ vanishes for antisymmetric $\bgamma$ and for symmetric $\bgamma$ (our case of interest) reduces to
\begin{equation}
\label{eqn:symmL}
L_{ab}(\bgamma)= -\frac{1}{2} g_{ab} (\nabla^{c} \nabla^{d} \gamma_{cd} - \Box \gamma - \Lambda \gamma) - \Lambda \gamma_{ab} \\ - \frac{1}{2} \Box \gamma_{ab} - \frac{1}{2} \nabla_{a} \nabla_{b} \gamma + \nabla^{c} \nabla_{(a} \gamma_{b)c}.
\end{equation}
The linearized equation \eqref{eqn:symmL} also follows from the Euler-Lagrange equations of the Lagrangian (with $\bgamma$ assumed to be symmetric)
\begin{equation}
\label{eqn:lagrangian}
\mathscr{L}=T^{abcdef}\nabla_{a}\gamma_{bc}\nabla_{d}\gamma_{ef}+S^{abcd}\gamma_{ab}\gamma_{cd},
\end{equation}
where $T^{abcdef}$ is given by
\begin{align}
T^{abcdef}&=\frac{1}{4}(g^{ad}g^{bc}g^{ef}+g^{af}g^{d(b}g^{c)e}+g^{d(b}g^{c)f}g^{ae} 
\notag\\
& \qquad\qquad -g^{ad}g^{e(b}g^{c)f}-g^{a(e}g^{f)d}g^{bc}-g^{d(b}g^{c)a}g^{ef}),
\end{align}
which is symmetric on interchange of $b$ to $c$ and on interchange of $e$ to $f$ as required by the symmetry of $\bgamma$; $T^{abcdef}$ is also symmetric on interchange of $abc$ to $def$. Finally 
\begin{equation}
S^{abcd}=\frac{\Lambda}{4}g^{ac}g^{bd}+\frac{\Lambda}{4}g^{bc}g^{ad}-\frac{\Lambda}{4}g^{ab}g^{cd}
\end{equation}
is symmetric on interchange of $ab$ to $cd$ and symmetric on interchange of $a$ to $b$ and interchange of $c$ to $d$. The Lagrangian \eqref{eqn:lagrangian} comes from the second order expansion of the Einstein-Hilbert action with cosmological constant
\begin{equation}
S=\int{(R-2\Lambda) dvol_{\bg}},
\end{equation}
where $dvol_{\bg}$ denotes the volume element on spacetime associated with $\bg$. Note that in the expansion, the linear term is a total divergence and the zeroth-order term is the Einstein-Hilbert action for the background.

The covariant conjugate momentum
\begin{equation}
\label{eqn:momentum}
\Pi^{abc} = \frac{1}{\sqrt{-g}}\frac{\delta S}{\delta \nabla_{a}\gamma_{bc}}  = 2T^{abcdef}\nabla_{d}\gamma_{ef}
\end{equation}
is given by
\begin{equation}
\label{eqn:PIabc}
\Pi^{abc} = -\frac{1}{2}\nabla^{a}\gamma^{bc}+\frac{1}{2}g^{bc}\nabla^{a}\gamma-\frac{1}{2}g^{bc}\nabla_{d}\gamma^{ad} - \frac{1}{4}g^{ac}\nabla^{b}\gamma- \frac{1}{4}g^{ab}\nabla^{c}\gamma+\frac{1}{2}\nabla^{b}\gamma^{ac}+\frac{1}{2}\nabla^{c}\gamma^{ab}
\end{equation}
and the Euler-Lagrange equations are thus
\begin{equation}
\label{eqn:Eulerlagrange}
\nabla_{c}\Pi^{cab}-2S^{abcd}\gamma_{cd}=L^{ab}(\bgamma)=0.
\end{equation}
Note that the first equality holds for all smooth symmetric $\bgamma$, not just for solutions to
the linearized equations.

The linearized Einstein equation \eqref{eqn:linearizedequations} is non-hyperbolic. To address existence and uniqueness properties of solutions we exploit gauge freedom to put the equation into a hyperbolic form. As is well known, the gauge transformations for linearized gravity take the form
\begin{equation}\label{eqn:gaugetrans}
\bgamma'=\bgamma+ \pounds_\bw\bg, \qquad\text{i.e.,}\quad \gamma^{\prime}_{ab}=\gamma_{ab}+2\nabla_{(a}w_{b)}
\end{equation}
for $\bw\in C^\infty(T^1_0(M))$. Owing to the identity 
\begin{equation}
\label{eqn:SWliederiv}
L_{ab}(\pounds_{\bw}\bg)=\pounds_{\bw}(G_{ab}+\Lambda g_{ab}),
\end{equation}
the linearized equation \eqref{eqn:linearizedequations} is gauge invariant, in the sense that
$L_{ab}(\pounds_{\bw}\bg)=0$,
if and only if the background is a cosmological vacuum solution (see \cite[Lem.~2.2]{Stewart}). This restricts our analysis to such spacetimes. 

In addition, to ensure well-posedness of the Cauchy problem and the existence of certain integrals, we restrict attention to spacelike-compact perturbations. With this in mind we introduce the notations
\begin{align}
\mathscr{T}(M) &=  C_{SC}^{\infty}(S^{0}_{2}(M))  \\
\mathscr{S}(M) & = \{ \bgamma \in \mathscr{T}(M) \hspace{0.2cm} | \hspace{0.2cm} L_{ab}(\bgamma)=0\} \\
\mathscr{G}(M) &= \{ \pounds_{\bw}\bg \hspace{0.2cm} | \hspace{0.2cm} \bw \in C_{SC}^{\infty}(T^{1}_{0}(M))  \}
\end{align}
for, respectively, the spacelike-compact symmetric tensors,
the subspace obeying the linearized Einstein equation and the pure gauge solutions induced by spacelike-compact vector fields. 

In section~\ref{sec:phasespace}
 it will be necessary to consider the larger class of spacelike-compact pure gauge
solutions induced by arbitrary smooth vector fields,
\begin{equation}\label{eq:Ghatdef}
\hat{\mathscr{G}}(M)=\{\pounds_\bw\bg: \bw\in C^\infty(T^1_0(M))\}
\cap \mathscr{T}(M).
\end{equation}
In fact, this coincides with $\mathscr{G}(M)$ whenever $M$ has
compact Cauchy surfaces; more generally, the two sets differ only if 
there exist regions of the form $M\setminus J(K)$,
for $K$ compact, supporting Killing vector fields. The example of Minkowski space shows
that the latter condition, while necessary, is not sufficient: if $\pounds_\bw\bg\in\hat{\mathscr{G}}(M)$, then $\bw\in C^\infty(T^1_0(M))$ satisfies the Killing equation outside some set $J(K)$ with $K$ compact;
without loss of generality, $J(K)$ may be assumed  to have connected complement, by expanding
$K$ if necessary. Because Minkowski space is maximally symmetric, there exists a global Killing vector field $\bxi$ such that $\supp (\bw-\bxi)\subset J(K)$; as $\pounds_{\bw-\bxi}\bg=\pounds_\bw\bg$,
we see that $\pounds_\bw\bg\in\mathscr{G}(M)$, so $\hat{\mathscr{G}}(M)=\mathscr{G}(M)$ in this case. To see
how $\hat{\mathscr{G}}(M)$ and $\mathscr{G}(M)$ can differ, consider Minkowski space with the causal future and past of the origin removed, which is still a globally hyperbolic Einstein manifold $M$ and inherits all the Killing vector fields of Minkowski space. Let $r$ be the standard radial coordinate in the time-zero hyperplane and let $K$ be the set of all points in the time-zero hyperplane with $R\le r\le 2R$, for some $R>0$. Then $M\setminus J(K)$ is disconnected, and neither component is relatively compact. Take any Killing vector field $\bxi$ and let $\bw=f(r)\bxi$, where $f$ is constant outside $[R,2R]$ with $f(R)\neq f(2R)$. Then $\pounds_\bw\bg\in
\hat{\mathscr{G}}(M)\setminus\mathscr{G}(M)$. 

The remainder of this section is devoted to identifying when particular gauge choices can be made, and is largely standard with the exception of our remarks on the transverse traceless gauge. We begin with the primary choice of gauge that will be used throughout this paper, the de Donder gauge, in which the linearized Einstein equations  reduce to a hyperbolic form.

\subsection{de Donder gauge}
\label{sec:deDonder}

The de Donder gauge condition for a perturbation $\bgamma$ is $\nabla^{a}\overline{\gamma}_{ab}=0$, where the standard trace-reverse operation (see e.g., \cite[Ch.~7.5]{Wald}) is defined by $\overline{\gamma}_{ab} := \gamma_{ab}-\frac{1}{2}g_{ab} \gamma$
and satisfies $\overline{\gamma}=-\gamma$ and $\overline{\overline{\gamma}}_{ab}=\gamma_{ab}$. Note that for symmetric perturbations there is no ambiguity in writing $\nabla \cdot \bgamma$ for $\nabla^{a}\gamma_{ab}$, and we will often do so.

To any rank $(0,2)$ symmetric tensor, not necessarily a solution to \eqref{eqn:linearizedequations}, there is a gauge equivalent rank $(0,2)$ symmetric tensor that satisfies the de Donder gauge condition. (Actually symmetry is not required in the proof.) Before stating this theorem we consider the following lemma concerning pure gauge perturbations.
\begin{lem} For any $\bw\in C^\infty(T^1_0(M))$, on a cosmological vacuum background spacetime, we have
\label{lem:divtracereversegauge}
\begin{equation}
\nabla \cdot \overline{\pounds_{\bw}\bg} = (\Box+\Lambda)(\bw)^{\flat}.
\end{equation}
\end{lem}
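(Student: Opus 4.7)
The proof should be a direct computation. Starting from $(\pounds_\bw\bg)_{ab}=2\nabla_{(a}w_{b)}$, the trace is $2\nabla^c w_c$, so the trace-reversed tensor is
\begin{equation*}
\overline{\pounds_\bw\bg}_{ab}=\nabla_a w_b+\nabla_b w_a-g_{ab}\nabla^c w_c.
\end{equation*}
Taking the divergence yields three terms: $\nabla^a\nabla_a w_b=\Box w_b$, $\nabla^a\nabla_b w_a$, and $-\nabla_b\nabla^c w_c$. The plan is to combine the last two into a commutator and express this commutator in terms of curvature using the definition \eqref{eqn:Riemann}, then invoke the vacuum condition \eqref{eqn:Riccitenscalvalue} to complete the identification with $(\Box+\Lambda)\bw^\flat$.

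More concretely, I would write
\begin{equation*}
\nabla^a\nabla_b w_a-\nabla_b\nabla^a w_a = g^{ac}(\nabla_c\nabla_b-\nabla_b\nabla_c)w_a,
\end{equation*}
and apply \eqref{eqn:Riemann} in the form $[\nabla_c,\nabla_b]w_a=-[\nabla_b,\nabla_c]w_a=-R_{bca}{}^{d}w_d$. Contracting with $g^{ac}$ and using the symmetries of the Riemann tensor together with the definition \eqref{eqn:Ricci} of the Ricci tensor, one obtains the standard identity
\begin{equation*}
\nabla^a\nabla_b w_a-\nabla_b\nabla^a w_a = R_b{}^{d}w_d.
\end{equation*}
Since the background is a cosmological vacuum solution, \eqref{eqn:Riccitenscalvalue} gives $R_b{}^d w_d=\Lambda w_b$, so assembling the pieces yields $\nabla^a\overline{\pounds_\bw\bg}_{ab}=\Box w_b+\Lambda w_b$, which is the claim.

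The only real obstacle is bookkeeping: one must be careful to apply \eqref{eqn:Riemann} with the particular sign convention of the paper so that the curvature term comes out with the right sign, and one must verify that the reduction of the Ricci contraction produces exactly $R_b{}^d w_d$ (not, say, $-R_b{}^d w_d$) under the conventions \eqref{eqn:Riemann}--\eqref{eqn:Ricci}. Apart from this, there is no hidden difficulty; symmetry of $\bw^\flat$ is not required, so indeed the identity holds for all $\bw\in C^\infty(T^1_0(M))$ as stated.
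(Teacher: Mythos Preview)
Your proposal is correct and follows essentially the same approach as the paper: both compute the trace-reversed Lie derivative explicitly, take its divergence, and use the commutator identity $\nabla^{a}\nabla_{b}w_{a}=\nabla_{b}\nabla^{a}w_{a}+R_{b}{}^{d}w_{d}=\nabla_{b}\nabla^{a}w_{a}+\Lambda w_{b}$ on the cosmological vacuum background. The paper's proof is slightly terser, compressing the Riemann-to-Ricci reduction into a single step, but the logic is identical.
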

\begin{proof}
By the definition of trace-reversal,  $(\overline{\pounds_{\bw}\bg})_{ab}=\nabla_{a}w_{b}+\nabla_{b}w_{a} - g_{ab}\nabla_{c}w^{c}$. Taking the divergence of this gives
\begin{equation}
\nabla^{a}(\overline{\pounds_{\bw}\bg})_{ab} = \nabla^{a}\nabla_{a}w_{b}+\nabla^{a}\nabla_{b}w_{a}-\nabla_{b}\nabla^{c}w_{c}.
\end{equation}
Using \eqref{eqn:Riccitenscalvalue} we have $\nabla^{a}\nabla_{b}w_{a}=\nabla_{b}\nabla^{a}w_{a}+\Lambda w_{b}$ and hence the desired result. 
\end{proof}

\begin{thm}\label{thm:deDsplit}
The space $\mathscr{T}(M)$ may be decomposed as
\begin{equation}\label{eqn:dDsplit}
\mathscr{T}(M) = \mathscr{T}^{dD}(M) + \mathscr{G}(M),
\end{equation}
where $\mathscr{T}^{dD}(M)= \{ \bgamma \in \mathscr{T}(M) \hspace{0.2cm} | \hspace{0.2cm} \nabla^{a}\overline{\gamma}_{ab}=0\}$. 
The intersection $\mathscr{G}^{dD}(M) = \mathscr{T}^{dD}(M) \cap \mathscr{G}(M)$ is given by
\begin{equation}
\mathscr{G}^{dD}(M) =\{ \pounds_\bw\bg \hspace{0.2cm} | \hspace{0.2cm} 
\bw\in C^\infty_{SC}(T^1_0(M)), ~(\Box+\Lambda)\bw = 0\}.
\end{equation}
\end{thm}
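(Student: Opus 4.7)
The plan is to reduce the splitting to a solvability question for a wave equation on $1$-forms. Given $\bgamma\in\mathscr{T}(M)$, I seek $\bw\in C^\infty_{SC}(T^1_0(M))$ such that $\bgamma-\pounds_\bw\bg\in\mathscr{T}^{dD}(M)$. Since trace-reversal is linear, the required de Donder condition $\nabla\cdot\overline{\bgamma-\pounds_\bw\bg}=0$ becomes, by Lemma~\ref{lem:divtracereversegauge},
\begin{equation*}
(\Box+\Lambda)\bw^\flat \;=\; \nabla\cdot\overline{\bgamma}.
\end{equation*}
The right-hand side is spacelike-compact because differential operators preserve that class of supports; in fact, $\supp(\nabla\cdot\overline{\bgamma})\subset J(K)$ whenever $\supp\bgamma\subset J(K)$.

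The operator $\Box+\Lambda$ acts as a normally hyperbolic operator on the cotangent bundle of the globally hyperbolic spacetime $(M,\bg)$, so the existence theory of Section~\ref{sec:ExistenceUniqueness} (cf.\ \cite{Bar}) applies. That theory furnishes retarded and advanced Green's operators and, via a partition of unity splitting a spacelike-compact source into a past-compact and a future-compact piece, produces a solution of the inhomogeneous equation. I will check that this solution can be chosen spacelike-compact: because the source is contained in $J(K)$ for compact $K$, applying the retarded (resp.\ advanced) Green's operator to the future (resp.\ past) piece yields a distribution whose support is confined to $J^+(K)\cup J^-(K)=J(K)$ together with the thin slab used in the partition of unity, which is itself spacelike-compact. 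This produces the desired $\bw$, and then $\bgamma=(\bgamma-\pounds_\bw\bg)+\pounds_\bw\bg$ gives the decomposition \eqref{eqn:dDsplit}.

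For the intersection, both inclusions are immediate from Lemma~\ref{lem:divtracereversegauge}: if $\bgamma=\pounds_\bw\bg$ with $\bw\in C^\infty_{SC}(T^1_0(M))$ lies in $\mathscr{T}^{dD}(M)$, then $0=\nabla\cdot\overline{\pounds_\bw\bg}=(\Box+\Lambda)\bw^\flat$; conversely, any spacelike-compact $\bw$ with $(\Box+\Lambda)\bw^\flat=0$ produces a pure gauge $\pounds_\bw\bg$ satisfying the de Donder condition. The main obstacle is the existence step in paragraph two: one must verify both that $(\Box+\Lambda)$ on $1$-forms is indeed surjective on spacelike-compact sections and that the resulting $\bw$ remains spacelike-compact, so that $\pounds_\bw\bg$ genuinely lies in $\mathscr{G}(M)$ rather than in the larger class $\hat{\mathscr{G}}(M)$ of \eqref{eq:Ghatdef}. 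Everything else is a direct algebraic consequence of the gauge identity of Lemma~\ref{lem:divtracereversegauge}.
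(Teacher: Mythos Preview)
Your proposal is correct and follows essentially the same approach as the paper: both reduce the splitting to solving $(\Box+\Lambda)\bw^\flat=\nabla\cdot\overline{\bgamma}$ via Lemma~\ref{lem:divtracereversegauge}, and both derive the description of the intersection directly from that same lemma. The only minor difference is in how the inhomogeneous hyperbolic equation is solved: the paper invokes the Cauchy problem with vanishing initial data on a Cauchy surface (citing \cite[Thm~3.2.11]{Bar} and its extension to spacelike-compact sources), whereas you construct the solution by splitting the source into past- and future-compact pieces and applying the extended advanced/retarded Green's operators. Both routes are standard and yield a spacelike-compact $\bw$; the paper's Cauchy-problem argument is slightly more direct and avoids the support bookkeeping you flag as needing verification.
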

\begin{proof} Given $\bgamma \in \mathscr{T}(M)$ let $\bgamma^{\prime}=\bgamma+\pounds_{\bw}\bg$ for an arbitrary $\bw\in C^\infty_{SC}(T^1_0(M))$. Taking the divergence of the trace-reversal of this and using Lemma~\ref{lem:divtracereversegauge} gives
\begin{equation}
\nabla^{a}\overline{\gamma}^{\prime}_{ab}=\nabla^{a}\overline{\gamma}_{ab}+(\Box+\Lambda)w_{b}.
\end{equation}
Therefore $\bgamma^{\prime}\in \mathscr{T}^{dD}(M)$  if and only if $\bw$ obeys
\begin{equation}
\label{eqn:ddgaugehypeqn}
(\Box+\Lambda) w^{b} =-\nabla^{a}\overline{\gamma}_{a}^{\phantom{a}b}.
\end{equation} 
Applying standard results for hyperbolic equations, such as \cite[Thm~3.2.11]{Bar} (generalized to permit non-compactly supported source terms as in Corollary~5 in \cite[Ch.~3]{QFTCSTBar}), there is a unique solution to \eqref{eqn:ddgaugehypeqn} with initial data $\bw|_{\Sigma}$ and $\nabla_{\bn}\bw|_{\Sigma}$ that are smooth and compactly supported (we take them identically vanishing) on a smooth spacelike Cauchy surface $\Sigma$ with future-pointing unit normal vector $\bn$. The solution is smooth and spacelike-compact.\footnote{Here we use the fact that $\bgamma$ has
spacelike-compact support, which is not obviously the same as having support that has compact intersection with each Cauchy surface. After this 
paper was completed, and prompted by this concern, Sanders has shown that they are  equivalent~\cite{Sanders_compact}.} Hence $\bgamma'-\bgamma\in \mathscr{G}(M)$, which gives the splitting \eqref{eqn:dDsplit}. 
The final statement follows easily.\hspace{0.5cm}
\end{proof}
{\noindent\em Remark.}  From the proof, it is clear that the vector field $\bw$ may be chosen to have
support contained in the future (or past) of any given spacelike Cauchy surface.
The space $\mathscr{G}^{dD}(M)$ specifies the residual gauge freedom within the de Donder class.

Linearity of the equation of motion combined with the previous theorem gives:
\begin{cor}
\label{cor:deDsplitS}
The space $\mathscr{S}(M)$ decomposes as
\begin{equation}
\mathscr{S}(M)=\mathscr{S}^{dD}(M)+\mathscr{G}(M),
\end{equation}
where $\mathscr{S}^{dD}(M)=\{\bgamma \in \mathscr{T}^{dD}(M) \hspace{0.1cm} | \hspace{0.1cm} L_{ab}(\bgamma)=0\}$
is the space of de Donder gauge solutions. Moreover, 
$\mathscr{S}^{dD}(M)\cap \mathscr{G}(M) =\mathscr{G}^{dD}(M)$.
\end{cor}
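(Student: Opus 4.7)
The plan is to derive both statements directly from Theorem~\ref{thm:deDsplit} combined with the gauge identity \eqref{eqn:SWliederiv}, namely $L_{ab}(\pounds_{\bw}\bg)=\pounds_{\bw}(G_{ab}+\Lambda g_{ab})$. Since the background is assumed to be a cosmological vacuum solution \eqref{eqn:Einsteineqn}, the right-hand side vanishes identically, so every pure-gauge perturbation lies in the kernel of $L_{ab}$. This is the single ingredient that turns the splitting of $\mathscr{T}(M)$ into a splitting of $\mathscr{S}(M)$.

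For the decomposition, I would take $\bgamma\in\mathscr{S}(M)\subset\mathscr{T}(M)$ and apply Theorem~\ref{thm:deDsplit} to write $\bgamma=\bgamma'+\pounds_{\bw}\bg$ with $\bgamma'\in\mathscr{T}^{dD}(M)$ and $\bw\in C^\infty_{SC}(T^1_0(M))$. By linearity of $L_{ab}$,
\[
L_{ab}(\bgamma')=L_{ab}(\bgamma)-L_{ab}(\pounds_{\bw}\bg)=0-0=0,
\]
so $\bgamma'\in\mathscr{S}^{dD}(M)$, giving $\mathscr{S}(M)\subset \mathscr{S}^{dD}(M)+\mathscr{G}(M)$. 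The reverse inclusion is automatic: $\mathscr{S}^{dD}(M)\subset\mathscr{S}(M)$ by definition, and $\mathscr{G}(M)\subset\mathscr{S}(M)$ by the same vacuum argument applied to \eqref{eqn:SWliederiv}.

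For the intersection, I would argue by a chain of inclusions using the already-established fact that $\mathscr{G}(M)\subset\mathscr{S}(M)$. On the one hand, any $\bgamma\in\mathscr{G}^{dD}(M)=\mathscr{T}^{dD}(M)\cap\mathscr{G}(M)$ lies in $\mathscr{G}(M)$ and hence in $\mathscr{S}(M)$, so $\bgamma\in\mathscr{T}^{dD}(M)\cap\mathscr{S}(M)\cap\mathscr{G}(M)=\mathscr{S}^{dD}(M)\cap\mathscr{G}(M)$. Conversely, $\mathscr{S}^{dD}(M)\cap\mathscr{G}(M)\subset\mathscr{T}^{dD}(M)\cap\mathscr{G}(M)=\mathscr{G}^{dD}(M)$.

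There is no real obstacle to overcome: the corollary is purely a bookkeeping consequence of Theorem~\ref{thm:deDsplit}, the linearity of $L_{ab}$, and the vanishing of $L_{ab}$ on pure-gauge perturbations over a cosmological vacuum background. The only point worth flagging explicitly in the writeup is the invocation of the vacuum hypothesis via \eqref{eqn:SWliederiv}, since without it neither the decomposition nor the characterization of the residual gauge freedom would be consistent.
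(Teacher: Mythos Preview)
Your proposal is correct and is precisely the argument the paper has in mind: the text introduces the corollary with the single sentence ``Linearity of the equation of motion combined with the previous theorem gives,'' and your write-up simply unpacks this, using Theorem~\ref{thm:deDsplit} together with $L_{ab}(\pounds_{\bw}\bg)=0$ on a cosmological vacuum background. There is nothing to add.
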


Next, we define a partial differential operator
\begin{equation}
\label{eqn:Pdef}
P_{ab}^{\phantom{ab}cd} := 
\Box \delta^{c}_{\phantom{c}a}\delta^{d}_{\phantom{d}b} -2R^{c \phantom{ab}d}_{\phantom{c}ab},
\end{equation}
which is of the type considered by Lichnerowicz in \cite{Lichnerowicz} (see equation~{(10.4)} of that reference). We now note some important identities concerning this operator.
\begin{lem}
\label{lem:traceoperatorcommute}
On a cosmological vacuum background spacetime, $P$ commutes with trace reversal. 
In particular, $P(\overline{\bgamma})=0$ if and only if $P(\bgamma)=0$.
\end{lem}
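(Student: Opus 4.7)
The plan is to prove the identity $P(\overline{\bgamma}) = \overline{P(\bgamma)}$ directly by writing out both sides using the definition $\overline{\gamma}_{ab} = \gamma_{ab} - \tfrac{1}{2}g_{ab}\gamma$, and then reducing the resulting discrepancies to two algebraic identities involving the Riemann tensor. The second statement will then be an immediate corollary, since trace reversal is an involution: $\overline{\overline{\bgamma}}=\bgamma$ gives $P(\bgamma)=P(\overline{\overline{\bgamma}})=\overline{P(\overline{\bgamma})}$, so $P(\bgamma)=0$ iff $P(\overline{\bgamma})=0$ because trace reversal is a linear bijection.

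First I would expand $P(\overline{\bgamma})_{ab} = \Box\overline{\gamma}_{ab} - 2 R^{c\phantom{ab}d}_{\phantom{c}ab}\overline{\gamma}_{cd}$. Since $\Box$ is a scalar operator, $\Box \overline{\gamma}_{ab} = \Box\gamma_{ab} - \tfrac12 g_{ab}\Box\gamma$, and the curvature term contributes an extra $R^{c\phantom{ab}d}_{\phantom{c}ab}g_{cd}\gamma$. In parallel I would compute $\overline{P(\bgamma)}_{ab} = P(\bgamma)_{ab} - \tfrac12 g_{ab}\,g^{cd}P(\bgamma)_{cd}$, which requires the trace $g^{ab}R^{c\phantom{ab}d}_{\phantom{c}ab}\gamma_{cd}$. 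Comparing the two expressions, the equality $P(\overline{\bgamma}) = \overline{P(\bgamma)}$ reduces to verifying the two scalar identities
\begin{equation*}
R^{c\phantom{ab}d}_{\phantom{c}ab}\,g_{cd} = -R_{ab}, \qquad g^{ab}R^{c\phantom{ab}d}_{\phantom{c}ab} = -R^{cd},
\end{equation*}
which, under the vacuum condition \eqref{eqn:Riccitenscalvalue}, become $-\Lambda g_{ab}$ and $-\Lambda g^{cd}$ respectively. Plugging these into the discrepancy shows both sides agree.

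The remaining (and only nontrivial) step is the Riemann-tensor contractions above. I would handle them using the first Bianchi identity $R_{a[bcd]}=0$ together with the pair antisymmetries. For instance, writing $R^{c\phantom{ab}d}_{\phantom{c}ab}g_{cd} = g^{df}R_{dabf}$ and applying the Bianchi identity yields $R_{dabf}=-R_{dbfa}-R_{dfab}$; the second term vanishes after contraction with $g^{df}$ by antisymmetry in the first pair, and the first term reduces to $R_{ab}$ by the definition of Ricci (using pair exchange if desired). The identity $g^{ab}R^{c\phantom{ab}d}_{\phantom{c}ab}=-R^{cd}$ is analogous.

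The main obstacle is purely notational bookkeeping: one must track the index positions of Riemann carefully given the somewhat unusual placement in $R^{c\phantom{ab}d}_{\phantom{c}ab}$, and use Bianchi plus the antisymmetries consistently rather than trying to write everything directly in terms of Ricci. Once the two curvature contractions are in hand, the lemma follows by direct substitution and the cosmological vacuum hypothesis $R_{ab}=\Lambda g_{ab}$.
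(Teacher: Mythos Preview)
Your proposal is correct and follows essentially the same route as the paper's proof: both expand $P(\overline{\bgamma})$ and $\overline{P(\bgamma)}$ and reduce the comparison to the two Riemann contractions $g_{cd}R^{c\phantom{ab}d}_{\phantom{c}ab}=-R_{ab}$ and $g^{ab}R^{c\phantom{ab}d}_{\phantom{c}ab}=-R^{cd}$, which then become $-\Lambda g_{ab}$ and $-\Lambda g^{cd}$ on the cosmological vacuum background. A small remark: these two contractions follow already from the antisymmetry $R_{abcd}=-R_{bacd}$ together with the definition of the Ricci tensor, so invoking the first Bianchi identity is unnecessary, though harmless.
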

\begin{proof} We compute
\begin{align}
\label{eqn:pcabdoverlineexpand}
\overline{(\Box \delta^{c}_{\hspace{0.1cm}a}\delta^{d}_{\hspace{0.1cm}b} -2R^{c \hspace{0.2cm}d}_{\hspace{0.1cm}ab})f_{cd}}&=(\Box \delta^{c}_{\hspace{0.1cm}a}\delta^{d}_{\hspace{0.1cm}b} -2R^{c \hspace{0.2cm}d}_{\hspace{0.1cm}ab})f_{cd}-\frac{1}{2}g_{ab}(\Box f +2\Lambda f) \notag\\
&=\Box(f_{ab}-\frac{1}{2}g_{ab}f)-2R^{c \hspace{0.2cm}d}_{\hspace{0.1cm}ab}(f_{cd}-\frac{1}{2}g_{cd}f)
\end{align}
by using the identities 
$g_{ab}g^{ef}R^{c \hspace{0.2cm}d}_{\hspace{0.1cm}ef}f_{cd}=-g_{ab}R^{cd}f_{cd}=-\Lambda g_{ab}f$ and $g_{cd}R^{c \hspace{0.2cm}d}_{\hspace{0.1cm}ab}f = -R_{ab}f=-\Lambda g_{ab} f = g_{ab}g^{ef}R^{c \hspace{0.2cm}d}_{\hspace{0.1cm}ef}f_{cd}$, which hold in cosmological vacuum spacetimes.\,\,
\end{proof}

\begin{thm}
\label{thm:Lident}
For any $\bgamma\in C^\infty(S^0_2(M))$, on a cosmological vacuum background spacetime, 
\begin{equation}\label{eq:Lident}
2L_{ab}(\bgamma) =- P_{ab}^{\phantom{ab}cd} \overline{\gamma}_{cd} +(\overline{\pounds_{(\nabla\cdot \overline{\bgamma})^{\sharp}}\bg})_{ab}
\end{equation}
or equivalently
\begin{equation}
2\overline{L_{ab}(\bgamma)} =-P_{ab}^{\phantom{ab}cd}\gamma_{ab}+(\pounds_{(\nabla\cdot \overline{\bgamma})^{\sharp}}\bg)_{ab}.
\end{equation}
\end{thm}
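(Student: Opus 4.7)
The plan is to reduce to the second form of the identity, which is cleaner to verify, and then to proceed by direct computation with careful attention to covariant-derivative commutators on a cosmological vacuum background.

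First, I would observe that the two displayed identities are equivalent. Applying trace reversal to both sides of the second form, using the involution $\overline{\overline{\gamma}}_{ab}=\gamma_{ab}$, Lemma~\ref{lem:traceoperatorcommute} (which gives $\overline{P(\bgamma)}=P(\overline{\bgamma})$), and the purely algebraic fact that trace reversal and the Lie-derivative bracket can be swapped, one recovers \eqref{eq:Lident}. So it suffices to prove the second equation.

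Next, I would compute $\overline{L_{ab}(\bgamma)}$ explicitly. Taking the trace of \eqref{eqn:symmL} and using $g^{ab}\nabla^{c}\nabla_{(a}\gamma_{b)c}=\nabla^{c}\nabla^{a}\gamma_{ac}$, the terms collapse to
\[
g^{ab}L_{ab}(\bgamma) = -(\nabla^{c}\nabla^{d}\gamma_{cd} - \Box\gamma - \Lambda\gamma).
\]
Substituting into $\overline{L}_{ab}=L_{ab}-\tfrac{1}{2}g_{ab}(g^{cd}L_{cd})$, the $g_{ab}(\,\cdot\,)$-term in \eqref{eqn:symmL} cancels, leaving
\[
2\overline{L_{ab}(\bgamma)} = -2\Lambda\gamma_{ab}-\Box\gamma_{ab}-\nabla_{a}\nabla_{b}\gamma+2\nabla^{c}\nabla_{(a}\gamma_{b)c}.
\]

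On the other side, I would expand $-P_{ab}^{\phantom{ab}cd}\gamma_{cd}=-\Box\gamma_{ab}+2R^{c\phantom{ab}d}_{\phantom{c}ab}\gamma_{cd}$ from \eqref{eqn:Pdef}, and set $\bw=(\nabla\cdot\overline{\bgamma})^{\sharp}$, so that $w_{b}=\nabla^{c}\gamma_{cb}-\tfrac{1}{2}\nabla_{b}\gamma$. Then
\[
(\pounds_{\bw}\bg)_{ab} = \nabla_{a}\nabla^{c}\gamma_{cb}+\nabla_{b}\nabla^{c}\gamma_{ca}-\nabla_{a}\nabla_{b}\gamma.
\]
Matching the two sides reduces the entire identity to the single commutator statement
\[
[\nabla_{a},\nabla^{c}]\gamma_{cb} + [\nabla_{b},\nabla^{c}]\gamma_{ca} \;=\; -2\Lambda\gamma_{ab}-2R^{c\phantom{ab}d}_{\phantom{c}ab}\gamma_{cd}.
\]

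The main obstacle is verifying this commutator identity cleanly. Using the convention \eqref{eqn:Riemann} together with the standard extension to $(0,2)$-tensors gives
\[
[\nabla_{a},\nabla_{e}]\gamma_{cb} = R_{aec}^{\phantom{aec}d}\gamma_{db}+R_{aeb}^{\phantom{aeb}d}\gamma_{cd},
\]
and after contracting with $g^{ce}$ the first term produces a Ricci contraction, which in the vacuum case \eqref{eqn:Riccitenscalvalue} becomes $-\Lambda\gamma_{ab}$; the second yields precisely $-R^{c\phantom{ab}d}_{\phantom{c}ab}\gamma_{cd}$ after using antisymmetry of Riemann in its first pair. Symmetrising over $a,b$ gives the required identity, where I would use the pair-symmetry $R_{abcd}=R_{cdab}$ together with symmetry of $\gamma_{cd}$ to see that $R^{c\phantom{ab}d}_{\phantom{c}ab}\gamma_{cd}$ is indeed symmetric in $a,b$, so that the two curvature terms combine with a factor of $2$. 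The vacuum condition is essential both here and in reducing the Ricci contraction; without $R_{ab}=\Lambda g_{ab}$ the $\Lambda\gamma_{ab}$ and pure-gauge structure would not close.
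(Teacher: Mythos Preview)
Your proposal is correct and essentially mirrors the paper's proof: both are direct computations relying on the same commutator identity $\nabla_{a}\nabla^{c}\gamma_{bc}=\nabla^{c}\nabla_{a}\gamma_{bc}-\Lambda\gamma_{ab}-R^{c\phantom{ab}d}_{\phantom{c}ab}\gamma_{cd}$ on a cosmological vacuum background. The only organisational difference is that the paper verifies the first identity \eqref{eq:Lident} directly and then deduces the second via Lemma~\ref{lem:traceoperatorcommute}, whereas you compute $\overline{L_{ab}(\bgamma)}$ explicitly and verify the second identity first; this is a harmless reordering of the same argument.
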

\begin{proof} The Lie derivative term is
\begin{equation}
(\overline{\pounds_{(\nabla\cdot \overline{\bgamma})^{\sharp}}\bg})_{ab} = \nabla_{a}\nabla^{c}\gamma_{cb}+\nabla_{b}\nabla^{c}\gamma_{ca}-\nabla_{a}\nabla_{b}\gamma-g_{ab}(\nabla^{d}\nabla^{c}\gamma_{cd}-\frac{1}{2}\Box \gamma).
\end{equation}
One may show, using \eqref{eqn:Riccitenscalvalue}, that $\nabla_{a}\nabla^{c}\gamma_{bc}=\nabla^{c}\nabla_{a}\gamma_{bc}-\Lambda \gamma_{ab}-R^{c \phantom{ab}d}_{\phantom{c}ab}\gamma_{cd}$ and so
\begin{align}
\label{eqn:liederivnabgam}
(\overline{\pounds_{(\nabla\cdot \overline{\bgamma})^{\sharp}}\bg})_{ab} &= \nabla^{c}\nabla_{a}\gamma_{bc}+\nabla^{c}\nabla_{b}\gamma_{ac}-2\Lambda \gamma_{ab}-2R^{c \phantom{ab}d}_{\phantom{c}ab}\gamma_{cd}
-\nabla_{a}\nabla_{b}\gamma \notag\\ 
&\qquad\qquad
-g_{ab}(\nabla^{d}\nabla^{c}\gamma_{cd}-\frac{1}{2}\Box \gamma).
\end{align}
The $P(\overline{\bgamma})$ term is 
\begin{equation}
\label{eqn:Povergammaexplic}
- P_{ab}^{\phantom{ab}cd} \overline{\gamma}_{cd} = -\Box \gamma_{ab}+\frac{1}{2}g_{ab}\Box \gamma+2R^{c \phantom{ab}d}_{\phantom{c}ab} \gamma_{cd}+\Lambda g_{ab} \gamma.
\end{equation}
Combine \eqref{eqn:liederivnabgam} and \eqref{eqn:Povergammaexplic} and then compare with \eqref{eqn:symmL}. The second identity follows from the first by using~Lemma~\ref{lem:traceoperatorcommute}. \end{proof}

Therefore, for linearized gravity solutions on cosmological vacuum spacetimes that satisfy the de Donder condition $\nabla \cdot \overline{\bgamma}=0$, the equation of motion \eqref{eqn:nonhypeqn} reduces to
\begin{equation}
\label{eqn:overlinelineqngauge}
P_{ab}^{\phantom{ab}cd} \overline{\gamma}_{cd} = \Box \overline{\gamma}_{ab} - 2 R^{c\phantom{ab}d}_{\phantom{c} ab} \overline{\gamma}_{cd} =0,
\end{equation}
or equivalently, by Lemma~\ref{lem:traceoperatorcommute},
\begin{equation}
\label{eqn:ddeqnmotion}
\Box \gamma_{ab} -2R^{c \hspace{0.2cm}d}_{\hspace{0.1cm}ab}\gamma_{cd}=0.
\end{equation}

 For future reference, we note the following identity.
\begin{lem}
\label{lem:nablaP}
For any $\bgamma\in C^\infty(T^0_2(M))$, on a cosmological vacuum background spacetime,
\begin{equation}\label{eq:divPident}
\nabla^a (P_{ab}^{\phantom{ab}cd}\gamma_{cd}) = (\Box + \Lambda)\nabla^a \gamma_{ab}.
\end{equation}
\end{lem}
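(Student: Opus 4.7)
The plan is a direct computation: expand the left-hand side using the definition~\eqref{eqn:Pdef} of $P$, commute the covariant derivatives produced by $\Box$ past the outer $\nabla^a$, and simplify the resulting curvature terms by means of the contracted Bianchi identity together with the vacuum Ricci condition~\eqref{eqn:Riccitenscalvalue}.

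Step~1 (Leibniz and Bianchi). Applying $\nabla^a$ to $P_{ab}^{\phantom{ab}cd}\gamma_{cd} = \Box\gamma_{ab} - 2R^{c\phantom{ab}d}_{\phantom{c}ab}\gamma_{cd}$ produces three terms by the product rule. The one in which $\nabla^a$ strikes the Riemann tensor vanishes: the first-pair antisymmetry of Riemann converts this divergence on the second Riemann index to (minus) one on the first, and the contracted Bianchi identity $\nabla^a R_{abcd} = \nabla_c R_{bd} - \nabla_d R_{bc}$, combined with~\eqref{eqn:Riccitenscalvalue}, then yields zero. Thus
\begin{equation*}
\nabla^a (P_{ab}^{\phantom{ab}cd}\gamma_{cd}) = \nabla^a\Box\gamma_{ab} - 2 R^{c\phantom{ab}d}_{\phantom{c}ab}\nabla^a\gamma_{cd}.
\end{equation*}

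Step~2 (commutators). Writing $\Box = \nabla^c\nabla_c$ and commuting $\nabla^a$ past $\nabla^c$ and then past $\nabla_c$ gives
\begin{equation*}
\nabla^a\Box\gamma_{ab} = \Box\nabla^a\gamma_{ab} + \nabla^c[\nabla^a,\nabla_c]\gamma_{ab} + [\nabla^a,\nabla^c]\nabla_c\gamma_{ab}.
\end{equation*}
Each commutator is evaluated via the Ricci identity for $(0,2)$- or $(0,3)$-tensors (with signs determined by~\eqref{eqn:Riemann}) and then the raised $\nabla^a$ partially traces the resulting Riemann expressions. These contractions split into two kinds: (i) trace pieces, which reduce through Riemann's pair-exchange symmetry and first/second-pair antisymmetries to the Ricci tensor, and hence, by~\eqref{eqn:Riccitenscalvalue}, to multiples of $\Lambda g$ contributing $\pm\Lambda\nabla^a\gamma_{ab}$; and (ii) untraced pieces which, after one application of pair-exchange, assume the canonical form $R^{c\phantom{ab}d}_{\phantom{c}ab}\nabla^a\gamma_{cd}$ used in $P$. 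The trace pieces from the two commutators combine to a net $+\Lambda\nabla^a\gamma_{ab}$, and the untraced pieces combine to $+2R^{c\phantom{ab}d}_{\phantom{c}ab}\nabla^a\gamma_{cd}$; substituted into Step~1 these yield $(\Box+\Lambda)\nabla^a\gamma_{ab}$, as the two Riemann contributions cancel exactly.

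The main obstacle is the index bookkeeping in Step~2: each partially-contracted Riemann expression must be identified as either a trace reducing to Ricci (hence to $\Lambda g$ in our setting) or as the canonical $R^{c\phantom{ab}d}_{\phantom{c}ab}$ placement appearing in $P$, requiring systematic use of Riemann's pair-exchange and antisymmetries with careful sign-tracking through several rearrangements.
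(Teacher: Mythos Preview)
Your proposal is correct and follows essentially the same route as the paper: apply the Leibniz rule, eliminate the $\nabla^a R^{c\phantom{ab}d}_{\phantom{c}ab}$ term via the contracted Bianchi identity together with~\eqref{eqn:Riccitenscalvalue}, and then commute $\nabla^a$ through $\Box$, reducing the resulting curvature terms to a single $\Lambda\nabla^a\gamma_{ab}$ plus the Riemann contribution that cancels the remaining $-2R^{c\phantom{ab}d}_{\phantom{c}ab}\nabla^a\gamma_{cd}$. The paper records the intermediate identity $\nabla^{a}\Box \gamma_{ab}= \Box \nabla^{a}\gamma_{ab}+\Lambda \nabla^{a}\gamma_{ab}+2R^{ad \phantom{b}c}_{\phantom{ad}b}\nabla_{d}\gamma_{ac}$ explicitly, which is exactly what your Step~2 asserts in words.
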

\begin{proof} Expanding out the left-hand side of \eqref{eq:divPident} gives
\begin{equation}
\nabla^a (P_{ab}^{\phantom{ab}cd}\gamma_{cd}) = \nabla^{a}\Box \gamma_{ab}-2(\nabla^{a}R^{c \phantom{ab}d}_{\phantom{c}ab}) \gamma_{cd} -2R^{c \phantom{ab}d}_{\phantom{c}ab} \nabla^{a}\gamma_{cd}.
\end{equation}
Substituting \eqref{eqn:Riccitenscalvalue} into the contracted Bianchi identity $
\nabla_{a}R_{bcd}^{\hspace{0.4cm}a}+\nabla_{b}R_{cd}-\nabla_{c}R_{bd}=0$
gives  $\nabla_{a}R_{bcd}^{\phantom{bcd}a}=0$ and hence
\begin{equation}
\nabla^a (P_{ab}^{\phantom{ab}cd}\gamma_{cd}) = \nabla^{a}\Box \gamma_{ab} -2R^{c \phantom{ab}d}_{\phantom{c}ab} \nabla^{a}\gamma_{cd}.
\end{equation}
By using the Riemann tensor identity~\eqref{eqn:Riemann} and the restriction to cosmological vacuum spacetimes~\eqref{eqn:Riccitenscalvalue} one can show that
\begin{equation}
\nabla^{a}\Box \gamma_{ab}= \Box \nabla^{a}\gamma_{ab}+\Lambda \nabla^{a}\gamma_{ab}+2R^{ad \phantom{b}c}_{\phantom{a}\phantom{d}b}\nabla_{d}\gamma_{ac}
\end{equation}
and hence the result. 
\end{proof}

It follows that if $\bgamma$ is {\em any}  solution to \eqref{eqn:ddeqnmotion}, its divergence, and the divergence of its trace reverse, obey 
\begin{equation}\label{eqn:diveqnmotion}
(\Box+\Lambda)(\nabla^{a} \gamma_{ab}) =
(\Box+\Lambda)(\nabla^{a} \overline{\gamma}_{ab})=0.
\end{equation}
In addition, we see directly from \eqref{eqn:ddeqnmotion} that its trace obeys
\begin{equation}
\label{eqn:traceeqnmotion}
(\Box+2\Lambda)\gamma=0.
\end{equation}
In deriving equations \eqref{eqn:overlinelineqngauge},~\eqref{eqn:ddeqnmotion}, \eqref{eqn:diveqnmotion} and  \eqref{eqn:traceeqnmotion}, the result \eqref{eqn:Riccitenscalvalue} was used.

\subsection{Transverse-traceless gauge}

Many discussions of linearized gravity employ the transverse-traceless (TT) gauge, $\nabla^{a}\gamma_{ab} =0$ and $\gamma =0$, which is de Donder gauge with the additional constraint of vanishing trace. Inspecting the proof
of Theorem~\ref{thm:deDsplit}, we see that $\bgamma\in\mathscr{T}(M)$ can be put into the 
TT gauge if and only if the system
\begin{equation}
\label{eqn:remDDw_and_traceconstraintw}
(\Box+\Lambda)w^{a}=0,\qquad 
\nabla_{a}w^{a}=-\frac{1}{2}\gamma
\end{equation}
can be solved for $\bw\in C^\infty_{SC}(T^1_0(M))$.  It turns out that for vacuum spacetimes with a non-vanishing cosmological constant (e.g. de Sitter) this may be achieved at least when $\bgamma$ 
solves the linearized Einstein equation. 
\begin{thm}
\label{thm:TTsplit}
For cosmological vacuum spacetimes with $\Lambda \neq 0$ one may perform the following decomposition of the space of spacelike-compact solutions:
\begin{equation}
\label{eqn:TTsplit}
\mathscr{S}(M) = \mathscr{S}^{TT}(M) + \mathscr{G}(M).
\end{equation}
\end{thm}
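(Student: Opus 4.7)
The plan is to reduce the problem to de Donder gauge using Corollary~\ref{cor:deDsplitS}, and then exhibit an explicit algebraic choice of residual gauge vector field that simultaneously kills the trace while preserving the de Donder condition. The point where $\Lambda\neq 0$ enters will be in inverting $\Lambda$ to define this vector field.

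More precisely, given $\bgamma\in\mathscr{S}(M)$, I would first invoke Corollary~\ref{cor:deDsplitS} to write $\bgamma=\tilde{\bgamma}+\pounds_{\bw_1}\bg$ with $\tilde{\bgamma}\in\mathscr{S}^{dD}(M)$ and $\bw_1\in C^\infty_{SC}(T^1_0(M))$. The trace $\tilde\gamma$ of $\tilde{\bgamma}$ then satisfies $(\Box+2\Lambda)\tilde\gamma=0$ by \eqref{eqn:traceeqnmotion}. This identity is the crucial input: it suggests the ansatz $w_2^a=(4\Lambda)^{-1}\nabla^a\tilde\gamma$, which is well-defined precisely because $\Lambda\neq 0$, and lies in $C^\infty_{SC}(T^1_0(M))$ because $\tilde\gamma$ does.

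Next I would verify that this $\bw_2$ solves the system (2.23) obtained from inspecting the proof of Theorem~\ref{thm:deDsplit}, namely $(\Box+\Lambda)w_2^a=0$ and $\nabla_a w_2^a=-\tfrac12\tilde\gamma$. The second condition is immediate: $\nabla_a w_2^a=(4\Lambda)^{-1}\Box\tilde\gamma=-\tfrac12\tilde\gamma$. The first follows from the scalar commutator identity $\Box\nabla^a\phi=\nabla^a\Box\phi+R^{ab}\nabla_b\phi$, which on a cosmological vacuum background and via \eqref{eqn:Riccitenscalvalue} yields $(\Box+\Lambda)\nabla^a\phi=\nabla^a(\Box+2\Lambda)\phi$; applied to $\phi=(4\Lambda)^{-1}\tilde\gamma$ this vanishes by \eqref{eqn:traceeqnmotion}. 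Setting $\bgamma^{TT}:=\tilde{\bgamma}+\pounds_{\bw_2}\bg$, the trace becomes $\tilde\gamma+2\nabla_aw_2^a=0$, and the de Donder condition is preserved because $(\Box+\Lambda)\bw_2=0$ places $\pounds_{\bw_2}\bg\in\mathscr{G}^{dD}(M)$. With $\gamma^{TT}=0$, trace reversal is the identity, so $\nabla^a\gamma^{TT}_{ab}=\nabla^a\overline{\gamma^{TT}}_{ab}=0$, giving $\bgamma^{TT}\in\mathscr{S}^{TT}(M)$.

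Finally, writing $\bgamma=\bgamma^{TT}+\pounds_{\bw_1-\bw_2}\bg$ with $\bw_1-\bw_2\in C^\infty_{SC}(T^1_0(M))$ delivers the decomposition~\eqref{eqn:TTsplit}. There is no serious obstacle in this argument: the only mildly nontrivial point is recognizing that the algebraic ansatz $w_2^a\propto\nabla^a\tilde\gamma$ simultaneously meets both conditions thanks to \eqref{eqn:traceeqnmotion}. What one should flag is that the construction genuinely breaks down when $\Lambda=0$, since then neither the algebraic inversion used to define $\bw_2$ nor the required identity $\Box\tilde\gamma=-2\Lambda\tilde\gamma$ furnishes the needed divergence for $\bw$; in that case passage to TT gauge becomes a genuine PDE problem and is expected to depend on topological data of the Cauchy surface, as mentioned in the introduction.
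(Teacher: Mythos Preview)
Your proof is correct and arrives at precisely the same gauge vector field as the paper, namely $\bw=\tfrac{1}{4\Lambda}\,\mathrm{d}\gamma$ (equivalently $w^a=(4\Lambda)^{-1}\nabla^a\tilde\gamma$), verified against the two conditions \eqref{eqn:remDDw_and_traceconstraintw}. The difference is one of packaging rather than substance: you check the conditions directly via the commutator identity $(\Box+\Lambda)\nabla^a\phi=\nabla^a(\Box+2\Lambda)\phi$ on cosmological vacuum backgrounds, whereas the paper first proves a general characterization (Theorem~\ref{thm:TTgauge}) translating the system \eqref{eqn:remDDw_and_traceconstraintw} into constraints on Cauchy data for $\bw$ on a spacelike Cauchy surface, and only then exhibits data satisfying those constraints, noting that the induced solution is $\bw=\tfrac{1}{4\Lambda}\,\mathrm{d}\gamma$. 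Your route is shorter and more elementary for the $\Lambda\neq 0$ case; the paper's Cauchy-data framework is what pays off when $\Lambda=0$, since it makes the obstruction visible as the condition $\delta\boldsymbol{w_{(d)}}=\tfrac12\gamma_{(d)}$, i.e.\ a cohomological problem on $\Sigma$, exactly as you anticipate in your final remark.
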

As this is a departure from our main theme, the details are relegated to Appendix~\ref{appx:TT}. 

By contrast, for vacuum spacetimes with a vanishing cosmological constant we find that there is a cohomological obstruction to the solution of \eqref{eqn:remDDw_and_traceconstraintw}: it is possible
if and only if the trace of the solution $\bgamma$ obeys
\begin{equation}
\int_\Sigma \nabla_{\bn}\gamma\,d\Sigma =0
\end{equation}
on some (and hence all) Cauchy surfaces. To some extent this problem arises because we only consider spacelike-compact perturbations. If this restriction is dropped {\em and} $\Sigma$ is non-compact, then the TT gauge can also be achieved for $\Lambda=0$, as is the case in Minkowski space.

\subsection{Synchronous gauge}
\label{sec:synchronous}

The synchronous gauge is defined relative to a spacelike Cauchy surface $\Sigma$ by the condition $n^{a}\gamma_{ab}=0$, where $\bn$ is the future-pointing unit normal vector to $\Sigma$. Here, we describe how one can make a gauge transformation to put any solution into the synchronous gauge near a Cauchy surface. The result is similar to \cite[Lem.~1.1]{AshtekarMagnon}, which shows that this
can be done {\em on} a Cauchy surface; our result is therefore slightly more general and our proof treats the solution of various equations arising in  detail.

Before we state the theorem we recall some definitions. Given a submanifold $S \subset M$, the normal exponential map $\exp^{\perp}$ is the restriction of the exponential map to the normal bundle of $S$; therefore all the geodesics arising from this map will meet $S$ orthogonally. A normal neighbourhood of $S$ is a neighbourhood of $S$ that is diffeomorphic under $\exp^{\perp}$ to a connected neighbourhood of the zero section in the normal bundle of $S$. In particular, on any normal neighbourhood $\mathscr{O}$ of $\Sigma$ there is a unique future-pointing, geodesic, hypersurface orthogonal, unit vector field that we call the normal field of $\Sigma$ in $\mathscr{O}$.
\begin{thm}
\label{thm:synchronous}
Let $\Sigma$ be a smooth spacelike Cauchy surface with future-pointing unit normal vector $\bn$. Let $\mathscr{O}$ be any open normal neighbourhood of $\Sigma$, whose closure is contained in another normal neighbourhood of $\Sigma$. Then 
\begin{equation}
\mathscr{T}(M) = \mathscr{T}^{synch}_{\Sigma, \mathscr{O}}(M) + \mathscr{G}(M),
\end{equation}
where $\mathscr{T}^{synch}_{\Sigma, \mathscr{O}}(M) = \{ \bgamma \in \mathscr{T}(M) \hspace{0.2cm} | \hspace{0.2cm} \tilde{n}^{a}\gamma_{ab}=0 \hspace{0.2cm} \text{on} \hspace{0.2cm} \mathscr{O} \}$ and $\tilde{\bn}$ is the normal field of $\Sigma$ in $\mathscr{O}$.
In particular, $n^a \gamma_{ab}=0$ holds on $\Sigma$. 
\end{thm}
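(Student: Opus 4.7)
The plan is to exhibit, for each $\bgamma\in\mathscr{T}(M)$, a spacelike-compact vector field $\bw$ such that $\bgamma-\pounds_\bw\bg$ satisfies $\tilde n^a\gamma_{ab}=0$ on $\mathscr{O}$; by linearity this yields the claimed decomposition. I would work in Gaussian normal coordinates $(t,x^i)$ adapted to $\Sigma$ on the ambient normal neighbourhood $\mathscr{O}'$ containing $\bar{\mathscr{O}}$, where $\tilde{\bn}=\partial_t$ and $\bg=-dt^2+h_{ij}(t,x)\,dx^i dx^j$. In these coordinates the only nonvanishing Christoffel symbols involve the extrinsic curvature $K_{ij}=\tfrac12\partial_t h_{ij}$ (in particular $\Gamma^\mu_{tt}=0$), and the synchronous condition $\tilde n^a(\gamma_{ab}+2\nabla_{(a}w_{b)})=0$ becomes the triangular first-order system
\[
\partial_t w_t = -\tfrac12\gamma_{tt}, \qquad \partial_t w_i - 2K^j{}_{i}w_j = -\gamma_{ti}-\partial_i w_t.
\]

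I would then impose vanishing Cauchy data for $\bw$ on $\Sigma$ and solve: first obtain $w_t$ by direct $t$-integration of the scalar equation, then substitute into the second equation and solve the resulting linear ODE in $t$ pointwise in $x$ to get each $w_i$. Standard smooth-dependence results for linear ODEs give a unique smooth solution $\bw$ on $\mathscr{O}'$; the coefficients and source are smooth because $\mathscr{O}'$ is a normal neighbourhood.

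To globalize, I would multiply by a smooth cut-off $\chi\in C^\infty(M)$ with $\chi\equiv 1$ on an open neighbourhood of $\bar{\mathscr{O}}$ and $\supp\chi\subset\mathscr{O}'$. Extended by zero, $\chi\bw$ is a smooth vector field on $M$; since $\chi$ is identically $1$ near $\bar{\mathscr{O}}$, $\pounds_{\chi\bw}\bg=\pounds_\bw\bg$ on $\mathscr{O}$, so $\bgamma-\pounds_{\chi\bw}\bg$ satisfies the synchronous condition there. The statement about $\Sigma$ itself is then immediate, since $\Sigma\subset\mathscr{O}$.

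The main obstacle is verifying that $\chi\bw$ is spacelike-compact, i.e.\ supported in some $J(K)$ with $K\subset M$ compact. The support of $\bw$ at $(t,x)\in\mathscr{O}'$ can only be nonzero if the normal geodesic through $x\in\Sigma$ meets $\supp\bgamma$ at some parameter between $0$ and $t$, so $\supp\bw\cap\bar{\mathscr{O}}$ is contained in the union of those integral curve segments of $\tilde{\bn}$ which originate at points of $\bar{\mathscr{O}}\cap\supp\bgamma$ or lie on the same geodesic as such a point. Choosing $\mathscr{O}$ thin enough that $\bar{\mathscr{O}}$ is enclosed between two Cauchy surfaces $\Sigma^\pm$, and using $\supp\bgamma\subset J(K_0)$ for some compact $K_0$, the standard compactness of $J^+(K_0)\cap J^-(\Sigma^+)\cap J^+(\Sigma^-)\cap J^-(K_0)$-type sets in globally hyperbolic spacetimes forces $\supp\bw\cap\bar{\mathscr{O}}$ into a compact set. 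After cut-off, $\chi\bw$ has compact support and is therefore spacelike-compact, completing $\bgamma=(\bgamma-\pounds_{\chi\bw}\bg)+\pounds_{\chi\bw}\bg\in\mathscr{T}^{synch}_{\Sigma,\mathscr{O}}(M)+\mathscr{G}(M)$.
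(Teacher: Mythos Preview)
Your strategy is the same as the paper's: derive a triangular first-order transport system for $\bw$ along the normal geodesics to $\Sigma$, integrate with vanishing data on $\Sigma$, and cut off to a global spacelike-compact vector field. Two points need tightening.

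First, you assume global Gaussian normal coordinates $(t,x^i)$ on $\mathscr{O}'$. The time function $t$ is indeed globally defined on a normal neighbourhood, but the spatial coordinates $x^i$ exist only on coordinate patches of $\Sigma$, so your coordinate form of the $\bw_\parallel$-equation and the phrase ``solve pointwise in $x$'' are only valid chart by chart. The paper handles this by solving locally on tubes $\mathscr{M}_q$ over normal-coordinate patches $\mathscr{N}_q\subset\Sigma$ and patching the local solutions with a partition of unity $\tilde\chi_\lambda$ that is constant along the normal geodesics ($\nabla_{\tilde{\bn}}\tilde\chi_\lambda=0$), so the linear transport equation is preserved under the weighted sum. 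Alternatively you could note that the equation $(\nabla_{\tilde{\bn}}w_\parallel)_b - w_{\parallel a}\nabla_b\tilde n^a=\text{RHS}$ is an intrinsic linear ODE along each normal geodesic and invoke smooth dependence on the base point; either way, the global-chart assumption must go.

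Second, in your support argument you write ``choosing $\mathscr{O}$ thin enough that $\bar{\mathscr{O}}$ is enclosed between two Cauchy surfaces'', but $\mathscr{O}$ is fixed by hypothesis and an arbitrary normal neighbourhood need not lie between two Cauchy surfaces. What you actually need is only that $\chi\bw\in C^\infty_{SC}(T^1_0(M))$, and this follows directly: if a normal geodesic from $p\in\Sigma$ meets $\supp\bgamma$ at some point $q$, then $p\in J(q)\subset J(K_0)$, so the set of such $p$ lies in the compact set $K_1:=J(K_0)\cap\Sigma$; since normal geodesics are timelike, $\supp\bw\subset J(K_1)$ and hence $\supp(\chi\bw)\subset J(K_1)$, which is spacelike-compact. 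That is enough for $\pounds_{\chi\bw}\bg\in\mathscr{G}(M)$.
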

{\noindent\em Remarks.} $\Sigma$ has normal neighbourhoods by \cite[Prop.~7.26]{ONeill}. Given any such normal neighbourhood, we may restrict to a smaller normal neighbourhood whose closure is contained in the original. Therefore the existence of $\mathscr{O}$ in the hypothesis is not restrictive.

\begin{proofof}{Theorem~\ref{thm:synchronous}}
Let $\bgamma \in \mathscr{T}(M)$ be arbitrary. Then the condition $\bgamma+\pounds_{\bw}\bg \in \mathscr{T}^{synch}_{\Sigma, \mathscr{O}}(M)$ amounts to the equations
\begin{align}
\label{eqn:W0eqn}
\nabla_{\tilde{\bn}}W_{0}& =-\frac{1}{2}\tilde{n}^{a}\tilde{n}^{b}\gamma_{ab}
\\
\label{eqn:wpara}
(\nabla_{\tilde{\bn}}w_{\parallel})_{b}-w_{\parallel a}\nabla_{b}\tilde{n}^{a} &= -\tilde{n}^{a}\gamma_{ab}-\nabla_{b}W_{0}-\frac{1}{2}\tilde{n}^{a}\tilde{n}^{c}\gamma_{ac}\tilde{n}_{b},
\end{align}
where $W_{0}=\tilde{n}^{a}w_{a}$ and $\bw_{\parallel}=\bw+W_{0}\tilde{\bn}^{\flat}$.

The first step is to obtain a solution $W_{0}$ to \eqref{eqn:W0eqn} on $\mathscr{O}$. This can be achieved as follows. Through each point $p \in \Sigma$ we have a unit speed normal geodesic $\lambda_{p}:I\to\mathscr{O}$ with $0 \in I \subset \mathbb{R}$ and $\lambda_{p}(0)=p$. Equation \eqref{eqn:W0eqn} can be integrated along $\lambda_{p}$ to give a solution
\begin{equation}
\label{eqn:W0soln}
(W_{0}\circ\lambda_{p})(t)=-\frac{1}{2}\int^{t}_{0}{(\tilde{n}^{a}\tilde{n}^{b}\gamma_{ab} \circ \lambda_{p})(s)ds}
\end{equation}
and we define the scalar function $W_{0}$ at any $q\in\mathscr{O}$ by 
$W_{0}(q):=(W_{0}\circ\lambda_{p_{q}})(t_{q})$, where $p_q$ and $t_q$
are uniquely determined by $q=\lambda_{p_q}(t_q)$.  This satisfies \eqref{eqn:W0eqn} by definition. It is smooth on $\mathscr{O}$ because $t_{q}$ and $p_{q}$ vary smoothly with $q$ under the normal exponential map, which is a diffeomorphism on $\mathscr{O}$, and $\tilde{n}^{a}\tilde{n}^{b}\gamma_{ab}$ is smooth by assumption.

Now, to obtain $\bw_{\parallel}$ on $\mathscr{O}$ we solve \eqref{eqn:wpara} locally within a neighbourhood of each geodesic and then patch together the results with a partition of unity. 
For each $q\in \Sigma$, let $\mathscr{N}_{q} \subset \Sigma$ be an open normal neighbourhood of $q$. Hence on $\mathscr{N}_{q}$ we have well-defined normal coordinates $x^i$ ($i=1,2,3$) based at $q$ and associated  basis vector fields $\be_{i}$. Now, for each $q\in\Sigma$ let $\mathscr{M}_{q}$ be the open set
of points in $\mathscr{O}$ connected to $\Sigma$ by geodesics emanating normally from $\mathscr{N}_q$. The sets $\mathscr{M}_{q}$, for $q\in\Sigma$, form an open cover for $\mathscr{O}$ because it is a normal neighbourhood.  Within each $\mathscr{M}_{q}$ we can introduce Gaussian normal coordinates given by: the proper time $t$  along the geodesics, with $t=0$ on $\Sigma$, and the
normal coordinates $x^i$ mentioned above. In these coordinates, \eqref{eqn:wpara} becomes
\begin{equation}
\label{eqn:dwidt}
\frac{d(w_{\parallel})_{i}(t,x)}{dt}-2\Gamma^{j}_{\hspace{0.1cm}i0}(t,x)(w_{\parallel})_{j}(t,x)=-\gamma_{0i} - \frac{\partial W_{0}(t,x)}{\partial x^{i}}.
\end{equation}
This system can be solved using standard results (see, e.g., \cite[Sec.~1.6]{Taylor}) to give $\bw_{\parallel}$ on $\mathscr{M}_{q}$, where we have also used that $(w_{\parallel})_{0}=0$ in these coordinates. This process is repeated on each $\mathscr{M}_{q}$ for all $q \in \Sigma$. 

As $\Sigma$ is an embedded submanifold of $M$, it will also be second-countable and Hausdorff. Therefore by \cite[Thm~1.11]{Warner}, the open cover $\{ \mathscr{N}_{q} \hspace{0.1cm}|\hspace{0.1cm} q \in \Sigma\}$ of $\Sigma$ by normal neighbourhoods will admit a countable partition of unity $\{\chi_{\lambda} \hspace{0.1cm}|\hspace{0.1cm}\lambda\in I\}$ subordinate to the cover with $\supp\chi_{\lambda}$ compact for each $\lambda \in I$. Hence, for each $\lambda\in I$ there exists a $q\in \Sigma$ such that $\supp\chi_{\lambda}\subset \mathscr{N}_{q}$. To obtain a suitable partition of unity $\tilde{\chi}_{\lambda}$ on the $\mathscr{M}_{q}$'s, we solve $\nabla_{\tilde{\bn}}\tilde{\chi}_{\lambda}=0$ with $\tilde{\chi}_{\lambda}|_{\Sigma}=\chi_{\lambda}$ 
by integrating along integral curves of $\tilde{\bn}$ as before.

Therefore the $\bw_{\parallel}$ that we seek on $\mathscr{O}$ is given by
\begin{equation}
\label{eqn:chiwpara}
\bw_{\parallel} = \sum_{\lambda}{\tilde{\chi}_{\lambda}\bw_{\parallel}^{\lambda}},
\end{equation}
where each $\bw_{\parallel}^{\lambda}$ is the solution to \eqref{eqn:dwidt} on the set $\mathscr{M}_{q}$ that contains $\tilde{\chi}_{\lambda}$. Observe that \eqref{eqn:chiwpara} satisfies \eqref{eqn:wpara} on $\mathscr{O}$ by the properties of $\tilde{\chi}_{\lambda}$.

In conjunction with $W_{0}$ this will give the $\bw$ on $\mathscr{O}$ to transform to the synchronous gauge. We now examine the support properties of $\bw$. Outside $\supp\bgamma$, \eqref{eqn:W0eqn} reduces to $\nabla_{\bn}W_{0}=0$ and so $W_{0}=constant$ along each normal geodesic emanating from $\Sigma$, as long as the geodesic does not enter $\supp\bgamma$. Choosing $W_{0}|_{\Sigma}=0$ yields $W_{0}=0$ on every geodesic that does not intersect $\supp \bgamma$; hence $W_{0}|_{\mathscr{O}}$ is spacelike-compact. Using this means that outside $\supp \bgamma$ equation \eqref{eqn:wpara} reduces to $(\nabla_{\tilde{\bn}}w_{\parallel})_{b}-w_{\parallel a}\nabla_{b}\tilde{n}^{a}=0$; in Gaussian normal coordinates, the right-hand side of \eqref{eqn:dwidt} vanishes. Thus with $\bw_{\parallel}|_{\Sigma}=0$, the solution $w_{\parallel i}$ vanishes
in every $\mathscr{M}_q$ that does not intersect $\supp \bgamma$, so $\bw|_{\mathscr{O}}$ is spacelike-compact. Outside  $\mathscr{O}$ we let $\bw$ smoothly decay to zero. Hence $\bw$ may
be chosen to be compactly supported and therefore $\pounds_{\bw}\bg \in \mathscr{G}(M)$.  
\end{proofof}

\section{Existence and Uniqueness of solutions}
\label{sec:ExistenceUniqueness}

We will now prove existence and uniqueness up to gauge of solutions to the non-hyperbolic equation \eqref{eqn:nonhypeqn}, and show how Green's operators of the de Donder hyperbolic equation \eqref{eqn:ddeqnmotion} are related to linearized gravity solutions.

\subsection{Existence}

Let $\Sigma$ be a smooth spacelike Cauchy surface with unit future-pointing normal vector $\bn$. The Cauchy data map\footnote{The notation $T^{0}_{2}(M)|_{\Sigma}$ means the restriction of rank $(0,2)$ tensor fields on $M$ to the surface $\Sigma$.} $\Data_\Sigma:C^\infty(T^0_2(M))\to C^\infty((T^0_2(M)|_\Sigma)\oplus
C^\infty((T^0_2(M)|_\Sigma)$ is defined by
\begin{equation}
\Data_\Sigma(\bgamma) := (\bgamma|_\Sigma,\nabla_\bn \bgamma|_\Sigma).
\end{equation}
Our aim is to solve the equation $L_{ab}(\bgamma)=0$ subject to given 
$\Data_\Sigma(\bgamma)$. As is well known, this cannot be achieved
for arbitrary data, because the components $L_{ab}(\bgamma)n^{b}|_{\Sigma}$
do not involve second time-derivatives of $\bgamma$ and are therefore completely determined by the initial data. Put more formally, there is a 
constraint map $\bC^{\Sigma}:C^\infty((T^0_2(M)|_\Sigma)\oplus
C^\infty((T^0_2(M)|_\Sigma) \to C^{\infty}(T^{0}_{1}(M)|_{\Sigma})$, 
so that
\begin{equation}
\label{eqn:IVCbC}
\bC^{\Sigma}(\bgamma|_\Sigma,\nabla_\bn \bgamma|_\Sigma):=n^{a}L_{ab}(\bgamma)|_{\Sigma}
\end{equation}
and the Cauchy data must be restricted to the kernel of $\bC^\Sigma$. 
The precise form of the linear map $\bC^\Sigma$ will not be needed here. 
We observe that gauge invariance of $L_{ab}$ entails that $\bC^{\Sigma} \circ \Data_\Sigma$ is also gauge invariant.

\begin{thm}
\label{thm:existence}
Let $\Sigma$ be a smooth spacelike Cauchy surface with future-pointing unit normal vector $\bn$. For any initial data $\bzeta$, $\bxi \in C_0^{\infty}(S^0_2(M)|_{\Sigma})$ satisfying the initial value constraint $\bC^{\Sigma}(\bzeta,\bxi)=0$ there exists a solution $\bgamma\in\mathscr{T}(M)$ to \eqref{eqn:nonhypeqn}
such that $\Data_\Sigma(\bgamma)=(\bzeta,\bxi)$.
\end{thm}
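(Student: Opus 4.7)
The plan is to reduce to the hyperbolic equation $P_{ab}{}^{cd}\overline{\gamma}_{cd}=0$ available in the de Donder gauge (Theorem~\ref{thm:Lident}), solve it using standard existence theory for normally hyperbolic operators, and then recover a solution of $L_{ab}(\bgamma)=0$ whose Cauchy data are \emph{exactly} $(\bzeta,\bxi)$ via a compactly supported gauge transformation.

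\textbf{Step 1 (gauge adjustment of the Cauchy data).} The quantity $(\nabla\cdot\overline{\bgamma})|_\Sigma$ is a linear expression in $(\bgamma|_\Sigma,\nabla_\bn\bgamma|_\Sigma)$, so the given data $(\bzeta,\bxi)$ determine a compactly supported covector field on $\Sigma$. By Lemma~\ref{lem:divtracereversegauge}, a gauge transformation $\bgamma\mapsto\bgamma+\pounds_\bw\bg$ shifts $(\nabla\cdot\overline{\bgamma})|_\Sigma$ by $((\Box+\Lambda)\bw^{\flat})|_\Sigma$; since $\bw|_\Sigma$, $\nabla_\bn\bw|_\Sigma$ and $\nabla_\bn^2\bw|_\Sigma$ may be prescribed independently and compactly supported on $\Sigma$, I would choose $\bw\in C_0^\infty(T^1_0(M))$ so that the modified data $(\bzeta',\bxi'):=(\bzeta+(\pounds_\bw\bg)|_\Sigma,\,\bxi+\nabla_\bn(\pounds_\bw\bg)|_\Sigma)$ satisfy $(\nabla\cdot\overline{\bgamma'})|_\Sigma=0$. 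Gauge invariance of $\bC^{\Sigma}\circ\Data_\Sigma$ preserves the constraint: $\bC^{\Sigma}(\bzeta',\bxi')=0$.

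\textbf{Step 2 (hyperbolic solve).} Solve the Cauchy problem for $P\overline{\bgamma'}=0$ with data $\bgamma'|_\Sigma=\bzeta'$, $\nabla_\bn\bgamma'|_\Sigma=\bxi'$ (equivalently, the trace-reversed data for $\overline{\bgamma'}$). A smooth, spacelike-compact solution exists by the global theory of normally hyperbolic operators on globally hyperbolic spacetimes, as already invoked in the proof of Theorem~\ref{thm:deDsplit}.

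\textbf{Step 3 (propagation of de Donder).} Set $J_b:=\nabla^a\overline{\gamma'}_{ab}$. By Lemma~\ref{lem:nablaP} and $P\overline{\bgamma'}=0$, $J_b$ satisfies $(\Box+\Lambda)J_b=0$. By construction $J_b|_\Sigma=0$, and I would derive $\nabla_\bn J_b|_\Sigma=0$ from the initial-value constraint: Theorem~\ref{thm:Lident} combined with $P\overline{\bgamma'}=0$ reduces $\bC^{\Sigma}(\bzeta',\bxi')=0$ to
\[
n^{b}\bigl(\nabla_a J_b + \nabla_b J_a - g_{ab}\nabla^c J_c\bigr)\big|_\Sigma = 0,
\]
and, after inserting $J_b|_\Sigma=0$, a direct calculation using the decomposition $g^{ab}=-n^an^b+h^{ab}$ (with $h^{ab}$ the inverse induced metric on $\Sigma$) shows that these four equations are equivalent to $\nabla_\bn J_b|_\Sigma=0$. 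Uniqueness for $(\Box+\Lambda)J_b=0$ with vanishing Cauchy data then forces $J_b\equiv 0$, so $\bgamma'$ lies in the de Donder gauge throughout $M$.

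\textbf{Step 4 (conclude).} With $J_b=0$, Theorem~\ref{thm:Lident} gives $L_{ab}(\bgamma')=0$. Setting $\bgamma:=\bgamma'-\pounds_\bw\bg$ yields $\bgamma\in\mathscr{T}(M)$ with $\Data_\Sigma(\bgamma)=(\bzeta,\bxi)$, while $L_{ab}(\bgamma)=L_{ab}(\bgamma')-L_{ab}(\pounds_\bw\bg)=0$ by \eqref{eqn:SWliederiv} and the cosmological vacuum assumption.

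The main obstacle is Step 3: verifying that, modulo $P\overline{\bgamma'}=0$ and $J_b|_\Sigma=0$, the four constraint equations $\bC^{\Sigma}(\bzeta',\bxi')=0$ are equivalent to the four scalar conditions $\nabla_\bn J_b|_\Sigma=0$. This is a direct but somewhat involved algebraic check, analogous in spirit to the propagation-of-gauge argument familiar from the full Einstein and Einstein--Maxwell systems.
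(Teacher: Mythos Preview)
Your proof is correct and follows essentially the same route as the paper's: put the data into de Donder gauge, solve the normally hyperbolic equation $P\bgamma'=0$, and use the constraint to propagate the de Donder condition, then undo the gauge. The only substantive variation is in Step~1: the paper first builds a tensor $\bchi\in\mathscr{T}(M)$ with $\Data_\Sigma(\bchi)=(\bzeta,\bxi)$ and applies the full de Donder splitting (Theorem~\ref{thm:deDsplit}), which involves solving a hyperbolic equation for a spacelike-compact $\bw$; you instead prescribe directly the $2$-jet of a compactly supported $\bw$ on $\Sigma$ so that the modified data are de Donder on $\Sigma$. Both are valid, and your shortcut is slightly more economical. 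Your Step~3 computation is exactly the content of the paper's internal Lemma, which shows $n^{c}\nabla_{c}(\nabla^{a}\overline{\hat{\gamma}}_{ab})|_\Sigma = 2L_{ab}(\hat{\bgamma})n^{a}|_\Sigma$ for any solution of $P\hat{\bgamma}=0$, so the ``somewhat involved algebraic check'' you anticipate is handled there in a few lines.
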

\begin{proof} The proof is broken into two steps. First, we make a gauge transformation that puts the initial data into the de Donder gauge on $\Sigma$. 
Second, as the initial value problem for \eqref{eqn:ddeqnmotion} is
well-posed, we obtain a solution with the transformed data, which obeys the de Donder condition globally; it is here that the initial value constraint is vital. By Theorem~\ref{thm:Lident}, the solution will satisfy the linearized Einstein equation \eqref{eqn:nonhypeqn} and so by undoing the original gauge transformation we obtain a solution with the original Cauchy data.

Following the structure just set out, we begin by constructing a smooth 
$\bchi\in\mathscr{T}(M)$ such that
$\Data_\Sigma(\bchi)=(\bzeta,\bxi)$. This is accomplished as follows: In a normal neighbourhood of $\Sigma$, we use parallel transport along geodesics normal to $\Sigma$ to obtain $\tilde{\bzeta}\in \mathscr{T}(M)$ 
with $\Data_\Sigma(\tilde{\bzeta})=(\bzeta,0)$. 
Taking any extension of $\bxi$ in $\mathscr{T}(M)$, we form $\bchi=
\tilde{\bzeta}+s\bxi$, where $s$ is determined uniquely at each point $p$ of the 
normal neighbourhood by the requirement $p=\exp_q s n|_q$ for some $q\in\Sigma$.
By extending smoothly, we obtain $\bchi\in \mathscr{T}(M)$ with the required properties. 

Using the splitting of Theorem~\ref{thm:deDsplit} there exists $\bw\in C^\infty_{SC}(T^1_0(M))$ such that $\tilde{\bgamma} =\bchi +\pounds_{\bw}\bg$ 
obeys the de Donder condition $\nabla^{a} \overline{\tilde{\gamma}}_{ab}=0$ and
\begin{align}
(\tilde{\gamma}_{ab}-2\nabla_{(a}w_{b)})|_{\Sigma} =\chi_{ab}|_\Sigma &=\zeta_{ab} \notag\\
n^{c}\nabla_{c}(\tilde{\gamma}_{ab}-2\nabla_{(a}w_{b)})|_{\Sigma}
=n^{c}\nabla_{c}\chi_{ab}|_\Sigma  &= \xi_{ab}
.\label{eqn:modifiedICs}
\end{align}

Passing to the second step, let $\hat{\bgamma}\in \mathscr{T}(M)$ be the (unique) solution to the hyperbolic equation $\Box \hat{\gamma}_{ab} - 2 R^{c\phantom{ab}d}_{\phantom{c}ab} \hat{\gamma}_{cd}=0$ with initial data $\Data_\Sigma (\hat{\bgamma}) =\Data_\Sigma(\tilde{\bgamma})$.  The existence
and uniqueness of $\hat{\bgamma}$ follows from \cite[Thm~3.2.11]{Bar}.

The key point is now to show that $\hat{\bgamma}$ obeys the de Donder condition globally. As $\hat{\bgamma}$ obeys \eqref{eqn:ddeqnmotion} then by Lemma~\ref{lem:nablaP}, $\nabla^{a} \overline{\hat{\gamma}}_{ab}$ obeys the hyperbolic equation  \eqref{eqn:diveqnmotion}; it also vanishes on $\Sigma$ because
$\nabla^{a}\overline{\hat{\gamma}}_{ab}|_{\Sigma} = \nabla^{a}\overline{\tilde{\gamma}}_{ab}|_{\Sigma}=0$.
The following identity is now required.
\begin{lem} On a cosmological vacuum background spacetime, for any solution $\hat{\bgamma}$ to \eqref{eqn:ddeqnmotion},
it holds that
$n^{c}\nabla_{c}( \nabla^{a}\overline{\hat{\gamma}}_{ab})|_\Sigma=2L_{ab}(\hat{\bgamma})n^{a}|_\Sigma$.
\end{lem}
\begin{proof} Combining the hypothesis with Lemma~\ref{lem:traceoperatorcommute} and using Theorem~\ref{thm:Lident} gives $2L_{ab}(\hat{\bgamma})=(\overline{\pounds_{(\nabla\cdot \overline{\hat{\bgamma}})^{\sharp}}\bg})_{ab}$. Contracting with $\bn$ and expanding the right-hand side gives
\begin{equation}
2L_{ab}(\hat{\bgamma})n^{a} = n^{a}\nabla_{a}\nabla^{c}\overline{\hat{\gamma}}_{cb}+n^{a}\nabla_{b}\nabla^{c}\overline{\hat{\gamma}}_{ca}-n_{b}\nabla^{d}\nabla^{c}\overline{\hat{\gamma}}_{cd}.
\end{equation}
The metric may be written in terms of the normal vector $\bn$ and a projection operator (see \cite[Ch.~2.7]{HawkingEllis}), so that $g_{ab}=-n_{a}n_{b}+q_{ab}$. This allows one to split a vector into its components normal and tangential to $\Sigma$. Using that $\nabla^{a}\overline{\hat{\gamma}}_{ab}|_{\Sigma}=0$ and $n_{a}n^{a}=-1$ one finds that
\begin{equation}
2L_{ab}(\hat{\bgamma})n^{a}|_{\Sigma}=n^{c}\nabla_{c}( \nabla^{a}\overline{\hat{\gamma}}_{ab})|_\Sigma.
\end{equation}
\end{proof}

As stated immediately prior to Theorem~\ref{thm:existence}, the constraints are gauge invariant and so we have the following chain of equalities
\begin{equation}
L_{ab}(\hat{\bgamma})n^{a}|_\Sigma=\bC^{\Sigma}(\Data_{\Sigma}(\hat{\bgamma}))=\bC^{\Sigma}(\Data_{\Sigma}(\tilde{\bgamma}))=\bC^{\Sigma}(\Data_{\Sigma}(\bchi))=0
\end{equation}
and thus $n^{c}\nabla_{c}( \nabla^{a}\overline{\hat{\gamma}}_{ab})|_\Sigma=0$.

Accordingly, we have shown that $\nabla^{a} \overline{\hat{\gamma}}_{ab}$ 
obeys a hyperbolic equation with vanishing initial data on $\Sigma$; it therefore
vanishes globally in $M$ by \cite[Cor.~3.2.4]{Bar}. Thus, $\hat{\bgamma}$ solves \eqref{eqn:ddeqnmotion} and satisfies the de Donder condition, so $\hat{\bgamma}\in\mathscr{S}^{dD}(M)$. By undoing the original gauge transformation, we obtain a solution $\bgamma=\hat{\bgamma} - \pounds_\bw\bg$ to the linearized Einstein equation \eqref{eqn:symmL}, which obeys $\Data_\Sigma(\bgamma)=(\bzeta,\bxi)$ by virtue of \eqref{eqn:modifiedICs}. \end{proof}

\subsection{Uniqueness}

Given any initial data satisfying the constraints then by Theorem~\ref{thm:existence} there exists a solution to the linearized Einstein equation. However, as the next theorem shows, this solution is only unique up to gauge equivalence.
\begin{thm}
\label{thm:uniqueness} Suppose $\bgamma,\bgamma^{\prime} \in \mathscr{S}(M)$ with $\Data_\Sigma (\bgamma)=\Data_\Sigma (\bgamma^{\prime})$ on some spacelike Cauchy surface $\Sigma$. Then $\bgamma=\bgamma^{\prime}+\pounds_{\bw}\bg$, also written as $\bgamma \thicksim \bgamma^{\prime}$. If, additionally, $\bgamma,\bgamma^{\prime} \in \mathscr{S}^{dD}(M)$ then gauge equivalence is replaced by equality.
\end{thm}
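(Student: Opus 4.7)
The plan is to reduce the first assertion to the de Donder case by linearity, and then handle the de Donder case directly via uniqueness for hyperbolic equations. Setting $\bdelta = \bgamma - \bgamma' \in \mathscr{S}(M)$ with $\Data_\Sigma(\bdelta) = 0$, the first assertion amounts to showing $\bdelta \in \mathscr{G}(M)$, and the second to showing $\bdelta = 0$ whenever $\bdelta \in \mathscr{S}^{dD}(M)$.

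The de Donder case falls out at once: by Theorem~\ref{thm:Lident} together with Lemma~\ref{lem:traceoperatorcommute}, any $\bdelta \in \mathscr{S}^{dD}(M)$ satisfies the normally hyperbolic equation~\eqref{eqn:ddeqnmotion}, so with vanishing Cauchy data the standard uniqueness theorem \cite[Cor.~3.2.4]{Bar} forces $\bdelta \equiv 0$.

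For the general case, I will produce a gauge vector $\bw$ for which $\pounds_\bw\bg$ itself carries vanishing Cauchy data on $\Sigma$, thereby reducing matters to the de Donder case. Explicitly, the construction in the proof of Theorem~\ref{thm:deDsplit} furnishes $\bw \in C^\infty_{SC}(T^1_0(M))$ with $\bw|_\Sigma = 0$, $\nabla_\bn\bw|_\Sigma = 0$, and $(\Box + \Lambda) w_b = -\nabla^a \overline{\delta}_{ab}$, so that $\tilde{\bdelta} := \bdelta + \pounds_\bw\bg$ lies in $\mathscr{S}^{dD}(M)$. If one can show $\Data_\Sigma(\pounds_\bw\bg) = 0$, then $\Data_\Sigma(\tilde{\bdelta}) = 0$, and the de Donder case applied to $\tilde{\bdelta}$ yields $\tilde{\bdelta} = 0$, so $\bdelta = -\pounds_\bw\bg \in \mathscr{G}(M)$, as required.

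The main obstacle is therefore verifying $\Data_\Sigma(\pounds_\bw\bg) = 0$. Because $\bdelta$ vanishes to first order along $\Sigma$, a routine computation in Gaussian normal coordinates yields $\nabla^a\overline{\delta}_{ab}|_\Sigma = 0$; feeding this into the wave equation for $\bw$, while using $\bw|_\Sigma = \nabla_\bn\bw|_\Sigma = 0$, gives $\nabla_\bn\nabla_\bn\bw|_\Sigma = 0$ as well. The vanishing of $\bw$ and all its first derivatives on $\Sigma$ immediately implies $\pounds_\bw\bg|_\Sigma = 0$, while the identity $\nabla_\bn\pounds_\bw\bg|_\Sigma = 0$ reduces, in Gaussian normal coordinates, to the vanishing of $\partial_0^2 w_\mu|_\Sigma$ (controlled by $\nabla_\bn\nabla_\bn\bw|_\Sigma = 0$) together with tangential contributions that vanish from $\bw|_\Sigma = 0$. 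Once this verification is in place, the four steps above fit together to give both assertions of the theorem.
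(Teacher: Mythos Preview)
Your proof is correct and follows essentially the same strategy as the paper: set $\bdelta=\bgamma-\bgamma'$, pass to de Donder gauge via a vector field $\bw$ with vanishing Cauchy data on $\Sigma$, verify that $\pounds_\bw\bg$ itself has vanishing Cauchy data (using that $\nabla^a\overline{\delta}_{ab}|_\Sigma=0$ forces $\nabla_\bn\nabla_\bn\bw|_\Sigma=0$), and then invoke uniqueness for the normally hyperbolic equation~\eqref{eqn:ddeqnmotion}. The only difference is cosmetic: you carry out the second-derivative computation in Gaussian normal coordinates, whereas the paper does it covariantly via the decomposition $g_{ab}=-n_an_b+q_{ab}$ and the Riemann identity; both arrive at $n^c\nabla_c\nabla_a w_b|_\Sigma=0$ by the same mechanism.
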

\begin{proof} Let $\bxi= \bgamma-\bgamma^{\prime}$ 
which satisfies $\Data_\Sigma (\bxi)=0$. By Theorem~\ref{thm:deDsplit} we may write $\tilde{\bgamma}=\bxi+\pounds_\bw\bg$ 
where $\tilde{\bgamma}\in\mathscr{S}^{dD}(M)$ and $\bw\in C^\infty_{SC}(T^1_0(M))$ obeys $(\Box+\Lambda)w^{b} =-\nabla^{a} \overline{\xi}_{a}^{\phantom{a}b}$. Choose initial data,  $\bw |_{\Sigma}=0$ and $\nabla_{\bn}\bw|_{\Sigma}=0$.
Therefore $\Data_{\Sigma}(\tilde{\bgamma})$ is given by
\begin{align}
\tilde{\gamma}_{ab}|_{\Sigma} &=(\nabla_{a}w_{b}+\nabla_{b}w_{a})|_{\Sigma} =0 \\
n^{c}\nabla_{c}\tilde{\gamma}_{ab}|_{\Sigma} &=n^{c}\nabla_{c}(\nabla_{a}w_{b}+\nabla_{b}w_{a})|_{\Sigma}. \label{eq:data2}
\end{align}
In fact $n^{c}\nabla_{c}\tilde{\gamma}_{ab}|_{\Sigma}$ also vanishes, as we will now show. Firstly, by the choice of data we know that $\nabla_{a}w_{b}|_{\Sigma}=0$ and therefore any derivative of this taken tangentially to $\Sigma$ will vanish. Next, 
using the Riemann tensor identity \eqref{eqn:Riemann} and the conditions on $\bw$ at $\Sigma$, we have
\begin{align}
n^{c}\nabla_{c}\nabla_{a}w_{b}|_{\Sigma}&=\nabla_{a}(n^{c}\nabla_{c}w_{b})|_{\Sigma}-(\nabla_{a}n^{c})\nabla_{c}w_{b}|_{\Sigma}+n^c R_{cab}^{\phantom{cab}d}w_{d}|_{\Sigma} \\
&= \nabla_{a}(n^{c}\nabla_{c}w_{b})|_{\Sigma} =-n_{a}n^{d}\nabla_{d}(n^{c}\nabla_{c}w_{b})|_{\Sigma}.
\end{align}
As $\Data_{\Sigma}(\bxi)=0$, we have
$(\Box+\Lambda)\bw|_{\Sigma} = -\nabla\cdot\overline{\bxi}|_{\Sigma}
= 0$. Expanding, using
$g_{ab}=-n_{a}n_{b}+q_{ab}$,
\begin{equation}
n^{a}\nabla_{a}(n^{c}\nabla_{c}w^{b})|_{\Sigma} =  q^{ac}\nabla_{a}\nabla_{c}w^{b}|_{\Sigma}+(n^{a}\nabla_{a}n^{c})\nabla_{c}w^{b}|_{\Sigma}+\Lambda w^{b}|_{\Sigma}
=0.
\end{equation}
Therefore $n^{c}\nabla_{c}\nabla_{a}w_{b}|_{\Sigma}=0$ and hence by \eqref{eq:data2}, $n^{c}\nabla_{c}\tilde{\gamma}_{ab}|_{\Sigma}$ vanishes.
As $\tilde{\bgamma}$ satisfies the hyperbolic equation \eqref{eqn:ddeqnmotion} with vanishing Cauchy data, it vanishes globally by \cite[Cor.~3.2.4]{Bar}. Thus $\bxi \thicksim 0$ and hence $\bgamma \thicksim \bgamma^{\prime}$.

If both of the solutions $\bgamma$ and $\bgamma^{\prime}$ are de Donder with the same initial data then they must coincide because they solve the hyperbolic equation \eqref{eqn:ddeqnmotion} with identical data. \end{proof}

\subsection{Green's operators}
\label{sec:greensoperators}

Any de Donder solution satisfies the two equivalent hyperbolic wave equations \eqref{eqn:overlinelineqngauge} and \eqref{eqn:ddeqnmotion}. By the results of \cite[Ch.~1.5]{Bar}, the differential operator $P$ (defined in equation~\eqref{eqn:Pdef}) is a normally hyperbolic operator and therefore, by \cite[Cor.~3.4.3]{Bar}, admits unique advanced~($-$) and retarded~($+$) Green's operators $\bE^{\pm}: C^{\infty}_{0}(T^{0}_{2}(M)) \to C^{\infty}(T^{0}_{2}(M))$, whose action on test tensors we can write as
\begin{equation}
(E^{\pm \hspace{0.2cm}c^{\prime}d^{\prime}}_{\hspace{0.1cm}ab}f_{c^{\prime}d^{\prime}})(x)=\int_{M}{E^{\pm \hspace{0.2cm}c^{\prime}d^{\prime}}_{\hspace{0.1cm}ab}(x,x^{\prime})f_{c^{\prime}d^{\prime}}(x^{\prime})dvol(x^{\prime})},
\end{equation}
where $\bfab \in C^{\infty}_{0}(T^{0}_{2}(M))$. The operators $\bE^{\pm}$ satisfy
$P (\bE^\pm\bfab) = \bE^\pm P (\bfab)= \bfab$ for all $\bfab \in C^{\infty}_{0}(T^{0}_{2}(M))$, and have the support properties $\supp(\bE^{\pm} \bfab) \subset J^{\pm}(\supp \bfab)$;
moreover $\bgamma^\pm = \bE^\pm \bfab$ is the unique solution to $P(\bgamma^\pm) =\bfab$
with support that is compact to the past ($+$)/future ($-$) (for the uniqueness of the Cauchy problem see~\cite[Thm~3.2.11]{Bar}). The advanced-minus-retarded solution operator is defined to be $\bE:=\bE^{-}-\bE^{+}$. Analogous properties hold for all normally hyperbolic operators.
Any solution to \eqref{eqn:ddeqnmotion} may be written in terms of $\bE$ as the next lemma shows.
\begin{lem}
\label{lem:solEf}
Any $\bgamma \in \mathscr{T}(M)$ solving $P(\bgamma)=0$  may be written as $\bgamma = \bE \bfab$ with $\bfab \in C^{\infty}_{0}(S^{0}_{2}(M))$.
\end{lem}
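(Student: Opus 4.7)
The idea is to realize $\bgamma$ as the image under $\bE$ of a compactly supported source, by the standard partition-of-unity construction familiar from the scalar case (cf.\ \cite[Thm~3.4.7]{Bar}).

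I would first pick two smooth spacelike Cauchy surfaces $\Sigma_{-}$ and $\Sigma_{+}$, with $\Sigma_{-}\subset I^{-}(\Sigma_{+})$, together with a smooth function $\chi\in C^{\infty}(M)$ equal to $1$ in $J^{+}(\Sigma_{+})$ and to $0$ in $J^{-}(\Sigma_{-})$, so that $\chi$ is non-constant only in the timelike-compact slab between the two Cauchy surfaces. I then define
\begin{equation}
\bfab := P(\chi\bgamma) = -P((1-\chi)\bgamma),
\end{equation}
the second equality following from $P(\bgamma)=0$. Since $P$ is a second-order differential operator and $P\bgamma=0$, the terms in $P(\chi\bgamma)$ in which $\chi$ is undifferentiated cancel, so $\supp\bfab$ is contained in the set where $\chi$ is non-constant, intersected with $\supp\bgamma$. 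This set lies inside $J^{+}(\Sigma_{-})\cap J^{-}(\Sigma_{+})\cap J(K)$ for some compact $K$ (using spacelike-compactness of $\bgamma$), which is compact by global hyperbolicity. Symmetry of $\bfab$ follows from the fact that $P$ preserves symmetry of its argument, so indeed $\bfab\in C^{\infty}_{0}(S^{0}_{2}(M))$.

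Next, I would identify $\chi\bgamma$ and $(1-\chi)\bgamma$ with the retarded/advanced solutions. Observe that $\chi\bgamma$ vanishes in $J^{-}(\Sigma_{-})$, so its support is past-compact (the intersection with any $J^{-}(p)$ lies in the compact set $J^{+}(\Sigma_{-})\cap J^{-}(p)\cap J(K)$). Similarly $(1-\chi)\bgamma$ has future-compact support. Since $P(\chi\bgamma)=\bfab$ and $P((1-\chi)\bgamma)=-\bfab$, the uniqueness of the solution to $P\bu = \bh$ with past- (resp.\ future-) compact support, invoked via \cite[Cor.~3.2.4, Thm~3.2.11]{Bar}, identifies
\begin{equation}
\chi\bgamma = \bE^{+}\bfab,\qquad (1-\chi)\bgamma = -\bE^{-}\bfab.
\end{equation}
Adding these and using $\bE=\bE^{-}-\bE^{+}$ (with the sign convention of the paper) then gives $\bE\bfab = -\bgamma$; up to changing the overall sign of $\bfab$ (or equivalently choosing $\bfab = P((1-\chi)\bgamma)$ instead), this delivers $\bgamma = \bE\bfab$ as required.

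The only non-routine point is the bookkeeping of support properties to verify compactness of $\supp\bfab$; this rests on global hyperbolicity together with the spacelike-compactness assumption $\supp\bgamma\subset J(K)$, so it is more a matter of care than of real difficulty. The invocations of uniqueness for the retarded/advanced problems on normally hyperbolic equations are exactly the general results already used earlier in the section for the operator $P$.
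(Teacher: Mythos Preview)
Your argument is correct and is precisely the standard proof of \cite[Thm~3.4.7]{Bar}, which is all the paper invokes here; the paper does not give an independent argument but simply cites that reference. The only cosmetic point is the sign bookkeeping you already flagged: with the paper's convention $\bE=\bE^{-}-\bE^{+}$ your construction yields $\bE\bfab=-\bgamma$, so one should take $\bfab=P((1-\chi)\bgamma)$ from the outset to avoid the last-minute sign flip.
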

\begin{proof} See \cite[Thm~3.4.7]{Bar}. \end{proof}

The preceding lemma concerns general solutions to the hyperbolic equation \eqref{eqn:ddeqnmotion}, however, as we are looking for solutions to linearized gravity we are interested more particularly in
those solutions satisfying the de Donder condition, $\nabla\cdot\overline{\bgamma}=0$. Before showing when such a solution will be de Donder, we first require the following lemma.
\begin{lem}
\label{lem:overlinegammaEf}
For all $\bfab \in C_0^{\infty}(T^{0}_{2}(M))$, we have $
\overline{\bE \bfab}=\bE \overline{\bfab}$.
\end{lem}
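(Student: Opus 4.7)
The plan is to reduce the claim for $\bE = \bE^- - \bE^+$ to the analogous identity for each of the advanced and retarded Green's operators $\bE^\pm$ separately, and then to use uniqueness of those operators together with the fact that trace reversal commutes with $P$.

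First I would set $\bgamma^\pm := \bE^\pm \bfab$, so that $P(\bgamma^\pm) = \bfab$ and $\supp \bgamma^\pm \subset J^\pm(\supp \bfab)$ by the defining properties of the advanced/retarded Green's operators for the normally hyperbolic operator $P$. Next I would apply trace reversal: because trace reversal is a pointwise linear operation, $\supp \overline{\bgamma^\pm} = \supp \bgamma^\pm \subset J^\pm(\supp \bfab) = J^\pm(\supp \overline{\bfab})$, using again that trace reversal preserves supports. By Lemma~\ref{lem:traceoperatorcommute}, $P$ commutes with trace reversal on cosmological vacuum backgrounds, so
\begin{equation*}
P(\overline{\bgamma^\pm}) \;=\; \overline{P(\bgamma^\pm)} \;=\; \overline{\bfab}.
\end{equation*}
Thus $\overline{\bgamma^\pm}$ is a solution of $P(\bh) = \overline{\bfab}$ with support in $J^\pm(\supp \overline{\bfab})$.

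The uniqueness clause in the Green's operator characterisation (which gave us $\bE^\pm$ in the first place, via \cite[Cor.~3.4.3]{Bar} together with the Cauchy problem uniqueness \cite[Thm~3.2.11]{Bar}) then forces $\overline{\bgamma^\pm} = \bE^\pm \overline{\bfab}$, i.e.\ $\overline{\bE^\pm \bfab} = \bE^\pm \overline{\bfab}$. Subtracting the two statements yields $\overline{\bE \bfab} = \bE \overline{\bfab}$, as required.

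I do not expect any real obstacle here: the entire argument hinges on Lemma~\ref{lem:traceoperatorcommute} plus uniqueness of advanced/retarded fundamental solutions. The only small thing to be careful about is confirming that $\overline{\bgamma^\pm}$ has the right support — but this is immediate since $\overline{\gamma}_{ab} = \gamma_{ab} - \tfrac{1}{2} g_{ab} \gamma$ is pointwise in $\bgamma$ — and that $\overline{\bfab}$ remains compactly supported, which is again clear.
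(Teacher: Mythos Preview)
Your proof is correct and follows essentially the same approach as the paper: both arguments establish $\overline{\bE^{\pm}\bfab} = \bE^{\pm}\overline{\bfab}$ separately by combining Lemma~\ref{lem:traceoperatorcommute} with the uniqueness of advanced/retarded solutions having past/future-compact support, after noting that trace reversal preserves supports. The paper's presentation differs only cosmetically in that it names both $\bE^{\pm}\bfab$ and $\bE^{\pm}\overline{\bfab}$ at the outset, whereas you work forward from $\bgamma^{\pm} = \bE^{\pm}\bfab$ alone.
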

\begin{proof} $\tilde{\bgamma}^{\pm}=\bE^{\pm} \overline{\bfab}$ are the unique solutions to $P(\tilde{\bgamma}^{\pm})=\overline{\bfab}$ with support in $J^{\pm}(\supp \overline{\bfab})$  and $\bgamma^{\pm}=\bE^{\pm} \bfab$ are the unique solutions to $P(\bgamma^{\pm})=\bfab$ with support in $J^{\pm}(\supp \bfab)$. Since trace-reversal commutes with $P$ (Lemma~\ref{lem:traceoperatorcommute}) we have
\begin{equation}
P(\bE^{\pm}\overline{\bfab})=\overline{\bfab}=\overline{P(\bE^{\pm} \bfab)}=P(\overline{\bE^{\pm}\bfab}).
\end{equation}
For any $\bk \in C^{\infty}_{0}(T^{0}_{2}(M))$ one may show that  $\supp \bk=\supp \overline{\bk}$ and hence $J^{\pm}(\supp\bk)=J^{\pm}(\supp \overline{\bk})$. Therefore by uniqueness (from the support properties)  $\bE^{\pm} \overline{\bfab}=\overline{\bE^{\pm} \bfab}$ and so $\bE \overline{\bfab}=\overline{\bE \bfab}$. \end{proof}

\begin{thm}
\label{thm:fundsolnsDD}For any $\bfab\in C^\infty_0(S^0_2(M))$, 
we have $\bE\bfab\in \mathscr{S}^{dD}(M)$ if and only if $\nabla\cdot\overline{\bfab}\in (\Box + \Lambda)C^{\infty}_{0}(T^{0}_{1}(M))$.
\end{thm}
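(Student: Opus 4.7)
The plan is to reduce the de Donder condition on $\bgamma := \bE\bfab$ to an algebraic condition on $\nabla\cdot\overline{\bfab}$ by intertwining the divergence with the Green's operators. First, I observe that $\bgamma\in\mathscr{T}(M)$ is spacelike-compact (since $\bfab$ is compactly supported and $\supp\bE^\pm\bfab\subset J^\pm(\supp\bfab)$) and satisfies $P\bgamma=0$. By Lemma~\ref{lem:traceoperatorcommute} this also gives $P\overline{\bgamma}=0$, which by Theorem~\ref{thm:Lident} means that $\bgamma$ automatically satisfies $L_{ab}(\bgamma)=0$ \emph{provided} the de Donder condition $\nabla\cdot\overline{\bgamma}=0$ holds. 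So the content of the theorem is reduced to characterizing when $\nabla\cdot\overline{\bE\bfab}=0$.

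Next, by Lemma~\ref{lem:overlinegammaEf}, $\overline{\bE\bfab}=\bE\overline{\bfab}$, so we must analyse $\nabla\cdot\bE\bh$ for $\bh:=\overline{\bfab}\in C_0^\infty(S^0_2(M))$. The key step is to establish an intertwining relation
\begin{equation*}
\nabla^a(\bE^\pm\bh)_{ab}=\tilde{\bE}^\pm(\nabla^a h_{ab})_b,
\end{equation*}
where $\tilde{\bE}^\pm$ are the advanced/retarded Green's operators of the normally hyperbolic operator $\Box+\Lambda$ acting on one-forms. To prove this, set $\bgamma^\pm=\bE^\pm\bh$ and $\bv^\pm:=\nabla\cdot\bgamma^\pm$. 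Then $\bv^\pm$ has support contained in $J^\pm(\supp\bh)$, and by Lemma~\ref{lem:nablaP} it satisfies $(\Box+\Lambda)\bv^\pm=\nabla\cdot P\bgamma^\pm=\nabla\cdot\bh$. Uniqueness of the Green's operators (via the support properties, which follow from global hyperbolicity and \cite[Cor.~3.2.4]{Bar}) then forces $\bv^\pm=\tilde{\bE}^\pm(\nabla\cdot\bh)$. Subtracting yields $\nabla\cdot\bE\bh=\tilde{\bE}(\nabla\cdot\bh)$, so that
\begin{equation*}
\nabla\cdot\overline{\bE\bfab}=\tilde{\bE}\bigl(\nabla\cdot\overline{\bfab}\bigr).
\end{equation*}

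The final step is to invoke the standard kernel characterization for the advanced-minus-retarded Green's operator of a normally hyperbolic equation on a globally hyperbolic spacetime (the exactness statement in \cite[Thm~3.4.7]{Bar}): for a one-form $\bu\in C_0^\infty(T^0_1(M))$, one has $\tilde{\bE}\bu=0$ if and only if $\bu=(\Box+\Lambda)\bv$ for some $\bv\in C_0^\infty(T^0_1(M))$. Applied to $\bu=\nabla\cdot\overline{\bfab}$, which is compactly supported, this yields the claim. The one thing to confirm is that the cited theorem applies to $\Box+\Lambda$ on one-forms, but this operator is manifestly normally hyperbolic (its principal symbol is the metric), so the result of \cite{Bar} goes through verbatim.

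The main technical point I expect to need care with is the intertwining identity: one must verify that $\tilde{\bE}^\pm(\nabla\cdot\bh)$ is well-defined — that is, that $\nabla\cdot\bh$ is itself in $C_0^\infty(T^0_1(M))$, which is immediate since the divergence of a compactly supported tensor is compactly supported — and that the uniqueness argument really applies, i.e.\ that $\bv^\pm$ has the correct support and regularity. Once this intertwining is established, the theorem falls out essentially as a corollary of the exact sequence characterization of the Green's operator kernel.
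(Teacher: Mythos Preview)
Your proposal is correct and follows essentially the same route as the paper's proof: both reduce the de Donder condition to $\nabla\cdot\bE\overline{\bfab}=0$ via Lemma~\ref{lem:overlinegammaEf}, establish the intertwining $\nabla\cdot\bE^\pm\overline{\bfab}=\hat{\bE}^\pm(\nabla\cdot\overline{\bfab})$ by applying Lemma~\ref{lem:nablaP} and invoking uniqueness through support properties, and then appeal to the exact-sequence characterization of $\ker\hat{\bE}$ from \cite[Thm~3.4.7]{Bar}. Your only addition is the explicit remark (via Theorem~\ref{thm:Lident}) that once the de Donder condition holds, $P\bgamma=0$ automatically upgrades to $L_{ab}(\bgamma)=0$; the paper leaves this implicit in its ``by definition'' opening.
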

\begin{proof} By definition, $\bE\bfab\in \mathscr{S}^{dD}(M)$ if and only if $\nabla \cdot \overline{\bE\bfab} \equiv 0$ or equivalently, using Lemma~\ref{lem:overlinegammaEf}, $\nabla \cdot \bE \overline{\bfab}=0$. 
Taking the divergence of $P (\bE^{\pm} \overline{\bfab})=\overline{\bfab}$ and utilising Lemma~\ref{lem:nablaP},
we find that
\begin{equation}
(\Box+\Lambda)(\nabla \cdot \bE^{\pm} \overline{\bfab})=\nabla \cdot\overline{\bfab}
\end{equation}
and deduce that $\nabla \cdot \bE^{\pm} \overline{\bfab}=\hat{\bE}^{\pm}\nabla\cdot \overline{\bfab}$, where $\hat{\bE}^{\pm}$ are the advanced and retarded Green's operators for $(\Box+\Lambda)$ on covector fields. Hence $\hat{\bE}\nabla\cdot \overline{\bfab}=\nabla \cdot \bE \overline{\bfab}=0$ and by \cite[Thm~3.4.7]{Bar}, this holds if and only if $\nabla\cdot\overline{\bfab} \in (\Box+\Lambda)C^{\infty}_{0}(T^{0}_{1}(M))$. \end{proof}

We now prove an identity concerning the action of $P$ on a pure gauge perturbation and then, using this, prove the relationship between $\bE$ and the Lie-derivative.
\begin{lem}
\label{lem:PLiederiv}
Given a $\bw \in C^{\infty}(T^{1}_{0}(M))$ on a cosmological vacuum background spacetime, then
\begin{equation}
\label{eqn:liederivhyperbolic}
\pounds_{(\Box+\Lambda)\bw}\bg=P(\pounds_{\bw}\bg).
\end{equation}
\end{lem}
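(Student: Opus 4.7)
The plan is to expand both sides of \eqref{eqn:liederivhyperbolic} explicitly and verify they agree. Writing out the left-hand side,
\[
(\pounds_{(\Box+\Lambda)\bw}\bg)_{ab}=\nabla_{a}(\Box+\Lambda)w_{b}+\nabla_{b}(\Box+\Lambda)w_{a},
\]
whereas the right-hand side is
\[
(P(\pounds_{\bw}\bg))_{ab}=\Box(\nabla_{a}w_{b}+\nabla_{b}w_{a})-2R^{c\phantom{ab}d}_{\phantom{c}ab}(\nabla_{c}w_{d}+\nabla_{d}w_{c}).
\]
So the identity reduces to proving the commutator formula
\begin{equation}\label{eq:keycomm}
\Box\nabla_{a}w_{b}-\nabla_{a}\Box w_{b}-\Lambda\nabla_{a}w_{b}=2R^{c\phantom{ab}d}_{\phantom{c}ab}\nabla_{c}w_{d}
\end{equation}
and then checking that symmetrization in $(a,b)$ converts the right-hand side of \eqref{eq:keycomm} (plus its $a\leftrightarrow b$ partner) into the curvature contribution on the right-hand side of \eqref{eqn:liederivhyperbolic}.

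To prove \eqref{eq:keycomm} I would apply the Ricci identity twice. First, using $[\nabla_{c},\nabla_{a}]w_{b}=R_{cab}{}^{d}w_{d}$ from \eqref{eqn:Riemann}, I would write
\[
\Box\nabla_{a}w_{b}=\nabla^{c}\nabla_{a}\nabla_{c}w_{b}+(\nabla^{c}R_{cab}{}^{d})w_{d}+R_{cab}{}^{d}\nabla^{c}w_{d}.
\]
The middle term vanishes: the contracted second Bianchi identity combined with $R_{ab}=\Lambda g_{ab}$ from \eqref{eqn:Riccitenscalvalue} shows that in a cosmological vacuum background the Riemann tensor is divergence-free on every index, as already exploited implicitly in the proof of Lemma~\ref{lem:nablaP}. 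Next I would commute $\nabla^{c}$ past $\nabla_{a}$ using the Ricci identity on the rank-two cotensor $\nabla_{c}w_{b}$, producing
\[
\nabla^{c}\nabla_{a}\nabla_{c}w_{b}=\nabla_{a}\Box w_{b}+R^{c}{}_{ac}{}^{d}\nabla_{d}w_{b}+R^{c\phantom{ab}d}_{\phantom{c}ab}\nabla_{c}w_{d}.
\]
The first curvature contraction becomes $R_{a}{}^{d}=\Lambda\delta_{a}^{d}$ by \eqref{eqn:Riccitenscalvalue}, accounting for the $\Lambda\nabla_{a}w_{b}$ on the left of \eqref{eq:keycomm}. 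Combining, and noting that $R^{c\phantom{ab}d}_{\phantom{c}ab}\nabla_{c}w_{d}=R_{cab}{}^{d}\nabla^{c}w_{d}$ after relabeling the contracted index, I obtain the coefficient $2$ in front of the surviving curvature term, completing \eqref{eq:keycomm}.

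The final step is the symmetrization in $(a,b)$. Adding \eqref{eq:keycomm} to its $a\leftrightarrow b$ version, the left-hand side yields exactly the $a\leftrightarrow b$ symmetrization needed, while the right-hand side produces $2R^{c\phantom{ab}d}_{\phantom{c}ab}\nabla_{c}w_{d}+2R^{c\phantom{ba}d}_{\phantom{c}ba}\nabla_{c}w_{d}$. To match this with the curvature term $2R^{c\phantom{ab}d}_{\phantom{c}ab}(\nabla_{c}w_{d}+\nabla_{d}w_{c})$ coming from $P(\pounds_{\bw}\bg)$, one needs the symmetry
\[
R^{c}{}_{ba}{}^{d}=R^{d}{}_{ab}{}^{c},
\]
which I would verify directly by writing $R_{cbad}=-R_{cbda}$ (antisymmetry in the second pair) and $R_{dabc}=R_{bcda}=-R_{cbda}$ (pair exchange followed by antisymmetry in the first pair). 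The main bookkeeping hazard is keeping track of sign conventions when applying the Ricci identity to $\nabla_{c}w_{b}$ regarded as a $(0,2)$-tensor, and confirming that the paper's sign convention \eqref{eqn:Riemann} is used consistently; once \eqref{eq:keycomm} and the required Riemann symmetry are in hand, the result follows by inspection.
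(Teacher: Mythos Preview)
Your proposal is correct and follows essentially the same approach as the paper: both arguments reduce the identity to commuting $\nabla_a$ past $\Box$ acting on $w_b$ via two applications of the Ricci identity, using $\nabla^c R_{cab}{}^{d}=0$ (from the contracted Bianchi identity with $R_{ab}=\Lambda g_{ab}$) to kill the derivative-of-curvature term and $R_{ab}=\Lambda g_{ab}$ to produce the $\Lambda\nabla_a w_b$ contribution. The paper presents this as a formula for $\nabla_a\Box w_b$ in terms of $\Box\nabla_a w_b$ and substitutes directly, whereas you isolate the commutator identity \eqref{eq:keycomm} and then carry out the $(a,b)$ symmetrization and Riemann-symmetry check explicitly; these are the same computation organized slightly differently.
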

\begin{proof}
Expanding out the left-hand side of \eqref{eqn:liederivhyperbolic} gives
\begin{equation}\label{eqn:liederivBox}
(\pounds_{(\Box+\Lambda)\bw}\bg)_{ab}
= 2\nabla_{(a} \Box w_{b)}+2\Lambda \nabla_{(a} w_{b)}.
\end{equation}
One can show, using the Riemann tensor identity \eqref{eqn:Riemann}, the Leibniz rule and \eqref{eqn:Riccitenscalvalue}, that 
\begin{equation}
\nabla_{a}\Box w_{b}=\Box(\nabla_{a}w_{b})-\Lambda\nabla_{a}w_{b} +2 R_{a \hspace{0.1cm}b}^{\hspace{0.1cm}c\hspace{0.1cm}d}\nabla_{c}w_{d}+w_{d}\nabla^{c}R_{acb}^{\phantom{acb}d}.
\end{equation}
We know from the proof of Lemma~\ref{lem:nablaP} that $\nabla_{a}R_{bcd}^{\phantom{bcd}a}=0$
on cosmological vacuum background spacetimes and therefore we have
\begin{equation}
\nabla_{a}\Box w_{b}=\Box(\nabla_{a}w_{b})-\Lambda\nabla_{a}w_{b} +2 R_{a \phantom{c}b}^{\phantom{a}c\phantom{b}d}\nabla_{c}w_{d}.
\end{equation}
Combining this with \eqref{eqn:liederivBox} gives the final result. \end{proof}

\begin{lem}
\label{lem:liederivE}
Given a $\bv \in C^{\infty}_{0}(T^{1}_{0}(M))$  on a cosmological vacuum background spacetime, then
\begin{equation}
\pounds_{\tilde{\bE}\bv}\bg=\bE(\pounds_{\bv}\bg),
\end{equation}
where $\tilde{\bE}^{\pm}$ are the advanced and retarded Green's operators for $(\Box+\Lambda)$ on vector fields.
\end{lem}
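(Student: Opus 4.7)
The plan is to treat the advanced and retarded pieces separately and invoke the uniqueness clause for $\bE^\pm$ given in the discussion preceding Lemma~\ref{lem:solEf}. First, since $\bv$ is compactly supported, the Lie derivative $\pounds_\bv\bg$ lies in $C^\infty_0(S^0_2(M))$ (Lie differentiation of $\bg$ is a local operation, so $\supp \pounds_\bv\bg \subset \supp\bv$). Hence $\bE^\pm(\pounds_\bv\bg)$ is well-defined, and it suffices to show
\begin{equation}\label{eq:plan-pm}
\pounds_{\tilde{\bE}^\pm\bv}\bg = \bE^\pm(\pounds_\bv\bg),
\end{equation}
after which the required identity follows by subtracting the two signs.

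The key input is Lemma~\ref{lem:PLiederiv}: applying it with $\bw = \tilde{\bE}^\pm\bv$ and using that $\tilde{\bE}^\pm$ is a right inverse of $(\Box+\Lambda)$ on compactly supported sources gives
\begin{equation}
P\bigl(\pounds_{\tilde{\bE}^\pm\bv}\bg\bigr) = \pounds_{(\Box+\Lambda)\tilde{\bE}^\pm\bv}\bg = \pounds_\bv\bg.
\end{equation}
Thus $\pounds_{\tilde{\bE}^\pm\bv}\bg$ and $\bE^\pm(\pounds_\bv\bg)$ both solve $P(\bgamma) = \pounds_\bv\bg$.

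Next I would verify the support condition needed for uniqueness. Since $\supp(\tilde{\bE}^\pm\bv) \subset J^\pm(\supp\bv)$ and Lie differentiation does not enlarge support,
\begin{equation}
\supp\bigl(\pounds_{\tilde{\bE}^\pm\bv}\bg\bigr) \subset J^\pm(\supp\bv),
\end{equation}
and global hyperbolicity ensures that sets of the form $J^+(K)$ (resp.\ $J^-(K)$) with $K$ compact are past-compact (resp.\ future-compact). Therefore $\pounds_{\tilde{\bE}^\pm\bv}\bg$ has support of the type that uniquely characterises $\bE^\pm$ applied to a compactly supported source (the uniqueness statement quoted from \cite[Thm~3.2.11]{Bar} in the discussion before Lemma~\ref{lem:solEf}), and \eqref{eq:plan-pm} follows. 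Taking the difference of the $-$ and $+$ cases yields $\pounds_{\tilde{\bE}\bv}\bg = \bE(\pounds_\bv\bg)$.

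There is no real obstacle here; the only subtlety is keeping track of what "past/future compact" means so that the uniqueness clause is invoked on the correct class of solutions. Lemma~\ref{lem:PLiederiv} is doing the essential work by intertwining $(\Box+\Lambda)$ on vector fields with $P$ on symmetric tensor fields via the map $\bw\mapsto\pounds_\bw\bg$; everything else is a routine application of the standard uniqueness theory for normally hyperbolic operators.
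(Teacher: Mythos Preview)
Your proof is correct and follows essentially the same approach as the paper: apply Lemma~\ref{lem:PLiederiv} with $\bw=\tilde{\bE}^\pm\bv$ to see that $\pounds_{\tilde{\bE}^\pm\bv}\bg$ solves $P(\bgamma)=\pounds_\bv\bg$, and then invoke uniqueness of solutions with past/future-compact support to identify it with $\bE^\pm(\pounds_\bv\bg)$. The paper's argument is more terse but structurally identical; your additional remarks on support are a welcome clarification.
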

\begin{proof} Using Lemma~\ref{lem:PLiederiv}, 
$P(\pounds_{\tilde{\bE}^{\pm}\bv}\bg)
= \pounds_{(\Box+\Lambda)\tilde{\bE}^{\pm}\bv}\bg = \pounds_{\bv}\bg$. 
Thus $\pounds_{\tilde{\bE}^{\pm}\bv}\bg =\bE^{\pm} \pounds_{\bv}\bg$ 
and the result follows by uniqueness of solutions with past/future-compact support. \end{proof}

We are now able to prove the main result of this subsection.
\begin{thm}\label{thm:solutionspace}
Any  $\bgamma \in \mathscr{S}(M)$ is gauge equivalent to a $\bE\bfab$ for some $\bfab \in C^{\infty}_{0}(S^{0}_{2}(M))$ satisfying $\nabla\cdot\overline{\bfab}=0$.
\end{thm}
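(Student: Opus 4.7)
The plan is to assemble the results already established. By Corollary~\ref{cor:deDsplitS}, any $\bgamma\in\mathscr{S}(M)$ is gauge equivalent (modulo $\mathscr{G}(M)$) to some $\bgamma^{dD}\in\mathscr{S}^{dD}(M)$, which satisfies $P(\bgamma^{dD})=0$ by \eqref{eqn:ddeqnmotion} and Lemma~\ref{lem:traceoperatorcommute}. Lemma~\ref{lem:solEf} then produces a test tensor $\bfab\in C^\infty_0(S^0_2(M))$ with $\bgamma^{dD}=\bE\bfab$. The only remaining task is to upgrade $\bfab$, preserving gauge equivalence, to one whose trace reverse is divergence free.

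The key observation is that, because $\bE\bfab=\bgamma^{dD}\in\mathscr{S}^{dD}(M)$, Theorem~\ref{thm:fundsolnsDD} supplies $\bh\in C^\infty_0(T^0_1(M))$ with $\nabla\cdot\overline{\bfab}=(\Box+\Lambda)\bh$. I would then define
\[
\bfab':=\bfab-\pounds_{\bh^{\sharp}}\bg,
\]
which still belongs to $C^\infty_0(S^0_2(M))$ since $\bh^{\sharp}\in C^\infty_0(T^1_0(M))$. Applying Lemma~\ref{lem:divtracereversegauge} one obtains immediately
\[
\nabla\cdot\overline{\bfab'}=\nabla\cdot\overline{\bfab}-(\Box+\Lambda)\bh=0,
\]
as required.

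Finally, I would verify that $\bE\bfab'$ is gauge equivalent to $\bgamma^{dD}$, and hence to $\bgamma$. By linearity of $\bE$ together with Lemma~\ref{lem:liederivE},
\[
\bE\bfab'=\bE\bfab-\bE(\pounds_{\bh^{\sharp}}\bg)=\bgamma^{dD}-\pounds_{\tilde{\bE}\bh^{\sharp}}\bg.
\]
The vector field $\tilde{\bE}\bh^{\sharp}$ is spacelike-compact by the support properties of the Green's operators of $(\Box+\Lambda)$ acting on vector fields, so $\pounds_{\tilde{\bE}\bh^{\sharp}}\bg\in\mathscr{G}(M)$, yielding the required equivalence. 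There is no genuinely hard step; the only subtlety is tracking support classes so that every Lie derivative used sits in $\mathscr{G}(M)$, which is ensured here by the compact support of $\bh$ and the support properties of $\tilde{\bE}$.
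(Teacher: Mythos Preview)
Your proof is correct and follows essentially the same approach as the paper: both combine Corollary~\ref{cor:deDsplitS}, Lemma~\ref{lem:solEf} and Theorem~\ref{thm:fundsolnsDD} to obtain $\bgamma\sim\bE\bfab$ with $\nabla\cdot\overline{\bfab}=(\Box+\Lambda)\bh$, then subtract $\pounds_{\bh^\sharp}\bg$ from $\bfab$ and use Lemmas~\ref{lem:divtracereversegauge} and~\ref{lem:liederivE} to verify divergence-freeness and gauge equivalence respectively. The only cosmetic difference is the order in which the two verifications are presented.
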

\begin{proof}  Combining Corollary~\ref{cor:deDsplitS}, Lemma~\ref{lem:solEf} and Theorem~\ref{thm:fundsolnsDD} gives $\bgamma \thicksim \bE \tilde{\bfab}$ with $\nabla\cdot\overline{\tilde{\bfab}}=(\Box+\Lambda)\bv^{\flat}$ for some $\bv \in C^{\infty}_{0}(T^{1}_{0}(M))$. Thus we also have
$\bgamma \thicksim \bE \tilde{\bfab}-\pounds_{\tilde{\bE}\bv}\bg =
\bE (\tilde{\bfab}-\pounds_{\bv}\bg)$, by Lemma~\ref{lem:liederivE}. 
Set $\bfab:=\tilde{\bfab}-\pounds_{\bv}\bg$, which is smooth and compactly supported on $M$. Calculating the divergence of the trace-reversal  of $\bfab$ gives
\begin{equation}
\nabla\cdot\overline{\bfab}=\nabla\cdot\overline{\tilde{\bfab}}-\nabla\cdot(\overline{\pounds_{\bv}\bg}) = (\Box+\Lambda)\bv^{\flat}-(\Box+\Lambda)\bv^{\flat}=0
\end{equation}
using Lemma~\ref{lem:divtracereversegauge}. \end{proof}

The remaining lemmas of this subsection are used in section~\ref{sec:quantization}. We will require the notion of past/future compactness of a subset of spacetime. A subset $S \subset M$ is said to be past/future compact if $J^{-}(p) \cap S$ or $J^{+}(p) \cap S$ is compact for all $p \in M$.
\begin{lem}
\label{lem:Efpastfuture}Given $\bfab \in C^{\infty}_{0}(S^{0}_{2}(M))$, if $\bgamma \in \mathscr{T}(M)$ solves $L_{ab}(\bgamma)=f_{ab}$ with $\supp\hspace{0.1cm}\bgamma$ compact to the past/future then  $\bgamma \thicksim -2\bE^{\pm}\overline{\bfab}$.
\end{lem}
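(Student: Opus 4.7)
I focus on the case of past-compact support; the future-compact case follows by the same argument with $\bE^{+}$ replaced by $\bE^{-}$. The plan is to show that $-2\bE^{+}\overline{\bfab}$ itself solves the inhomogeneous linearized Einstein equation $L_{ab}(\cdot)=f_{ab}$, so that its difference with $\bgamma$ is a homogeneous, past-compact solution, which I then argue lies in $\mathscr{G}(M)$.

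\emph{Step 1.} The linearized contracted Bianchi identity $\nabla^{a}L_{ab}(\bgamma)=0$ holds for all smooth symmetric $\bgamma$ on a cosmological vacuum background (by linearizing $\nabla^{a}(G_{ab}+\Lambda g_{ab})\equiv 0$), so the hypothesis $L(\bgamma)=\bfab$ forces $\nabla\cdot\bfab=0$. By Lemma~\ref{lem:overlinegammaEf}, the trace-reverse of $-2\bE^{+}\overline{\bfab}$ is $-2\bE^{+}\bfab$. Taking the divergence of $P(\bE^{+}\bfab)=\bfab$ and applying Lemma~\ref{lem:nablaP} gives $(\Box+\Lambda)(\nabla\cdot\bE^{+}\bfab)=\nabla\cdot\bfab$, and uniqueness of the retarded Green's operator $\hat{\bE}^{+}$ for $(\Box+\Lambda)$ on covectors then yields $\nabla\cdot(\bE^{+}\bfab)=\hat{\bE}^{+}(\nabla\cdot\bfab)=0$. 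Hence $-2\bE^{+}\overline{\bfab}\in\mathscr{T}^{dD}(M)$, and Theorem~\ref{thm:Lident} collapses to $2L_{ab}(-2\bE^{+}\overline{\bfab})=-P(-2\bE^{+}\bfab)_{ab}=2f_{ab}$.

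\emph{Step 2.} Set $\bchi:=\bgamma+2\bE^{+}\overline{\bfab}\in\mathscr{T}(M)$; by Step~1 and linearity, $L_{ab}(\bchi)=0$. Both summands have past-compact support, so $\supp\bchi$ is past-compact; combined with spacelike-compactness this means $\supp\bchi\subset J^{+}(K)$ for some compact $K\subset M$. Now refine Theorem~\ref{thm:deDsplit}: using the retarded Green's operator for $(\Box+\Lambda)$ on covector fields, extended to past-compact, spacelike-compact sources, choose $\bw\in C^{\infty}_{SC}(T^{1}_{0}(M))$ of past-compact support solving $(\Box+\Lambda)\bw^{\flat}=-\nabla\cdot\overline{\bchi}$. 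Then $\bchi^{dD}:=\bchi-\pounds_{\bw}\bg\in\mathscr{S}^{dD}(M)$ has support in $J^{+}(K')$ for some compact $K'$, and $P(\overline{\bchi^{dD}})=0$. Choose a smooth spacelike Cauchy surface $\Sigma$ lying entirely in the past of $K'$ (available in any globally hyperbolic spacetime via a Cauchy time function); then $\overline{\bchi^{dD}}$ and $\nabla_{\bn}\overline{\bchi^{dD}}$ vanish on $\Sigma$. Since $P$ is normally hyperbolic, uniqueness of the Cauchy problem \cite[Cor.~3.2.4]{Bar} forces $\bchi^{dD}=0$, so $\bchi=\pounds_{\bw}\bg\in\mathscr{G}(M)$, which is $\bgamma\thicksim -2\bE^{+}\overline{\bfab}$.

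The main delicacy is the past-compact refinement of Theorem~\ref{thm:deDsplit} used in Step~2: one needs $\bw$ to inherit the past-compact, spacelike-compact support of $\nabla\cdot\overline{\bchi}$, rather than merely being spacelike-compact as in the original splitting that used Cauchy data on some fixed slice. This amounts to extending $\hat{\bE}^{+}$ to past-compact, spacelike-compact sources; it is a routine strengthening of \cite[Cor.~3.4.3]{Bar} but warrants explicit verification, and is the only non-standard ingredient in the argument.
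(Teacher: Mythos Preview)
Your proof is correct and relies on the same ingredients as the paper's: the past-compact refinement of the de Donder splitting (the remark after Theorem~\ref{thm:deDsplit}), Theorem~\ref{thm:Lident}, and uniqueness of past-compact solutions to normally hyperbolic equations. The paper takes a slightly more direct route by gauge-transforming $\bgamma$ itself to de Donder form, so that $P(\overline{\bgamma'})=-2\bfab$ follows immediately from Theorem~\ref{thm:Lident} and uniqueness gives $\bgamma'=-2\bE^\pm\overline{\bfab}$ in one step; your approach instead first verifies that $-2\bE^+\overline{\bfab}$ is a de Donder solution of the inhomogeneous equation and then handles the homogeneous difference, which is a minor reordering rather than a genuinely different argument.
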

\begin{proof} By Theorem~\ref{thm:deDsplit} (and the remark thereafter) there exists a
$\bw\in C^\infty(T^1_0(M))$ with support compact to the past/future such that  $\bgamma^{\prime}=\bgamma+\pounds_{\bw}\bg$ obeys  $\nabla \cdot \overline{\bgamma}^{\prime}=0$. Using that $L_{ab}(\pounds_{\bw}\bg)=0$ we have
$L_{ab}(\bgamma^{\prime})=L_{ab}( \bgamma) = f_{ab}$, which simplifies, on account of the de Donder condition in conjunction with Theorem~\ref{thm:Lident}, to 
$P(\overline{\bgamma}^{\prime}) = -2 \bfab$ 
. The solutions, to this inhomogeneous equation, with past/future compact support are $\overline{\bgamma}^{\prime}=-2\bE^{\pm}\bfab$. Lemma~\ref{lem:overlinegammaEf} entails that $\bgamma^{\prime}=-2\bE^{\pm}\overline{\bfab}$. Undoing the gauge transformation gives the required result.
\end{proof}

\begin{lem}
\label{lem:fequalsLh} Given a $\bfab \in C^{\infty}_{0}(S^0_2 (M))$ satisfying $\nabla \cdot \bfab=0$, suppose that $\bE \overline{\bfab}=\bE \pounds_{\bv}\bg$ for some $\bv \in C^{\infty}_{0}(T^{1}_{0}(M))$. Then
there exists $\bh \in C^{\infty}_{0}(S^{0}_{2}(M))$ such that
\begin{equation}
\bfab = -2 L(\bh).
\end{equation}
\end{lem}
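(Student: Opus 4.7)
The plan is to produce an explicit $\bh$ by exploiting the fundamental exact sequence for the normally hyperbolic operator $P$ and then using the identity of Theorem~\ref{thm:Lident} to convert the resulting $P$-preimage into an $L$-preimage. The hypothesis $\nabla\cdot\bfab=0$ will only enter at the end, where it will be crucial in killing a leftover pure-gauge term.

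First, I trace-reverse the hypothesis. Using Lemma~\ref{lem:overlinegammaEf}, $\bE\overline{\bfab}=\bE\pounds_{\bv}\bg$ implies $\bE(\bfab-\overline{\pounds_{\bv}\bg})=0$, with the argument lying in $C^\infty_0(S^0_2(M))$. By the exactness of the fundamental sequence for the normally hyperbolic operator $P$ (see \cite[Thm~3.4.7]{Bar}, applied to symmetric rank-two tensors), there exists $\bj\in C^\infty_0(S^0_2(M))$ with
\begin{equation}
P(\bj)=\bfab-\overline{\pounds_{\bv}\bg}.
\end{equation}

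Next I set $\bh:=\overline{\bj}\in C^\infty_0(S^0_2(M))$, so that $\overline{\bh}=\bj$, and apply Theorem~\ref{thm:Lident}:
\begin{equation}
-2L(\bh)=P(\bj)-\overline{\pounds_{(\nabla\cdot\bj)^{\sharp}}\bg}=\bfab-\overline{\pounds_{\bu}\bg},
\end{equation}
where $\bu:=\bv+(\nabla\cdot\bj)^{\sharp}\in C^\infty_0(T^1_0(M))$ is compactly supported since both $\bv$ and $\nabla\cdot\bj$ are. All that remains is to show $\bu=0$.

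Here the hypothesis $\nabla\cdot\bfab=0$ finally enters. Taking the divergence of $\bfab=-2L(\bh)+\overline{\pounds_{\bu}\bg}$ and using Lemma~\ref{lem:divtracereversegauge} on the last term, I obtain
\begin{equation}
0=\nabla^{a}f_{ab}=-2\nabla^{a}L_{ab}(\bh)+(\Box+\Lambda)u_{b}.
\end{equation}
The generalized Bianchi identity $\nabla^{a}L_{ab}(\bh)=0$ (which follows from Theorem~\ref{thm:Lident}: the two contributions to $\nabla^a L_{ab}(\bh)$ cancel by Lemmas~\ref{lem:nablaP} and~\ref{lem:divtracereversegauge}) then gives $(\Box+\Lambda)\bu^{\flat}=0$. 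Because $\bu$ is compactly supported, choosing a Cauchy surface to the past of $\supp\bu$ provides vanishing initial data for this normally hyperbolic equation, so uniqueness of solutions (e.g.\ \cite[Cor.~3.2.4]{Bar}) forces $\bu=0$. Hence $\bfab=-2L(\bh)$, as desired. The principal technical hurdle is arranging the bookkeeping so that the compact-support hypotheses on $\bj$ and $\bv$ translate into compact support for $\bu$, since it is precisely this that turns the wave equation $(\Box+\Lambda)\bu^\flat=0$ into the conclusion $\bu=0$.
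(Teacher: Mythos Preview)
Your proof is correct and follows essentially the same approach as the paper's. Both arguments use the exact sequence for $P$ (\cite[Thm~3.4.7]{Bar}) to obtain a compactly supported preimage, invoke Theorem~\ref{thm:Lident} to relate $P$ and $L$, and then use the divergence hypothesis together with compact support to kill a vector field via the normally hyperbolic equation $(\Box+\Lambda)\bu^\flat=0$; the only difference is cosmetic ordering---the paper first shows $\bv^\flat=-\nabla\cdot\overline{\bh}$ and then substitutes, whereas you substitute first and eliminate the residual $\bu=\bv+(\nabla\cdot\bj)^\sharp$ afterwards.
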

\begin{proof} $\bE(\overline{\bfab}-\pounds_{\bv}\bg)=0$ and so by \cite[Thm~3.4.7]{Bar}
\begin{equation}
\label{eqn:fgaugeP}
\overline{\bfab} =\pounds_\bv\bg+  P(\bh)
\end{equation}
for some $\bh\in C_0^\infty(S_2^0(M))$. But for divergence-free $\bfab$, an application of Lemmas~\ref{lem:divtracereversegauge} and~\ref{lem:nablaP} gives $(\Box+\Lambda)(\bv^{\flat} + \nabla\cdot\overline{\bh})=0$ and hence
$\bv^{\flat} = -\nabla\cdot\overline{\bh}$. Reinserting this in \eqref{eqn:fgaugeP}, trace-reversing and applying
Theorem~\ref{thm:Lident} gives the result. \end{proof}

\section{Phase Space and Quantization}
\label{sec:phasespace}

\subsection{Phase Space}

We now construct the (complexified) phase space for linearized gravity on cosmological vacuum background spacetimes. Initially we consider the space $\mathscr{S}(M)$, which by the results of \cite{LeeWald} (equation~(2.21) and onwards in that reference, though our conventions differ) applied to the Lagrangian \eqref{eqn:lagrangian}, can be endowed with a complex-bilinear pre-symplectic product, whose action on perturbations $\bgamma^{1},\bgamma^{2} \in \mathscr{S}(M)$ is 
\begin{equation}
\label{eqn:presymp}
\omega_{\Sigma}(\bgamma^{1},\bgamma^{2}) = \int_{\Sigma}{(
\gamma^{1}_{ab}\pi^{ab}_{2} - \gamma^{2}_{ab}\pi^{ab}_{1})dvol_{\bh}},
\end{equation}
where $\Sigma$ is a spacelike Cauchy surface with future-pointing unit normal vector $\bn$,  $dvol_{\bh}$ denotes the volume element on $\Sigma$ associated with the induced spatial metric $\bh$ and $\bpi$ is defined in terms of the covariant conjugate momentum $\bPi$, given in \eqref{eqn:momentum}, by
\begin{equation}
\label{eqn:piab}
\pi^{ab}:=-n_{c}\Pi^{cab}.
\end{equation}
(Note that $n_c$ is {\em past}-pointing as a covector, owing to our signature convention.)
The product \eqref{eqn:presymp} is independent of the choice of Cauchy surface.
\begin{lem}
\label{lem:conserved}
Given $\bgamma^{1},\bgamma^{2} \in \mathscr{S}(M)$ and two spacelike Cauchy surfaces $\Sigma,\Sigma^{\prime}$ then $\omega_{\Sigma}(\bgamma^{1},\bgamma^{2})=\omega_{\Sigma^{\prime}}(\bgamma^{1},\bgamma^{2})$.
\end{lem}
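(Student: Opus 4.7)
The plan is to express $\omega_\Sigma(\bgamma^1,\bgamma^2)$ as the flux of a spacetime current $\bJ$ through $\Sigma$, verify that $\bJ$ is conserved on solutions, and then apply Stokes' theorem on the region between the two Cauchy surfaces.

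First, using \eqref{eqn:piab} I would rewrite
\[
\omega_\Sigma(\bgamma^1,\bgamma^2) = \int_\Sigma n_c J^c\, dvol_\bh,
\qquad
J^c := \gamma^{2}_{ab}\Pi^{cab}_{1} - \gamma^{1}_{ab}\Pi^{cab}_{2},
\]
where $\Pi^{cab}_i$ denotes the covariant momentum \eqref{eqn:PIabc} associated with $\bgamma^i$. The next step is to show $\nabla_c J^c = 0$. Expanding the divergence and using the Euler--Lagrange equation \eqref{eqn:Eulerlagrange}, which for solutions reduces to $\nabla_c \Pi^{cab}_i = 2S^{abcd}\gamma^i_{cd}$, the ``undifferentiated'' terms combine to
\[
2S^{abcd}\bigl(\gamma^{2}_{ab}\gamma^{1}_{cd} - \gamma^{1}_{ab}\gamma^{2}_{cd}\bigr) = 0,
\]
by the stated $ab\leftrightarrow cd$ symmetry of $S^{abcd}$. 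Using $\Pi^{cab} = 2T^{cabdef}\nabla_d\gamma_{ef}$ from \eqref{eqn:momentum}, the remaining terms take the form
\[
2T^{cabdef}\bigl(\nabla_c\gamma^{2}_{ab}\,\nabla_d\gamma^{1}_{ef} - \nabla_c\gamma^{1}_{ab}\,\nabla_d\gamma^{2}_{ef}\bigr),
\]
which vanishes upon relabelling dummy indices and invoking the $abc\leftrightarrow def$ symmetry of $T^{abcdef}$.

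Finally, suppose $\Sigma'$ lies to the future of $\Sigma$ (the general case reducing to this by inserting an intermediate Cauchy surface to the future of both), and let $D$ denote the region of $M$ between them. Since both $\bgamma^1$ and $\bgamma^2$ have spacelike-compact support, so does $\supp J^c$, and global hyperbolicity guarantees that $D\cap\supp J^c$ is compact (using, e.g., a foliation by Cauchy surfaces). Applying the divergence theorem to $\nabla_c J^c$ on $D$, with due care for the orientation and for the fact that $n_c$ is past-pointing in our convention, yields $\omega_{\Sigma'}(\bgamma^1,\bgamma^2) = \omega_{\Sigma}(\bgamma^1,\bgamma^2)$.

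The main obstacle is bookkeeping rather than any conceptual depth: one must track the symmetries of $T^{abcdef}$ and $S^{abcd}$ carefully, and keep the sign conventions straight when passing from the volume integral to the boundary integrals. The spacelike-compactness of the perturbations is essential, both to ensure that all integrals converge and to exclude the possibility of nontrivial contributions ``at spatial infinity'' that would otherwise appear as lateral boundary terms of $D$.
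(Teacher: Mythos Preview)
Your proof is correct and follows essentially the same route as the paper: define the current $J^c=\gamma^{2}_{ab}\Pi^{cab}_{1}-\gamma^{1}_{ab}\Pi^{cab}_{2}$, show $\nabla_c J^c=0$ via \eqref{eqn:Eulerlagrange} together with the symmetries of $S^{abcd}$ and $T^{abcdef}$, and apply the divergence theorem on the region between the two Cauchy surfaces. Your additional remarks on compactness of $D\cap\supp J^c$ and on reducing the general case via an intermediate Cauchy surface are more explicit than the paper's treatment, but the substance is the same.
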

\begin{proof} Defining the current of  $\bgamma^{1}$ and $\bgamma^{2}$ to be
$j^{c}(\bgamma^{1},\bgamma^{2}):=
\gamma^{2}_{ab}\Pi^{cab}_{1}-\gamma^{1}_{ab}\Pi^{cab}_{2}$,
the pre-symplectic product of these perturbations is thus
\begin{equation}
\omega_{\Sigma}(\bgamma^{1},\bgamma^{2}) = \int_{\Sigma}{n_{c}j^{c}(\bgamma^{1},\bgamma^{2})dvol_{\bh}}.
\end{equation}
Now, the divergence of the current is $
\nabla_{c}j^{c}=\gamma^{2}_{ab}L^{ab}(\bgamma^{1})-\gamma^{1}_{ab}L^{ab}(\bgamma^{2})=0$,
where we have used \eqref{eqn:Eulerlagrange} and symmetry properties of $S^{abcd}$ and $T^{abcdef}$. Using the divergence theorem over the region bounded by the two Cauchy surfaces $\Sigma,\Sigma^{\prime}$ gives the desired result. \end{proof}

Due to the preceding lemma, the $\Sigma$ will be dropped from $\omega_\Sigma$ from this point on if we are dealing purely with solutions.

To make the pre-symplectic product into a symplectic product, it is
necessary to account for the degeneracies of \eqref{eqn:presymp}, that is, non-trivial solutions whose pre-symplectic product with all solutions is zero. The subspace of degeneracies is also known as the {\em radical} of the pre-symplectic form $\omega$. The next lemma shows that $\omega$ is gauge invariant and therefore (as is well-known) pure gauge solutions are degeneracies on the space of solutions.
Here, we work with the broader class of pure gauge solutions  $\hat{\mathscr{G}}(M)$ defined in~\eqref{eq:Ghatdef}; that is, those spacelike-compact pure
gauge solutions induced by arbitrary smooth vector fields.  Recall that 
$\hat{\mathscr{G}}(M)$ coincides with $\mathscr{G}(M)$ if $M$ has
compact Cauchy surfaces, but the two can (but do not always) differ in the non-compact case.

\begin{lem}
\label{lem:puregaugedegen}
$\hat{\mathscr{G}}(M)$ is contained in the radical of $\omega$.
\end{lem}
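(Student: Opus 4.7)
The plan is to exploit the fact that although $\bw$ may not have compact support on a Cauchy surface, the perturbation $\bgamma^1 = \pounds_\bw\bg$ and the solution $\bgamma^2\in\mathscr{S}(M)$ both have spacelike-compact support, so their restrictions to any smooth spacelike Cauchy surface $\Sigma$ are compactly supported in $\Sigma$. This on-slice compactness is what will allow any integration-by-parts argument to close without worrying about boundary terms at spatial infinity.

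The key step is to establish an off-shell Noether-type identity for the symplectic current $j^c$ of Lemma~\ref{lem:conserved}. Concretely, I would aim to show that there exists an antisymmetric tensor $Q^{cd}(\bw,\bgamma)$, separately linear in each of $\bw$ and $\bgamma$ and their derivatives, such that for all $\bw\in C^\infty(T^1_0(M))$ and $\bgamma\in C^\infty(S^0_2(M))$,
\begin{equation}
j^c(\pounds_\bw\bg,\bgamma) = 2 w_a L^{ac}(\bgamma) + \nabla_d Q^{cd}(\bw,\bgamma).
\end{equation}
This reflects the diffeomorphism invariance of the Lagrangian~\eqref{eqn:lagrangian} and the identity $L(\pounds_\bw\bg)=0$ on cosmological vacuum backgrounds. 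I would verify it by direct tensor computation: writing out $j^c = \gamma_{ab}\Pi_1^{cab} - (\pounds_\bw\bg)_{ab}\Pi_2^{cab}$ via~\eqref{eqn:PIabc}, integrating by parts on spacetime to push both derivatives off $\bw$ in the first term, recognising the $2 w_a L^{ac}(\bgamma)$ piece by comparison with~\eqref{eqn:symmL}, and showing that the remainder reorganises into the divergence of a manifestly antisymmetric $Q^{cd}$.

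Granting the identity, the proof concludes quickly. For $\bgamma^2\in\mathscr{S}(M)$ we have $L(\bgamma^2)=0$, so by Lemma~\ref{lem:conserved} and a choice of any spacelike Cauchy surface $\Sigma$,
\begin{equation}
\omega(\bgamma^1,\bgamma^2) = \int_\Sigma n_c \nabla_d Q^{cd}(\bw,\bgamma^2)\, dvol_\bh.
\end{equation}
A standard manipulation exploiting the antisymmetry of $Q^{cd}$ (equivalently, Stokes' theorem applied to the pullback to $\Sigma$ of the spacetime 2-form Hodge-dual to $Q$) rewrites this as a boundary integral at the ``sphere at infinity'' of $\Sigma$ of a quantity built linearly from $Q^{cd}|_\Sigma$. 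Since $Q^{cd}$ is linear in $\bgamma^2$ and its first derivative, and $\bgamma^2|_\Sigma$ is compactly supported, the boundary integrand vanishes identically and the result is zero.

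The main obstacle is the explicit verification of the Noether identity: although its existence is essentially guaranteed by the general-covariance theory of the Lee--Wald symplectic structure~\cite{LeeWald}, the direct derivation requires a somewhat lengthy manipulation of~\eqref{eqn:PIabc} and~\eqref{eqn:symmL}. Crucially, the non-compactness of $\bw$ never actually causes difficulty, because in the surviving terms $\bw$ always appears paired with $\bgamma^2$ or its derivatives, both compactly supported on $\Sigma$, which is precisely why the argument extends from $\mathscr{G}(M)$ to the larger space $\hat{\mathscr{G}}(M)$.
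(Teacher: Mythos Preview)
Your approach is correct and essentially equivalent in content to the paper's, but the execution differs. The paper invokes Theorem~\ref{thm:C} (proved in Appendix~\ref{sec:symprodconstraint}), which establishes directly that
\[
\omega_{\Sigma}(\bgamma, \pounds_{\bw}\bg) = 2\int_{\Sigma} w^{b}C^{\Sigma}_{b}(\Data_{\Sigma}(\bgamma))\,dvol_{\bh},
\]
so the right-hand side vanishes because $\bgamma\in\mathscr{S}(M)$ satisfies the linearized constraints. The proof of Theorem~\ref{thm:C} does not use a pointwise Noether identity with an explicit superpotential $Q^{cd}$; instead it employs the ``Friedman trick'' of replacing $\bw$ near $\Sigma$ by a vector field $\bv$ that vanishes to the far past, and then applies the divergence theorem on the spacetime region $M^{-}=I^{-}(\Sigma)$. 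This converts both $\omega_\Sigma(\bgamma,\pounds_\bw\bg)$ and $\int_\Sigma w^a L_{ab}(\bgamma)n^b$ into the same bulk integral $-\int_{M^-}2\nabla_{(a}v_{b)}L^{ab}(\bgamma)\,dvol_\bg$, whence they coincide.

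What each buys: the paper's route sidesteps any need to compute or even write down $Q^{cd}$, at the price of the somewhat artificial cutoff $\bv$ and a bulk integration. Your route is more in the Lee--Wald spirit and stays on the single hypersurface $\Sigma$, but the ``main obstacle'' you flag---the explicit verification of the pointwise identity $j^c(\pounds_\bw\bg,\bgamma)=2w_a L^{ac}(\bgamma)+\nabla_d Q^{cd}$---is genuine work that you have not carried out. Both arguments ultimately rely on the same observation you emphasise: the boundary or surface terms are controlled by the compact support of $\bgamma|_\Sigma$, not of $\bw$, which is why the result extends from $\mathscr{G}(M)$ to $\hat{\mathscr{G}}(M)$.
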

\begin{proof} Suppose $\bw\in C^\infty(T^1_0(M))$ and $\bgamma\in\mathscr{S}(M)$, and let $\Sigma$ be a smooth spacelike Cauchy surface. From Theorem~\ref{thm:C} we have the identity
\begin{equation}
\omega_{\Sigma}(\bgamma, \pounds_{\bw}\bg) = 2\int_{\Sigma}{w^{b}C^{\Sigma}_{b}(\Data_{\Sigma}(\bgamma))dvol_{\bh}}
\end{equation}
and the right-hand side vanishes because $\bC^{\Sigma}(\Data_{\Sigma}(\bgamma))=0$. \end{proof}

At least in the case that $M$ has compact Cauchy surfaces, we
may prove that this exhausts the space of degeneracies of~\eqref{eqn:presymp}. It is natural to conjecture that the same is true for a large class of background spacetimes with non-compact Cauchy surfaces as well. 
\begin{thm}\label{thm:radical}
If $M$ has compact Cauchy surfaces, 
the radical of $\omega$ is precisely the subspace of pure gauge solutions $\hat{\mathscr{G}}(M)$ (which coincides with $\mathscr{G}(M)$ in this case).
That is, given $\bgamma^{\prime} \in \mathscr{S}(M)$ such that $\omega (\bgamma^{\prime},\bgamma)=0$ for all $\bgamma \in \mathscr{S}(M)$, then $\bgamma^{\prime} \in \hat{\mathscr{G}}(M)$.
\end{thm}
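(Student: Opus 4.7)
The plan is to pass to Cauchy data on a fixed compact Cauchy surface $\Sigma$ and use the ADM splitting to show that symplectic orthogonality to all solutions forces the data of $\bgamma'$ to be pure-gauge data. By Lemma~\ref{lem:conserved} the symplectic product depends only on $\Data_\Sigma(\bgamma')$ and $\Data_\Sigma(\bgamma)$, and the computation in \eqref{eqn:presymp} rewrites $\omega(\bgamma',\bgamma)$ as a canonical bilinear pairing of the associated ADM perturbation data $(\bh,\bp)$ on $\Sigma$, constrained to lie in $\ker \bC^\Sigma$. Because $\Sigma$ is compact, every constraint-satisfying initial datum is automatically compactly supported, so by Theorem~\ref{thm:existence} the set of data that arises from solutions $\bgamma \in \mathscr{S}(M)$ is exactly $\ker \bC^\Sigma$. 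Hence the hypothesis becomes: the data $\Data_\Sigma(\bgamma')$ is symplectically orthogonal to all of $\ker \bC^\Sigma$.

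Next I would invoke Moncrief's splitting results~\cite{Moncriefdecomp}, summarised in Appendix~\ref{sec:ADMnondegen}, which decompose $\ker \bC^\Sigma$ $L^2$-orthogonally as a sum of a pure-gauge subspace (generated by linearized lapse and shift data, i.e.\ data of the form $\Data_\Sigma(\pounds_\bw\bg)$) and a complementary ``dynamical'' subspace (transverse-traceless in an appropriate sense). With this splitting in place, I would compute the ADM form of the symplectic pairing on each block: by Lemma~\ref{lem:puregaugedegen}, the pure-gauge block is isotropic and pairs trivially with everything, while on the dynamical block the symplectic form reduces to the standard TT pairing, which is non-degenerate. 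Therefore, decomposing $\Data_\Sigma(\bgamma') = \Data_\Sigma^{gauge} + \Data_\Sigma^{dyn}$, the orthogonality hypothesis together with non-degeneracy on the dynamical block forces $\Data_\Sigma^{dyn}=0$.

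Thus $\Data_\Sigma(\bgamma')$ equals the Cauchy data of some pure-gauge perturbation $\pounds_\bw\bg$ for a smooth vector field $\bw$ defined in a neighbourhood of $\Sigma$; extending $\bw$ smoothly to all of $M$ (which is possible since $\Sigma$ is compact, with the extension not affecting $\Data_\Sigma$), we have two solutions in $\mathscr{S}(M)$ with the same initial data. Applying Theorem~\ref{thm:uniqueness} (noting that compactness of $\Sigma$ guarantees that the extended $\pounds_\bw\bg$ lies in $\mathscr{T}(M)$, hence in $\hat{\mathscr{G}}(M)$), we conclude $\bgamma' - \pounds_\bw\bg \in \mathscr{G}(M) \subset \hat{\mathscr{G}}(M)$, so $\bgamma' \in \hat{\mathscr{G}}(M)$. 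The final statement that $\hat{\mathscr{G}}(M) = \mathscr{G}(M)$ in the compact Cauchy surface case has already been recorded in the discussion following \eqref{eq:Ghatdef}.

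The main obstacle I anticipate is the second step: correctly identifying the ADM form of the symplectic pairing on initial data, matching it cleanly with Moncrief's splitting, and verifying non-degeneracy on the dynamical block. In particular, one must carefully relate the four-dimensional covariant object $\bpi^{ab}$ defined in \eqref{eqn:piab} to the genuine ADM momentum perturbation conjugate to the intrinsic metric perturbation, and track which portion of the data is identified with the linearized lapse and shift (and hence is pure gauge). This is essentially the content that justifies relegating the technicalities to Appendix~\ref{sec:ADMnondegen}; the remainder of the argument is a clean application of non-degeneracy and the uniqueness theorem.
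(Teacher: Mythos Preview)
Your proposal follows the same overall strategy as the paper: reduce to ADM data on a compact $\Sigma$, invoke Moncrief's splitting results to show the data of $\bgamma'$ is pure gauge, then use Theorem~\ref{thm:uniqueness} to propagate this globally. Two technical points distinguish the paper's implementation from yours. First, to make the identification between the covariant symplectic product \eqref{eqn:presymp} and the ADM symplectic product precise (the obstacle you correctly flag), the paper first passes $\bgamma'$ (and, without loss, the test solutions $\bgamma$) to the \emph{synchronous} gauge near $\Sigma$ via Theorem~\ref{thm:synchronous}; in Gaussian normal coordinates with vanishing linearized lapse and shift the two symplectic forms coincide directly, avoiding any delicate block analysis of $\bpi^{ab}$. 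Second, rather than splitting $\ker D\Phi$ into gauge plus ``dynamical'' and arguing non-degeneracy on the TT block, the paper uses only the \emph{first} Moncrief splitting $\ker D\Phi \oplus \range D\Phi^*$ together with the identity $\omega^{ADM}(\cdot,\cdot)=\langle\cdot,U^{-1}\cdot\rangle$: ADM-symplectic orthogonality to $\ker D\Phi$ then translates immediately to $U^{-1}(\bgamma'^{(3)},\bp')\in\range D\Phi^*$, i.e.\ $(\bgamma'^{(3)},\bp')\in\range(U\circ D\Phi^*)$, which is precisely the pure-gauge data. This is slightly cleaner than your route, though yours would also succeed.
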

The proof requires results of Moncrief on the ADM formulation, and is given in Appendix~\ref{sec:ADMnondegen}. 

In any spacetime for which $\hat{\mathscr{G}}(M)$ is the radical of $\omega$,
we obtain the complexified phase space as the quotient space
\begin{equation}\label{eqn:PM_def}
\mathscr{P}(M):=\mathscr{S}(M)/\hat{\mathscr{G}}(M)
\end{equation}
with weakly non-degenerate symplectic product
\begin{equation}
\label{eqn:sympprod}
\omega([\bgamma^{1}],[\bgamma^{2}])=\int_{\Sigma}{(\gamma^{1}_{ab}\pi^{ab}_{2} - \gamma^{2}_{ab}\pi^{ab}_{1})dvol_{\bh}}.
\end{equation}
As this is independent of the choice of representative we may choose de Donder representatives $\tilde{\bgamma}^{i}$ ($i=1,2$) for each class, 
for which the associated momenta are $\tilde{\pi}^{ab}_{i}=
n_c \mathscr{D}^{cab}[\overline{\tilde{\gamma}}_{i}]$, where the 
differential operator 
\[
\mathscr{D}^{cab}[\gamma] = 
\frac{1}{2}\nabla^{c}\gamma^{ab}
-\frac{1}{2}\nabla^{b}\gamma^{ca}
-\frac{1}{2}\nabla^{a}\gamma^{cb}
\]
has the property $\nabla_c \mathscr{D}^{cab}[\bgamma] = 
\frac{1}{2}(P\bgamma)^{ab}- \Lambda\gamma^{ab}$ for de Donder $\bgamma$
on cosmological background spacetimes. 
%
%Hence \eqref{eqn:sympprod} may be written as
%\begin{equation}
%\label{eqn:symprodDDrep}
%\omega([\bgamma^{1}],[\bgamma^{2}])=\frac{1}{2}\int_{\Sigma}{\left\{\tilde{\gamma}^{1}_{ab}(n_{c}\nabla^{c}\overline{\tilde{\gamma}}^{ab}_{2}-2n_{c}\nabla^{a}\overline{\tilde{\gamma}}^{cb}_{2})-\tilde{\gamma}^{2}_{ab}(n_{c}\nabla^{c}\overline{\tilde{\gamma}}^{ab}_{1}-2n_{c}\nabla^{a}\overline{\tilde{\gamma}}^{cb}_{1}) \right\}dvol_{\bh}}.
%\end{equation}

Under complex conjugation, $\omega([\bgamma^{1}],[\bgamma^{2}])^* = 
\omega([\bgamma^{1*}],[\bgamma^{2*}])$.
The real phase space $\mathscr{P}_{\mathbb{R}}(M)$ is obtained by restricting all the above definitions
to real-valued solutions and real-valued gauge transformations. 

We now wish to find out what the symplectic product is in terms of a de Donder resprentative written in terms of the solution operator, as in Theorem~\ref{thm:fundsolnsDD}.
\begin{thm}
\label{thm:sympprodsolDD}
Given $\bgamma \in \mathscr{S}(M)$ and $\bfab \in C^{\infty}_{0}(S^{0}_{2}(M))$ satisfying $\nabla\cdot \bfab=0$, then
\begin{equation}
\label{eqn:symplecticEfgamDD}
\omega([\bE\overline{\bfab}],[\bgamma])=
-\frac{1}{2}\int_{M}{\gamma^{ dD}_{ab}f^{ab}dvol_{\bg}},
\end{equation}
where $\bgamma^{dD}$ denotes a de Donder representative of $[\bgamma]$.
\end{thm}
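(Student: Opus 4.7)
The plan is to reduce the conserved surface integral $\omega([\bE\overline{\bfab}],[\bgamma])$ to a spacetime integral by splitting $\bE=\bE^{-}-\bE^{+}$ and applying the divergence theorem to a current that fails to be conserved precisely where $\bfab$ is supported. Write $\bgamma^{\pm}:=\bE^{\pm}\overline{\bfab}$, so that $\bE\overline{\bfab}=\bgamma^{-}-\bgamma^{+}$ and $\supp\bgamma^{\pm}\subset J^{\pm}(\supp\bfab)$.

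The first task is to show that $\bgamma^{\pm}$ individually satisfies the de Donder condition. By Lemma~\ref{lem:overlinegammaEf} one has $\overline{\bgamma^{\pm}}=\bE^{\pm}\bfab$, and the argument in the proof of Theorem~\ref{thm:fundsolnsDD} gives $\nabla\cdot\bE^{\pm}\bfab=\hat{\bE}^{\pm}(\nabla\cdot\bfab)=0$ using the hypothesis $\nabla\cdot\bfab=0$. Combined with $P(\overline{\bgamma^{\pm}})=\bfab$, Theorem~\ref{thm:Lident} then yields the key identity $2L_{ab}(\bgamma^{\pm})=-f_{ab}$, whereas $L_{ab}(\bgamma^{dD})=0$ by hypothesis.

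Next, reprise the current $j^{c}(\bgamma^{1},\bgamma^{2})=\gamma^{2}_{ab}\Pi^{cab}_{1}-\gamma^{1}_{ab}\Pi^{cab}_{2}$ from the proof of Lemma~\ref{lem:conserved}, whose contraction with $n_{c}$ and integration over a Cauchy surface recovers $\omega_{\Sigma}$, and whose divergence is $\nabla_{c}j^{c}(\bgamma^{1},\bgamma^{2})=\gamma^{2}_{ab}L^{ab}(\bgamma^{1})-\gamma^{1}_{ab}L^{ab}(\bgamma^{2})$. By global hyperbolicity and compactness of $\supp\bfab$, choose Cauchy surfaces $\Sigma^{\pm}$ with $\supp\bfab\subset I^{-}(\Sigma^{+})\cap I^{+}(\Sigma^{-})$; acausality of $\Sigma^{\pm}$ then forces $\bgamma^{+}$ to vanish in an open neighbourhood of $\Sigma^{-}$ and $\bgamma^{-}$ to vanish in an open neighbourhood of $\Sigma^{+}$. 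Applying the divergence theorem to $j^{c}(\bgamma^{+},\bgamma^{dD})$ on the slab $R$ between $\Sigma^{-}$ and $\Sigma^{+}$ yields
\[
\omega_{\Sigma^{+}}(\bgamma^{+},\bgamma^{dD})-\omega_{\Sigma^{-}}(\bgamma^{+},\bgamma^{dD}) = -\int_{R}\nabla_{c}j^{c}\,dvol_{\bg} = \tfrac{1}{2}\int_{M}\gamma^{dD}_{ab}f^{ab}\,dvol_{\bg},
\]
using $\supp\bfab\subset R$ and the divergence identity from the previous paragraph; the overall sign follows from the convention that $n_{c}$ is past-pointing as a covector.

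Finally, since $\bgamma^{+}$ vanishes on $\Sigma^{-}$ and $\bgamma^{-}$ vanishes on $\Sigma^{+}$,
\[
\omega([\bE\overline{\bfab}],[\bgamma]) = \omega_{\Sigma^{+}}(\bgamma^{-},\bgamma^{dD})-\omega_{\Sigma^{+}}(\bgamma^{+},\bgamma^{dD}) = 0-\tfrac{1}{2}\int_{M}\gamma^{dD}_{ab}f^{ab}\,dvol_{\bg},
\]
as claimed. The main obstacle is bookkeeping of signs in the divergence theorem given the past-pointing covector convention; a short test on a constant vector field pins this down, after which the argument is a standard Green's-identity calculation. A secondary point is that de Donder-ness of $\bgamma^{\pm}$ individually relies on the full hypothesis $\nabla\cdot\bfab=0$, strictly stronger than what is needed to place $\bE\overline{\bfab}$ alone into the de Donder class.
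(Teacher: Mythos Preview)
Your proof is correct and follows the same overall Green's-identity/divergence-theorem strategy as the paper: choose Cauchy surfaces sandwiching $\supp\bfab$, integrate a suitable current over the slab, and use that $\bE^{\pm}\overline{\bfab}$ vanish on the far surfaces. The difference lies in the choice of current. The paper introduces the auxiliary operator $\mathscr{D}^{cab}[\bgamma]=\tfrac{1}{2}\nabla^{c}\gamma^{ab}-\tfrac{1}{2}\nabla^{b}\gamma^{ca}-\tfrac{1}{2}\nabla^{a}\gamma^{cb}$, checks that $\nabla_{c}\mathscr{D}^{cab}[\bgamma]=\tfrac{1}{2}(P\bgamma)^{ab}-\Lambda\gamma^{ab}$ for de Donder $\bgamma$, and then builds its current from $\mathscr{D}$ applied to $\bE^{+}\bfab$ and $\overline{\bgamma^{dD}}$; the cross terms in the divergence cancel by the de Donder condition, leaving $\tfrac{1}{2}\gamma^{dD}_{ab}f^{ab}$. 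You instead recycle the Lagrangian current $j^{c}$ from Lemma~\ref{lem:conserved}, whose divergence is already known to be $\gamma^{2}_{ab}L^{ab}(\bgamma^{1})-\gamma^{1}_{ab}L^{ab}(\bgamma^{2})$, and then invoke the clean identity $2L_{ab}(\bE^{\pm}\overline{\bfab})=-f_{ab}$ obtained from Theorem~\ref{thm:Lident}. This avoids introducing $\mathscr{D}$ altogether and makes the link to the linearized Einstein operator more transparent; the paper's route, on the other hand, stays closer to the simplified de Donder form of the momenta displayed just before the theorem. Both arguments rely on the same fact (noted parenthetically in the paper) that $\bE^{\pm}\overline{\bfab}$ are individually de Donder under the hypothesis $\nabla\cdot\bfab=0$, which your closing remark correctly isolates.
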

\begin{proof} 
By Theorem~\ref{thm:fundsolnsDD}, $\bE\overline{\bfab}$ is a de Donder solution (and its proof
shows that $\bE^\pm\overline{\bfab}$ also obey the de Donder condition). If we select a de Donder representative $\bgamma^{dD}$ of $[\bgamma]$ then the left-hand side of \eqref{eqn:symplecticEfgamDD} may be written as
\begin{equation}\label{eqn:sympprod1}
\omega([\bE\overline{\bfab}],[\bgamma]) = \int_{\Sigma}
\left((\bE\overline{\bfab})_{ab}\mathscr{D}^{cab}[\overline{\bgamma^{dD}}]
- \gamma^{dD}_{ab} \mathscr{D}^{cab}[\bE\bfab]\right)n_c dvol_{\bh}
\end{equation}
using that $\overline{\bE\overline{\bfab}}=\bE\overline{\overline{\bfab}}=\bE \bfab$ from Lemma~\ref{lem:overlinegammaEf}.

As $\supp\bfab$ is compact we may choose Cauchy surfaces $\Sigma,\Sigma^{\prime}$ such that $\Sigma \subset I^{+}(\Sigma^{\prime})$ and $\supp \bfab \subset I^{+}(\Sigma^{\prime})\cap I^{-}(\Sigma)$. The region bounded by these two Cauchy surfaces is henceforth denoted by $V$. We will utilise the Gauss Theorem applied to the vector field
\begin{equation}\label{eqn:sympprod2}
v^c = \gamma^{dD}_{ab} \mathscr{D}^{cab}[\bE^+\bfab] - 
(\bE^+\overline{\bfab})_{ab}\mathscr{D}^{cab}[\overline{\bgamma^{dD}}]
\end{equation}
on the region $V$ to prove the desired result. Using the formula 
$\nabla_c \mathscr{D}^{cab}[\bgamma] = 
\frac{1}{2}P(\bgamma)^{ab}- \Lambda\gamma^{ab}$,
applied to the de Donder perturbations $\bE^+\overline{\bfab}$ and $\bgamma^{dD}$,
the divergence of $\bv$ is calculated to be
\begin{equation}
\nabla\cdot\bv = \frac{1}{2}\gamma^{dD}_{ab}f^{ab} 
+ (\nabla_c\gamma_{ab}^{dD}) \mathscr{D}^{cab}[\bE^+\bfab] -
(\nabla_c(\bE^+\overline{\bfab})_{ab}) \mathscr{D}^{cab}[\overline{\bgamma^{dD}}]
\end{equation}
where we have also used the fact that $P(\bgamma^{dD}) = 0$ and $P(\bE^{+}\bfab)=\bfab$. Using
the de Donder condition again, 
the second and third terms may be seen to cancel and we have 
$\nabla\cdot\bv =\frac{1}{2} \gamma^{dD}_{ab}f^{ab}$.
By Gauss' Theorem applied to the region V, where $\partial V=\Sigma \cup \Sigma^{\prime}$ and with $\bn$ denoting the future-pointing unit normal vector on $\partial V$, this gives
\begin{equation}
\frac{1}{2}\int_{V}{\gamma^{dD}_{ab}f^{ab}dvol_{\bg}} =- \int_{\Sigma}{n_{c}v^{c} dvol_{\bh}} + \int_{\Sigma^{\prime}}{n_{c}v^{c} dvol_{\bh}}
\end{equation}
The integral over $\Sigma^{\prime}$ is zero because $\bE^{+}\bfab$ and its derivative vanish on $\Sigma^{\prime}$, so we obtain 
\begin{equation}
\frac{1}{2}\int_{V}{\gamma^{dD}_{ab}f^{ab}dvol_{\bg}} =- \int_{\Sigma} n_{c}\left( \gamma^{dD}_{ab} \mathscr{D}^{cab}[\bE^+\bfab] - 
(\bE^+\overline{\bfab})_{ab}\mathscr{D}^{cab}[\overline{\bgamma^{dD}}]
\right)dvol_{\bh}.
\end{equation}
As $\bE^{-}\overline{\bfab}$ and its derivative vanish at $\Sigma$ we may replace $\bE^{+}$ by $-\bE$ and use \eqref{eqn:sympprod1} to obtain the final result. \end{proof}

\subsection{Observables}
\label{sec:observables}

Observables are functions on the (complexified) phase space, $\mathscr{P}(M)$. As for the scalar~\cite{DimockScalar} and electromagnetic fields~\cite{DimockEM}, the
observables that will form the basis for the quantum theory will be 
certain smeared fields. In our case, we wish to consider
integrals of the form $\int{\gamma_{ab}f^{ab} dvol_{\bg}}$, where
$\bgamma\in\mathscr{S}(M)$ and $\bfab\in C_0^\infty(T^0_2 (M))$;
however, this will only be gauge invariant, that is, independent of the choice of representative of the equivalence class of $\bgamma$, and thus a well-defined
function on $\mathscr{P}(M)$, if $\bfab$ is restricted. 

\begin{lem}
\label{lem:gaugeinvobs}
For $\bfab\in C_0^\infty(T^0_2(M))$, we have $\int_{M}\gamma_{ab} f^{ab} dvol_{\bg}=0$ for all $\bgamma\in\hat{\mathscr{G}}(M)$ if and only if $\nabla^{a}f_{(ab)}=0$.
\end{lem}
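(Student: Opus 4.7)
The plan is to unwrap $\bgamma = \pounds_\bw\bg = 2\nabla_{(a}w_{b)}$ inside the integral, exploit the fact that only the symmetric part $f^{(ab)}$ couples to this symmetric expression, and then integrate by parts. Specifically, I would write
\[
\int_M \gamma_{ab} f^{ab} dvol_{\bg} = \int_M 2\nabla_{(a}w_{b)} f^{(ab)} dvol_{\bg} = 2\int_M \nabla_a w_b\, f^{(ab)} dvol_{\bg},
\]
and then expand $\nabla_a(w_b f^{(ab)}) = (\nabla_a w_b) f^{(ab)} + w_b \nabla_a f^{(ab)}$. The divergence term integrates to zero via the divergence theorem because $\bfab$ is compactly supported (so the vector field $w_b f^{(ab)}$ is compactly supported, regardless of the support of $\bw$), and no boundary contribution appears. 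This yields the identity
\[
\int_M \gamma_{ab} f^{ab} dvol_{\bg} = -2\int_M w_b \nabla_a f^{(ab)} dvol_{\bg}.
\]

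The implication $(\Leftarrow)$ is then immediate: if $\nabla^a f_{(ab)}=0$, the right-hand side is zero for every $\bgamma = \pounds_\bw\bg \in \hat{\mathscr{G}}(M)$. For the implication $(\Rightarrow)$, I would restrict to test vector fields $\bw \in C^\infty_0(T^1_0(M))$. For such $\bw$, $\pounds_\bw\bg$ is compactly supported, hence certainly spacelike-compact, and thus lies in $\mathscr{G}(M)\subset \hat{\mathscr{G}}(M)$. The hypothesis then forces
\[
\int_M w^b \nabla^a f_{(ab)}\, dvol_{\bg} = 0 \qquad \text{for all } \bw\in C^\infty_0(T^1_0(M)),
\]
and the fundamental lemma of the calculus of variations gives $\nabla^a f_{(ab)}=0$ pointwise on $M$.

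There is essentially no obstacle here; the only subtlety worth flagging is that although an element of $\hat{\mathscr{G}}(M)$ is generated by a vector field $\bw$ that need not be compactly supported, the integration by parts is nonetheless unproblematic because the compact support of $\bfab$ (not of $\bw$) is what ensures the divergence term vanishes. Correspondingly, for the necessity direction one only needs access to compactly supported $\bw$, so it is enough that $\mathscr{G}(M)\subset\hat{\mathscr{G}}(M)$, and the distinction between the two gauge spaces plays no role.
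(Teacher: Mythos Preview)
Your proof is correct and follows essentially the same approach as the paper: move the symmetrization onto $\bfab$, integrate by parts using the compact support of $\bfab$ to kill the boundary term, and for necessity restrict to compactly supported $\bw$ so that the fundamental lemma applies. Your remark that it is the compact support of $\bfab$ (not of $\bw$) that justifies the integration by parts is exactly the point the paper makes when it refers to ``the support properties of $\bfab$''.
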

\begin{proof} For $\bgamma=\pounds_\bw\bg\in\hat{\mathscr{G}}(M)$, we 
have 
\begin{align}
\label{eqn:gaugeFf}
\int_{M}{(\nabla_{(a}w_{b)}) f^{ab} dvol_{\bg}} &=\int_{M}{\nabla_{a}(w_{b}f^{(ab)}) dvol_{\bg}} - \int_{M}{w_{b}(\nabla_{a}f^{(ab)} )dvol_{\bg}}
\nonumber \\
&=
- \int_{M}{w^{b}(\nabla^{a}f_{(ab)} )dvol_{\bg}},
\end{align}
where we moved the symmetrization to $\bfab$ and then used the Leibniz rule and the divergence theorem together with the support properties of $\bfab$ to obtain \eqref{eqn:gaugeFf}. For \eqref{eqn:gaugeFf} to 
vanish it is clearly sufficient that $\nabla^{a}f_{(ab)}=0$; as $\bw$ may, in particular, be any element of $C_0^\infty(T^1_0(M))$, necessity holds as well. \end{proof}

We thus arrive at the final definition of observables for our theory. 
\begin{dfn}
\label{dfn:observables}
For each $\bfab \in C^{\infty}_{0}(T^{0}_{2}(M))$ satisfying $\nabla^{a}f_{(ab)}=0$ the observable $F_{\bfab}:\mathscr{P}(M)\to\mathbb{C}$ is given by
\begin{equation}
\label{eqn:observables}
F_{\bfab}([\bgamma])=\int{\gamma_{ab}f^{ab} dvol_{\bg}}
\end{equation}
(and is necessarily gauge invariant).
\end{dfn}

The observables \eqref{eqn:observables} satisfy four important relations. The reader who is
familiar with the algebraic formulation of the real scalar field might expect that we would
only state (iv) for compactly supported $\bfab$; the reason for our use of $\bfab$ with time-compact support will become clear in the next subsection. For (iv), recall that $L$ is defined on arbitrary tensors by (6); in particular it 
vanishes on antisymmetric tensors.
\begin{thm}
\label{thm:relations}
Given any $[\bgamma]\in\mathscr{P}(M)$, the $F_{\bfab}$'s satisfy:
\begin{tightenumerate}
\item {\em Complex linearity:} $ F_{\alpha \bfab+\beta \tilde{\bfab}}([\bgamma])=\alpha F_{\bfab}([\bgamma])+\beta F_{\tilde{\bfab}}([\bgamma])$ for all $\alpha,\beta \in \mathbb{C}$ and all $\bfab,\bfab^{\prime} \in C^{\infty}_{0}(T^{0}_{2}(M))$ satisfying $ \nabla^{a}f_{(ab)}=0=\nabla^{a}f^{\prime}_{ (ab)}$;
\item {\em Hermiticity:} $ F_{\bfab}([\bgamma])^{*}= F_{\bfab^{*}}([\bgamma^*])$ for all $\bfab \in C^{\infty}_{0}(T^{0}_{2}(M))$ satisfying $\nabla^{a}f_{(ab)}=0$;
\item {\em Symmetry:} $ F_{\bfab}([\bgamma])=0$ for all antisymmetric  $\bfab \in C^{\infty}_{0}(T^{0}_{2}(M))$;
\item {\em Field equation \eqref{eqn:nonhypeqn} holds:} $F_{L(\bfab)}([\bgamma])=0$ for all $\bfab \in C^{\infty}_{TC}(T^{0}_{2}(M))$ with $L(\bfab)\in C_0^\infty(S^0_2(M))$.
\end{tightenumerate}
\end{thm}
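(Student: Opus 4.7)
Parts (i), (ii) and (iii) should follow by direct inspection. For (i), complex linearity is inherited from the linearity of the integral, the constraint $\nabla^a f_{(ab)}=0$ being complex linear in $\bfab$. For (ii), taking the complex conjugate of the defining integral yields $F_{\bfab}([\bgamma])^* = \int \gamma_{ab}^*(f^{ab})^*\,dvol_\bg$, and the claim follows once one notes that $\bfab\mapsto\bfab^*$ preserves the divergence-free condition, while $[\bgamma]^*$ is a well-defined equivalence class because both the field equation and the gauge transformations are real-coefficient. For (iii), any representative lies in $\mathscr{T}(M)=C^\infty_{SC}(S^0_2(M))$ and is therefore symmetric, so $\gamma_{ab}f^{ab}$ vanishes pointwise when $\bfab$ is antisymmetric.

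The substantive statement is (iv), and the plan is to exploit the formal self-adjointness of $L$, which is inherited from the variational structure of the Lagrangian~\eqref{eqn:lagrangian}. Before the main calculation, one should first check that $F_{L(\bfab)}$ is actually a well-defined observable: $L(\bfab)$ is automatically symmetric (cf.~\eqref{eqn:symmL}), and on a cosmological vacuum background the linearized contracted Bianchi identity yields $\nabla^a L(\bfab)_{ab}=0$ identically in $\bfab$. (This may alternatively be viewed as a consequence of the gauge invariance $L(\pounds_\bw\bg)=0$ for all $\bw$, via integration by parts against compactly supported test vector fields.)

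For the main calculation I would reuse the current introduced in the proof of Lemma~\ref{lem:conserved}, namely
\[
j^c(\bgamma,\bfab) := f_{ab}\Pi^{cab}[\bgamma] - \gamma_{ab}\Pi^{cab}[\bfab],
\]
whose divergence equals $f_{ab}L^{ab}(\bgamma)-\gamma_{ab}L^{ab}(\bfab)$ by the Euler--Lagrange identity~\eqref{eqn:Eulerlagrange} and the symmetries of $S^{abcd}$ and $T^{abcdef}$ (the computation is identical to that in Lemma~\ref{lem:conserved}, except that one no longer needs $\bfab$ to be a solution). Since $\bgamma\in\mathscr{S}(M)$ gives $L(\bgamma)=0$, this reduces to $\gamma_{ab}L^{ab}(\bfab)=-\nabla_c j^c$, so after integration $F_{L(\bfab)}([\bgamma])$ becomes a pure boundary term.

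The final step, and the only point that requires genuine work, is to observe that this boundary term vanishes. This is where the timelike-compact hypothesis on $\bfab$ enters decisively: $j^c$ has support in $\supp\bgamma\cap\supp\bfab$, which is the intersection of a spacelike-compact set with a timelike-compact set, and is therefore compact in any globally hyperbolic spacetime. The divergence theorem then gives $\int_M \nabla_c j^c\,dvol_\bg=0$ and hence $F_{L(\bfab)}([\bgamma])=0$. I expect the compactness of $\supp\bgamma\cap\supp\bfab$ to be the main (though standard) technical point; the rest of the argument is essentially bookkeeping around the self-adjointness identity already implicit in Lemma~\ref{lem:conserved}.
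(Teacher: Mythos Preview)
Your proposal is correct and follows the same route as the paper, which simply records that (i)--(iii) are obvious and that (iv) holds because $L_{ab}$ is formally self-adjoint; you have supplied the details behind that one-line justification, in particular the use of the current from Lemma~\ref{lem:conserved} and the compactness of the intersection of a spacelike-compact and a timelike-compact support to kill the boundary term. One small remark: in (iv) the test tensor $\bfab$ is not assumed symmetric, but since $L$ annihilates antisymmetric tensors you may replace $\bfab$ by its symmetric part before invoking the Euler--Lagrange identity, which is stated for symmetric arguments.
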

\begin{proof} (i), (ii), (iii)  are obvious, and (iv) holds because $L_{ab}$ is formally self adjoint. \end{proof}

We can now give a variant of Theorem~\ref{thm:sympprodsolDD} in which there is no longer any need to work with de Donder representatives. The theorem is proved on the assumption that $\bfab$ is symmetric and has vanishing divergence; of course any $\bfab$ whose symmetric part is divergenceless may be decomposed into symmetric and antisymmetric parts and the antisymmetric part does not
contribute to $F_\bfab([\bgamma])$ by part~(iii) of Theorem~\ref{thm:relations}.
\begin{thm}
\label{thm:sympprodEFgammafinal}
Given $[\bgamma] \in \mathscr{P}(M)$ and $\bfab \in C^{\infty}_{0}(S^{0}_{2}(M))$ satisfying $\nabla^{a}f_{ab}=0$, then
\begin{equation}
\label{eqn:symplecticEfgam}
 F_{\bfab}([\bgamma])
=\int_{M}{\gamma_{ab}f^{ab} dvol_{\bg}} = -2\omega([\bE \overline{\bfab}],[\bgamma]) .
\end{equation}
\end{thm}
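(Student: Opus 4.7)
The plan is to reduce this directly to Theorem~\ref{thm:sympprodsolDD}, using gauge invariance of $F_{\bfab}$ to pass from an arbitrary representative $\bgamma$ to a de Donder representative.

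First I would verify that $F_{\bfab}$ is well-defined on $\mathscr{P}(M)$ under the stated hypotheses: since $\bfab \in C^\infty_0(S^0_2(M))$ and $\nabla^a f_{ab}=0$, we have $f_{(ab)}=f_{ab}$ and $\nabla^a f_{(ab)}=0$, so $\bfab$ satisfies the condition of Definition~\ref{dfn:observables}, and by Lemma~\ref{lem:gaugeinvobs} the integral $\int_M \gamma_{ab} f^{ab}\,dvol_\bg$ is invariant under adding any element of $\hat{\mathscr{G}}(M)$.

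Next, using Corollary~\ref{cor:deDsplitS}, choose a de Donder representative $\bgamma^{dD}\in \mathscr{S}^{dD}(M)$ of the class $[\bgamma]$, so that $\bgamma - \bgamma^{dD}\in \mathscr{G}(M)\subset\hat{\mathscr{G}}(M)$. Gauge invariance then yields
\begin{equation}
F_{\bfab}([\bgamma]) = \int_M \gamma_{ab}f^{ab}\,dvol_\bg = \int_M \gamma^{dD}_{ab} f^{ab}\,dvol_\bg.
\end{equation}

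Finally, I would apply Theorem~\ref{thm:sympprodsolDD}, whose hypotheses ($\bfab\in C^\infty_0(S^0_2(M))$ symmetric and $\nabla\cdot\bfab=0$) are precisely those assumed here, to obtain
\begin{equation}
\omega([\bE\overline{\bfab}],[\bgamma]) = -\tfrac{1}{2}\int_M \gamma^{dD}_{ab} f^{ab}\,dvol_\bg.
\end{equation}
Combining the last two displayed equations immediately gives $F_{\bfab}([\bgamma]) = -2\,\omega([\bE\overline{\bfab}],[\bgamma])$. There is no real obstacle; the content of the theorem is already in Theorem~\ref{thm:sympprodsolDD}, and the present statement is a reformulation that eliminates the need to pick a de Donder representative explicitly when pairing $\omega$ with observables $F_{\bfab}$.
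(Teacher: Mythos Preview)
Your proposal is correct and follows essentially the same approach as the paper: apply Theorem~\ref{thm:sympprodsolDD} to obtain the formula with a de Donder representative, then invoke Lemma~\ref{lem:gaugeinvobs} to replace $\bgamma^{dD}$ by any representative $\bgamma$. The paper simply presents these two steps in the reverse order and more tersely.
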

\begin{proof} As $\nabla^{a}f_{ab}=0$ we can use Theorem~\ref{thm:sympprodsolDD} to give
\begin{equation}
\omega([\bE \overline{\bfab}],[\bgamma])=-\frac{1}{2}\int_{M}{\gamma^{dD}_{ab}f^{ab} dvol_{\bg}}.
\end{equation}
As $\bfab$ satisfies the requirements of Lemma~\ref{lem:gaugeinvobs} we may replace $\bgamma^{dD}$ by $\bgamma$ and the result follows. \end{proof}

The final two results of this section make use of the weak non-degeneracy
of the symplectic product, in contrast to those above. First,
we show that there are sufficiently many observables to distinguish points of $\mathscr{P}(M)$; by this we mean that given two distinct equivalence classes of solutions $[\bgamma]$ and $[\bgamma^{\prime}]$ then there exists at least one $\bfab \in C^{\infty}_{0}(T^{0}_{2}(M))$ satisfying $\nabla^{a}f_{(ab)}=0$ such that $F_{\bfab}([\bgamma])\neq F_{\bfab}([\bgamma^{\prime}])$. 
We are grateful to Atsushi Higuchi for raising this question with us.
\begin{thm} \label{thm:D} 
Assuming weak non-degeneracy holds [thus, in particular, for any $M$ with
compact Cauchy surfaces] then, for any distinct $[\bgamma_1],[\bgamma_2]\in \mathscr{P}(M)$, there exists a $\bfab \in C^{\infty}_{0}(S^{0}_{2}(M))$ with $\nabla^a f_{ab}=0$ such that $F_{\bfab}([\bgamma_1])\neq
F_{\bfab}([\bgamma_2])$. 
\end{thm}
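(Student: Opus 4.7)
The plan is to reduce to the case of a single nonzero class, then exhibit an $\bfab$ by translating weak non-degeneracy of $\omega$ into a statement about $F_\bfab$ via Theorems~\ref{thm:solutionspace} and~\ref{thm:sympprodEFgammafinal}. By complex-linearity of $F_\bfab$ in $[\bgamma]$ (evident from the integral formula \eqref{eqn:observables}), it suffices to show that whenever $[\bgamma]\in\mathscr{P}(M)$ is nonzero, there exists $\bfab\in C^\infty_0(S^0_2(M))$ with $\nabla^a f_{ab}=0$ such that $F_\bfab([\bgamma])\neq 0$; applying this to $[\bgamma]=[\bgamma_1]-[\bgamma_2]$ then yields the theorem.

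So suppose $[\bgamma]\neq 0$. By the hypothesis of weak non-degeneracy, there exists $[\bgamma']\in\mathscr{P}(M)$ with $\omega([\bgamma'],[\bgamma])\neq 0$. Now invoke Theorem~\ref{thm:solutionspace} to find $\tilde{\bfab}\in C^\infty_0(S^0_2(M))$ with $\nabla\cdot\overline{\tilde{\bfab}}=0$ such that $[\bgamma']=[\bE\tilde{\bfab}]$. Setting $\bfab:=\overline{\tilde{\bfab}}$, the idempotence of trace-reversal gives $\tilde{\bfab}=\overline{\bfab}$, and the condition $\nabla\cdot\overline{\tilde{\bfab}}=0$ becomes precisely $\nabla^a f_{ab}=0$. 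Thus $\bfab$ is symmetric, compactly supported and divergence-free, and $[\bgamma']=[\bE\overline{\bfab}]$.

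With this choice of $\bfab$, Theorem~\ref{thm:sympprodEFgammafinal} yields
\begin{equation*}
F_\bfab([\bgamma]) \;=\; -2\,\omega([\bE\overline{\bfab}],[\bgamma]) \;=\; -2\,\omega([\bgamma'],[\bgamma]) \;\neq\; 0,
\end{equation*}
completing the reduction and hence the proof.

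There is no serious obstacle: the theorem is essentially a corollary that bundles together the identification (Theorem~\ref{thm:solutionspace}) of every solution class with $[\bE\tilde{\bfab}]$ for divergence-free $\overline{\tilde{\bfab}}$, the rewriting (Theorem~\ref{thm:sympprodEFgammafinal}) of $F_\bfab$ as a symplectic pairing with $[\bE\overline{\bfab}]$, and the hypothesis of weak non-degeneracy. The only point requiring a little care is matching the trace-reversal conventions so that the candidate $\bfab$ satisfies $\nabla^a f_{ab}=0$ rather than $\nabla\cdot\overline{\bfab}=0$; this is resolved simply by taking $\bfab=\overline{\tilde{\bfab}}$.
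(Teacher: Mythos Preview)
Your proof is correct and follows essentially the same route as the paper: invoke weak non-degeneracy to produce a class with nonzero symplectic pairing, represent that class as $[\bE\overline{\bfab}]$ for divergence-free $\bfab$ via Theorem~\ref{thm:solutionspace}, and then convert to an $F_\bfab$ statement via Theorem~\ref{thm:sympprodEFgammafinal}. The only cosmetic differences are that you first reduce to a single nonzero class by linearity and make the trace-reversal bookkeeping explicit, whereas the paper works directly with the pair $[\bgamma_1],[\bgamma_2]$ and absorbs the trace-reversal into its citation of Theorem~\ref{thm:solutionspace}.
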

\begin{proof}  By weak non-degeneracy there exists a $[\bgamma] \in \mathscr{P}(M)$ such that 
\begin{equation}
\label{eqn:sympprodneq}
\omega([\bgamma],[\bgamma_1]) \neq \omega([\bgamma],[\bgamma_2]).
\end{equation}
By Theorem~\ref{thm:solutionspace}, $[\bgamma]=[\bE \overline{\bfab}]$ for some $\bfab \in C^{\infty}_{0}(S^{0}_{2}(M))$ satisfying $\nabla \cdot \bfab =0$. Using Theorem~\ref{thm:sympprodEFgammafinal} together with \eqref{eqn:sympprodneq} gives
\begin{equation}
F_{\bfab}([\bgamma_{1}])=-2\omega([\bE \overline{\bfab}],[\bgamma_1]) \neq -2\omega([\bE \overline{\bfab}],[\bgamma_2]) = F_{\bfab}([\bgamma_{2}]).
\end{equation}
\end{proof}

Finally, we compute the Poisson bracket of two observables in our class. Here, we regard $\mathscr{P}(M)$ as an infinite-dimensional symplectic manifold, with the smooth structure determined (as a Fr\"{o}licher space---see, e.g.~\cite[Ch.~23]{Globalanalysis})
by the symplectic form $\omega$. Thus a curve $c:\mathbb{R}\to\mathscr{P}(M)$ is defined to be smooth if $t\mapsto \omega(v,c(t))$ is smooth for all $v$ and  a function $F:\mathscr{P}(M)\to \mathbb{C}$ is defined to be smooth if $F\circ c$ is smooth for every smooth  curve $c$; in particular, $\omega$ itself is a smooth function in each slot separately,
and therefore our $F_{\bfab}$'s are smooth. The Poisson bracket of two smooth functions
$F,G\in C^\infty(\mathscr{P}(M))$ is given in terms of the exterior derivatives of $F$ and $G$ by
\begin{equation}
\{F,G\}([\bgamma])=dF(IdG)|_{[\bgamma]},
\end{equation}
where the Hamiltionian vector field $IdG$ induced by $G$ satisfies
\begin{equation}
\label{eqn:IdGeqn}
\omega_{[\bgamma]}(IdG|_{[\bgamma]},v)=dG|_{[\bgamma]}(v)
\end{equation}
for $v \in T_{[\bgamma]}\mathscr{P}(M)$ (we will show that this is uniquely defined in our context). Here $\omega_{[\bgamma]}$ is the symplectic form at 
$[\bgamma]\in\mathscr{P}(M)$. Under the identification $T_{[\bgamma]}\mathscr{P}(M) \cong \mathscr{P}(M)$,  $\omega_{[\bgamma]}$ is replaced by $\omega$. 
\begin{thm}\label{thm:Poisson}
Assuming weak non-degeneracy holds [thus, in particular, if $M$ has
compact Cauchy surfaces], the Poisson bracket of two observables satisfying Definition~\ref{dfn:observables} is 
\begin{equation}
\label{eqn:Poisson}
\{ F_{\bfab}, F_{\bfab^{\prime}}\} = -2 \bE(\bfab^{s},\overline{\bfab^{\prime s}})
= 4\omega([\bE\overline{\bfab^{s}}],[\bE\overline{\bfab^{\prime s}}]),
\end{equation}
where $\bfab^{s}$ denotes the symmetric part of $\bfab$ [i.e., $f^s_{ab}=  f_{(ab)}$] and
\begin{equation}
\label{eqn:bidistribE}
\bE(\bfab^{s},\overline{\bfab^{\prime s}}) :=\int_{M}{f^{(ab)} (E^{\hspace{0.2cm}cd}_{ab}\overline{f}^{\prime}_{(cd)}) dvol_{\bg}}.
\end{equation}
\end{thm}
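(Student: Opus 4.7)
The plan is to reduce everything to applications of Theorems~\ref{thm:sympprodEFgammafinal} and~\ref{thm:solutionspace}, once we have identified the Hamiltonian vector field $IdF_{\bfab^\prime}$ explicitly. Throughout, part~(iii) of Theorem~\ref{thm:relations} lets us replace $\bfab$ and $\bfab^\prime$ by their symmetric parts whenever convenient, so without loss of generality I may think of $\bfab$ and $\bfab^\prime$ as symmetric with $\nabla^a f_{ab}=0=\nabla^a f^\prime_{ab}$ when computing things; the factors of $\bfab^s$ on the right-hand side of~\eqref{eqn:Poisson} appear because the original $\bfab$ is only required to have divergenceless symmetric part.

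The first main step is to compute $dF_{\bfab}$ at an arbitrary point $[\bgamma]\in\mathscr{P}(M)$. Since $F_{\bfab}$ is complex-linear in $[\bgamma]$ (Theorem~\ref{thm:relations}~(i)) and $\mathscr{P}(M)$ is a vector space, identifying $T_{[\bgamma]}\mathscr{P}(M)\cong\mathscr{P}(M)$ gives $dF_{\bfab}|_{[\bgamma]}(v) = F_\bfab(v)$ for $v\in\mathscr{P}(M)$. Smoothness of $F_\bfab$ in the Fr\"olicher sense follows from Theorem~\ref{thm:sympprodEFgammafinal}, which rewrites
\[
F_\bfab([\bgamma]) = -2\omega([\bE\overline{\bfab^{s}}],[\bgamma])
\]
(using $\nabla^a f^s_{ab}=0$, so Theorem~\ref{thm:sympprodEFgammafinal} applies), exhibiting $F_\bfab$ as $-2$ times one slot of $\omega$. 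Thus
\[
dF_{\bfab}|_{[\bgamma]}(v) = -2\omega([\bE\overline{\bfab^{s}}],v) \qquad \forall v\in\mathscr{P}(M).
\]

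The second step is to identify the Hamiltonian vector field. The defining equation~\eqref{eqn:IdGeqn} for $IdF_{\bfab^\prime}|_{[\bgamma]}$ reads
\[
\omega(IdF_{\bfab^\prime}|_{[\bgamma]},v) = -2\omega([\bE\overline{\bfab^{\prime s}}],v) \qquad \forall v\in\mathscr{P}(M),
\]
so $IdF_{\bfab^\prime}|_{[\bgamma]} = -2[\bE\overline{\bfab^{\prime s}}]$ is a solution, and it is the unique such by weak non-degeneracy of $\omega$. (Here we need $[\bE\overline{\bfab^{\prime s}}]$ to lie in $\mathscr{P}(M)$, which follows from Theorem~\ref{thm:fundsolnsDD}: the condition $\nabla\cdot\bfab^{\prime s}=0$ trivially places it in $(\Box+\Lambda)C^\infty_0$.) Note that the Hamiltonian vector field is in fact constant across $\mathscr{P}(M)$, as one expects for a linear function on a vector space.

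For the final step, I simply plug in:
\[
\{F_\bfab,F_{\bfab^\prime}\}([\bgamma]) = dF_\bfab|_{[\bgamma]}\bigl(IdF_{\bfab^\prime}|_{[\bgamma]}\bigr)
= -2\omega\bigl([\bE\overline{\bfab^{s}}],\, -2[\bE\overline{\bfab^{\prime s}}]\bigr)
= 4\omega([\bE\overline{\bfab^{s}}],[\bE\overline{\bfab^{\prime s}}]),
\]
which is the right-hand equality. Applying Theorem~\ref{thm:sympprodEFgammafinal} once more (with $[\bgamma]$ replaced by $[\bE\overline{\bfab^{\prime s}}]$ and $\bfab$ replaced by $\bfab^s$) rewrites this as $-2 F_{\bfab^s}([\bE\overline{\bfab^{\prime s}}])=-2\bE(\bfab^{s},\overline{\bfab^{\prime s}})$ after unwinding the definition~\eqref{eqn:bidistribE} of the bidistribution. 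No step looks technically difficult; the only potential obstacle is checking smoothness and the uniqueness of $IdF_{\bfab^\prime}$, both of which are handled cleanly by Theorem~\ref{thm:sympprodEFgammafinal} together with weak non-degeneracy.
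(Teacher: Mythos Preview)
Your proof is correct and follows essentially the same route as the paper's: identify $dF_\bfab$ via linearity, use Theorem~\ref{thm:sympprodEFgammafinal} to rewrite it as $-2\omega([\bE\overline{\bfab^s}],\cdot)$, read off the Hamiltonian vector field $IdF_{\bfab'} = -2[\bE\overline{\bfab^{\prime s}}]$ via weak non-degeneracy, and plug in. The only cosmetic difference is that you establish the $4\omega$ form first and then convert to $-2\bE(\bfab^s,\overline{\bfab^{\prime s}})$, whereas the paper does these in the opposite order.
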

\begin{proof} We note that $dF_{\bfab}|_{[\bgamma]}([\bgamma^{\prime}])=F_{\bfab}([\bgamma^{\prime}])$ by linearity of $F_{\bfab}$. Thus, upon using \eqref{eqn:IdGeqn} and then Theorem~\ref{thm:sympprodEFgammafinal} we have
\begin{equation}
\omega(IdF_\bfab|_{[\bgamma]},[\bgamma]) = F_{\bfab}([\bgamma]) 
= F_{\bfab^s}([\bgamma]) = -2\omega([\bE \overline{\bfab^{s}}],[\bgamma]).
\end{equation}
 By weak non-degeneracy, this gives
$IdF_\bfab =-2 [\bE \overline{\bfab^{s}}]$ and thus
\begin{align}
\{F_{\bfab},F_{\bfab^{\prime}}\}([\bgamma]) &= - dF_{\bfab}|_{[\bgamma]}(2[\bE \overline{\bfab^{\prime s}}]) = - 2F_\bfab([\bE \overline{\bfab^{\prime s}}]) 
=- 2F_{\bfab^s}([\bE \overline{\bfab^{\prime s}}]) \notag\\
&
=  -2 \bE(\bfab^{s},\overline{\bfab^{\prime s}}) .
\end{align}
Finally, we use Theorem~\ref{thm:sympprodEFgammafinal} to obtain the final equality in \eqref{eqn:Poisson} . \end{proof}

Although superficially they do not appear the same, in fact the propagator $2\bE(\bfab^s,\overline{\bfab^{\prime s}})$ is the same as the one considered by Lichnerowicz in equation~(21.3) of \cite{Lichnerowicz}. If one expands out the trace-reversal then
\begin{equation}
\{ F_{\bfab}, F_{\bfab^{\prime}}\}=-2\bE(\bfab^s,\overline{\bfab^{\prime s}})=-2\bE(\bfab^s,\bfab^{\prime s})+ E(f,f^{\prime}),
\end{equation}
where $f=f_a^{\phantom{a}a}$ denotes the trace of $\bfab$; note the appearance of the
scalar propagator $E$ in the last term of this equation. 

\subsection{Reformulation of the phase space}
\label{sec:phasespacereform}

In the next subsection we will apply Dirac quantization to the observables discussed above
to obtain a $*$-algebra of observables. An alternative approach would be to directly quantize the complexified symplectic space $\mathscr{P}(M)$ as an infinitesimal Weyl algebra. Similarly, the Weyl algebra provides another 
quantization based directly on the real symplectic space. (See, e.g., \cite{FewVer:dynloc2} for a presentation of both constructions emphasizing their functorial nature.) To clarify the relationship between these 
prescriptions and Dirac quantization, we now show that the observables form a complexified symplectic space, with their Poisson bracket as the symplectic product, that is symplectically isomorphic to $\mathscr{P}(M)$. 

Let $\mathscr{F}(M)$ be the set of $\bfab\in C^{\infty}_{0}(S^{0}_{2}(M))$ that are divergence-free, i.e., $\nabla^a f_{ab}=0$. Then, by Theorem~\ref{thm:relations}, $F:\bfab\mapsto F_{\bfab}$ is a linear map, intertwining the complex conjugations of $\mathscr{F}(M)$ and of complex-valued functions on $\mathscr{P}(M)$ and whose range coincides with the full set of observables in our class.
In cases where $\mathscr{P}(M)$ is weakly non-degenerate, it is clear from
Theorem~\ref{thm:sympprodEFgammafinal} that the kernel of the map $F$ is precisely the
subspace of $\bfab\in\mathscr{F}(M)$ for which $\bE\overline{\bfab}\in\hat{\mathscr{G}}(M)$. 
The next result will characterize this kernel in an attractively simple way. 

For this subsection only, we consider extensions of various Green's operators from smooth compactly supported tensor fields to tensor fields with time-compact support,
i.e., the support lies between two Cauchy surfaces. The retarded/advanced Green's operators are extended by
defining $\bE^\pm\bfab$ for $\bfab\in C^{\infty}_{TC}(T^{0}_{2}(M))$ to be the unique 
solution to $P (\bgamma)=\bfab$ that vanishes to the past/future of the support of $\bfab$,
and we define $\bE\bfab=\bE^-\bfab-\bE^+\bfab$. Many standard results have analogues for these extensions, that follow by the same arguments as the standard versions.
In particular, $P$ has trivial kernel in $C^\infty_{TC}(T^0_2(M))$, the kernel of the extended operator $\bE$ is precisely $P (C^\infty_{TC}(T^0_2(M)))$
and any smooth solution to $P(\bgamma)=0$ may be written in the form $\bE\bfab$ for some
$\bfab\in C^\infty_{TC}(T^0_2(M))$. These facts may be summarised in the commutative diagram
\[
\begin{tikzpicture}[baseline=0 em, description/.style={fill=white,inner sep=2pt}]
\matrix (m) [ampersand replacement=\&,matrix of math nodes, row sep=3em,
column sep=2.25em, text height=1.5ex, text depth=0.25ex]
{ 0 \& C^\infty_{TC}(T^0_2(M)) \&  C^\infty_{TC}(T^0_2(M)) 
\&   C^\infty(T^0_2(M))  \&  C^\infty(T^0_2(M)) \\ 
0 \& C^\infty_{0}(T^0_2(M)) \&  C^\infty_{0}(T^0_2(M)) 
\&   C^\infty_{SC}(T^0_2(M))  \&  C^\infty_{SC}(T^0_2(M))\\ };
\path[->]
(m-1-1) edge (m-1-2)
(m-1-2) edge node[above] {\scriptsize{$P$}} (m-1-3)
(m-1-3) edge node[auto] {\scriptsize{$\bE$}} (m-1-4)            
(m-1-4) edge node[above] {\scriptsize{$P$}} (m-1-5)
(m-2-1) edge (m-2-2)
(m-2-2) edge node[above] {\scriptsize{$P$}} (m-2-3)
            edge (m-1-2)
(m-2-3) edge node[auto] {\scriptsize{$\bE$}} (m-2-4)
            edge (m-1-3)
(m-2-4) edge node[above] {\scriptsize{$P$}} (m-2-5)
            edge (m-1-4)
(m-2-5) edge (m-1-5);
\end{tikzpicture}
\]
in which the upper and lower lines are exact sequences of linear maps and the vertical arrows
are the obvious inclusion maps. The lower line is the content of~\cite[Thm 3.4.7]{Bar}.

Similar results hold for the extension $\tilde{\bE}$ of the 
analogous vector Green's operators. Moreover, all the results from section~\ref{sec:greensoperators}, concerning the intertwining of the Green's operators with various other operations, 
in particular Lemma~\ref{lem:liederivE},  carry straight through and apply to the extensions.

The kernel of the map $F$ may now be determined. In the following, 
$\mathscr{L}(M)=\{L(\bk): \bk\in  C^\infty_{0}(S^0_2(M))\}$, while
$\hat{\mathscr{L}}(M)=\{L(\bk): \bk\in  C^\infty_{TC}(S^0_2(M))\}\cap C_0^\infty(S^0_2(M))$. 
A consequence of the next result is that $\hat{\mathscr{G}}(M)=\mathscr{G}(M)$ if and only if
$\hat{\mathscr{L}}(M)=\mathscr{L}(M)$.
\begin{lem}\label{lem:Ghat}
Suppose $\bfab\in\mathscr{F}(M)$. Then  $\bE\overline{\bfab}\in\hat{\mathscr{G}}(M)$ if and only if  $\bfab \in\hat{\mathscr{L}}(M)$; similarly, $\bE\overline{\bfab}\in\mathscr{G}(M)$ if and only if  $\bfab \in\mathscr{L}(M)$.
\end{lem}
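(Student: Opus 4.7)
My plan is to prove both equivalences in parallel. For the implication $\bfab\in\hat{\mathscr{L}}(M)\Rightarrow \bE\overline{\bfab}\in\hat{\mathscr{G}}(M)$, I would write $\bfab = L(\bh)$ with $\bh\in C^\infty_{TC}(S^0_2(M))$ and invoke the trace-reversed form of Theorem~\ref{thm:Lident},
\begin{equation*}
2\overline{\bfab}= -P(\bh) + \pounds_{(\nabla\cdot \overline{\bh})^{\sharp}}\bg.
\end{equation*}
Applying the extended $\bE$, the first term vanishes by exactness of the upper row of the commutative diagram, while the extended Lemma~\ref{lem:liederivE} turns the second into $\pounds_{\tilde{\bE}(\nabla\cdot \overline{\bh})^{\sharp}}\bg$. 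Thus $\bE\overline{\bfab}=\pounds_{\bw}\bg$ with $\bw=\tfrac{1}{2}\tilde{\bE}(\nabla\cdot \overline{\bh})^{\sharp}\in C^\infty(T^1_0(M))$, so $\bE\overline{\bfab}\in\hat{\mathscr{G}}(M)$. If moreover $\bh\in C^\infty_{0}$, the lower row gives $\bw\in C^\infty_{SC}(T^1_0(M))$ and therefore $\bE\overline{\bfab}\in\mathscr{G}(M)$.

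For the converse, I would assume $\bE\overline{\bfab}=\pounds_{\bw}\bg$ and decompose $\bE\overline{\bfab}=\bgamma^{-}-\bgamma^{+}$ with $\bgamma^{\pm}:=\bE^{\pm}\overline{\bfab}$, which are future/past-compact respectively. Choose Cauchy surfaces $\Sigma_{1}<\Sigma_{2}$ bracketing $\supp \bfab$ and a cutoff $\chi\in C^{\infty}(M)$ equal to $0$ on $J^{-}(\Sigma_{1})$ and to $1$ on $J^{+}(\Sigma_{2})$. Setting $\bw^{+}:=\chi\bw$ (supported in $J^{+}(\Sigma_{1})$, hence past-compact) and $\bw^{-}:=(1-\chi)\bw$ (future-compact), the relation $\bgamma^{-}-\bgamma^{+}=\pounds_{\bw^{+}}\bg+\pounds_{\bw^{-}}\bg$ produces a single tensor
\begin{equation*}
\bh^{*}:=\bgamma^{+}+\pounds_{\bw^{+}}\bg=\bgamma^{-}-\pounds_{\bw^{-}}\bg.
\end{equation*}
The two representations show that $\bh^{*}$ is past-compact (from the first) and future-compact (from the second), hence $\bh^{*}\in C^{\infty}_{TC}(S^{0}_{2}(M))$.

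To identify $\bh^{*}$ as a preimage under $L$, I would use that $L(\bh^{*})=L(\bgamma^{+})$ since $L$ annihilates pure gauge. Theorem~\ref{thm:Lident} applied to $\bgamma^{+}$, together with $\overline{\bE^{+}\overline{\bfab}}=\bE^{+}\bfab$ (established as in Lemma~\ref{lem:overlinegammaEf} by uniqueness of past-compact solutions), then gives
\begin{equation*}
2L(\bgamma^{+}) = -\bfab + \overline{\pounds_{(\nabla\cdot \bE^{+}\bfab)^{\sharp}}\bg}.
\end{equation*}
The Lie-derivative term vanishes: Lemma~\ref{lem:nablaP} yields $(\Box+\Lambda)(\nabla\cdot \bE^{+}\bfab)=\nabla\cdot \bfab=0$ because $\bfab$ is divergence-free, and the past-compact support of $\nabla\cdot \bE^{+}\bfab$ then forces it to be identically zero. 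Hence $\bfab=-2L(\bh^{*})\in\hat{\mathscr{L}}(M)$. In the $\mathscr{G}/\mathscr{L}$ variant, if $\bw\in C^{\infty}_{SC}$ then $\bw^{\pm}\in C^{\infty}_{SC}$, and on any Cauchy surface $\Sigma$ the set $\supp\bh^{*}\cap\Sigma\subset(J(\supp\bfab)\cap\Sigma)\cup(\supp\bw^{+}\cap\Sigma)$ is compact by global hyperbolicity; together with time-compactness this upgrades $\bh^{*}$ to $C^{\infty}_{0}(S^{0}_{2}(M))$.

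The main obstacle is the converse direction: neither $\bgamma^{+}$ nor $\bgamma^{-}$ has compact temporal support on its own, so one must arrange that the ``asymmetry'' in the Green's operators $\bE^{\pm}$ is compensated by the corresponding gauge tails $\pounds_{\bw^{\pm}}\bg$ to produce a single time-compact $\bh^{*}$; the cutoff decomposition of $\bw$ is the device that achieves this, and the fact that $\pounds_{\bw^{+}}\bg$ carries no obstruction under $L$ is what allows the key identification $L(\bh^{*})=L(\bgamma^{+})=-\bfab/2$.
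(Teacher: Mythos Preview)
Your proof is correct. The $(\Leftarrow)$ direction coincides with the paper's argument essentially verbatim: write $2\overline{L(\bh)}=-P(\bh)+\pounds_{(\nabla\cdot\overline{\bh})^\sharp}\bg$, apply $\bE$, and use $\bE P=0$ together with (the extended) Lemma~\ref{lem:liederivE}.

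For $(\Rightarrow)$ you take a genuinely different route. The paper argues ``algebraically'': since $\bfab\in\mathscr{F}(M)$, the perturbation $\bE\overline{\bfab}=\pounds_\bw\bg$ is de~Donder, whence $(\Box+\Lambda)\bw=0$; writing $\bw=\tilde{\bE}\bv$ for some time-compact $\bv$ and invoking exactness of the upper row of the diagram gives $\overline{\bfab}=\pounds_\bv\bg-P(\bk)$ with $\bk\in C^\infty_{TC}$, and a divergence--trace-reversal computation identifies $\bv^\flat=\nabla\cdot\overline{\bk}$, yielding $\bfab=L(2\bk)$ via Theorem~\ref{thm:Lident}. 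Your argument instead constructs the preimage directly: a time cutoff of $\bw$ produces $\bh^*=\bE^+\overline{\bfab}+\pounds_{\bw^+}\bg=\bE^-\overline{\bfab}-\pounds_{\bw^-}\bg$, visibly time-compact by reading the two representations, and then $L(\bh^*)=L(\bE^+\overline{\bfab})=-\tfrac12\bfab$ because $\nabla\cdot\bE^+\bfab$ is a past-compact solution of $(\Box+\Lambda)u=0$. This is slightly more hands-on but avoids ever invoking that $\bw$ solves a wave equation or writing it as $\tilde{\bE}\bv$. The price is the extra step in the unhatted case: you must argue that $\bh^*$, being both time-compact and spacelike-compact (since $\supp\bh^*\subset J(\supp\bfab)\cup\supp\bw^+\subset J(K)$ for some compact $K$), is genuinely compactly supported---true in globally hyperbolic spacetimes, but the paper's approach sidesteps this by working with compactly supported $\bv$ and $\bk$ from the outset via the lower exact sequence.
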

\begin{proof} ($\Rightarrow$) Suppose $\bE\overline{\bfab}=\pounds_\bw\bg\in \hat{\mathscr{G}}(M)$. As $\nabla\cdot \bfab=0$, $\pounds_\bw\bg$ is a de Donder solution. In consequence, $\bw$ satisfies 
$(\Box+\Lambda)\bw=0$ and may be written as $\bw=\tilde{\bE}\bv$ for some $\bv \in C^{\infty}_{TC}(T^{1}_{0}(M))$. Therefore we have $\pounds_{\bw}\bg=\pounds_{\tilde{\bE}\bv}\bg=\bE \pounds_{\bv}\bg$, using (the analogue of) Lemma~\ref{lem:liederivE}. Thus $\bE\overline{\bfab} = \bE\pounds_{\bv}\bg$ and so
\begin{equation}
\label{eqn:gaugevfPF}
\pounds_{\bv}\bg=\overline{\bfab}+P(\bk)
\end{equation}
for some $\bk \in C^{\infty}_{TC}(S^{0}_{2}(M))$.  Taking the trace-reversal and then the divergence of this equation gives $(\Box+\Lambda)\bv^{\flat}=(\Box+\Lambda)\nabla \cdot \overline{\bk}$,
from which one can deduce, using time-compactness of $\bv$ and $\nabla \cdot \overline{\bk}$, that $\bv^{\flat}=\nabla \cdot \overline{\bk}$. Substituting this result into \eqref{eqn:gaugevfPF}  and using Theorem~\ref{thm:Lident} gives $\overline{\bfab} = \pounds_{(\nabla \cdot \overline{\bk})^{\sharp}}\bg - P (\bk) = 2 \overline{L(\bk)}$. Therefore $\bfab = L(2\bk)\in\hat{\mathscr{L}}(M)$  as required. 

($\Leftarrow$) Conversely, given  $\bk \in C^{\infty}_{TC}(S^{0}_{2}(M))$ satisfying $L(2\bk) =\overline{\bfab}$,
let $\bw=\tilde{\bE}(\nabla \cdot \overline{\bk})^{\sharp} \in C^{\infty}(T^{1}_{0}(M))$, which gives
$\pounds_{\bw}\bg = \pounds_{\tilde{\bE}(\nabla \cdot \overline{\bk})^{\sharp}}\bg = \bE \pounds_{(\nabla \cdot \overline{\bk})^{\sharp}}\bg$, using (the analogue of) Lemma~\ref{lem:liederivE} in the final equality. Now by Theorem~\ref{thm:Lident} this becomes $\pounds_{\bw}\bg = \bE (2 \overline{L(\bk)}+P(\bk)) = \bE\overline{\bfab}$, 
because $\bE P (\bk)=0$. As $\bfab$ is compactly supported by assumption, we deduce that
$\pounds_\bw\bg\in  C^{\infty}_{SC}(S^{0}_{2}(M))$ and hence $\bE\overline{\bfab}\in\hat{\mathscr{G}}(M)$. 

The second statement has an exactly analogous proof, replacing $C^\infty_{TC}$ by $C^\infty_0$ and $C^\infty$ by $C^\infty_{SC}$, and hatted spaces by their unhatted counterparts throughout. \end{proof}

It follows that the vector space of observables is isomorphic to $\mathscr{F}(M)/
\hat{\mathscr{L}}(M)$ if weak non-degeneracy holds. 

\begin{prp}\label{prop:phase_reform}
The map $\bfab+\hat{\mathscr{L}}(M)\mapsto [\bE\overline{\bfab}]$ is a linear 
isomorphism of $\mathscr{F}(M)/
\hat{\mathscr{L}}(M)$ and $\mathscr{P}(M)$. Accordingly, if weak non-degeneracy holds [thus, in particular, for any $M$ with
compact Cauchy surfaces],  there is an isomorphism of the 
space of observables and $\mathscr{P}(M)$ such that $F_{\bfab}\mapsto [\bE\overline{\bfab}]$
for all $\bfab\in\mathscr{F}(M)$, and which is symplectic if the observables are equipped with 
symplectic product $\sigma(F_{\bfab_1}, F_{\bfab_2}) = \frac{1}{4}\{F_{\bfab_1}, F_{\bfab_2}\}$. 
\end{prp}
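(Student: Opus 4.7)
The plan is to assemble the proposition as a direct consequence of Lemma~\ref{lem:Ghat}, Theorem~\ref{thm:solutionspace}, and Theorem~\ref{thm:Poisson}, so essentially everything has been prepared. I would structure it in three logical pieces.

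First, I would verify that the map $\Phi:\bfab+\hat{\mathscr{L}}(M)\mapsto[\bE\overline{\bfab}]$ is a well-defined linear isomorphism. Well-definedness and linearity reduce to showing that if $\bfab\in\hat{\mathscr{L}}(M)$ then $\bE\overline{\bfab}\in\hat{\mathscr{G}}(M)$, which is the $(\Leftarrow)$ direction of Lemma~\ref{lem:Ghat}, combined with linearity of trace-reversal and of $\bE$. Injectivity is the contrapositive: if $[\bE\overline{\bfab}]=0$, i.e.\ $\bE\overline{\bfab}\in\hat{\mathscr{G}}(M)$, then the $(\Rightarrow)$ direction of Lemma~\ref{lem:Ghat} gives $\bfab\in\hat{\mathscr{L}}(M)$. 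For surjectivity, given $[\bgamma]\in\mathscr{P}(M)$, Theorem~\ref{thm:solutionspace} supplies a representative of the form $\bE\bh$ with $\bh\in C_0^\infty(S_2^0(M))$ and $\nabla\cdot\overline{\bh}=0$; setting $\bfab:=\overline{\bh}\in\mathscr{F}(M)$ and using $\overline{\overline{\bh}}=\bh$ gives $\Phi(\bfab+\hat{\mathscr{L}}(M))=[\bE\overline{\bfab}]=[\bE\bh]=[\bgamma]$.

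Second, assuming weak non-degeneracy, I would identify the kernel of the linear surjection $F:\bfab\mapsto F_\bfab$ from $\mathscr{F}(M)$ onto the space of observables. By Theorem~\ref{thm:sympprodEFgammafinal}, for any $[\bgamma]\in\mathscr{P}(M)$ we have $F_\bfab([\bgamma])=-2\omega([\bE\overline{\bfab}],[\bgamma])$, so under weak non-degeneracy the kernel of $F$ is exactly the set of $\bfab\in\mathscr{F}(M)$ with $\bE\overline{\bfab}\in\hat{\mathscr{G}}(M)$, which equals $\hat{\mathscr{L}}(M)$ by Lemma~\ref{lem:Ghat}. Consequently $F$ factors through $\mathscr{F}(M)/\hat{\mathscr{L}}(M)$ as a linear isomorphism onto the space of observables, and composing its inverse with $\Phi$ yields the stated isomorphism $F_\bfab\mapsto[\bE\overline{\bfab}]$.

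Third, I would verify that this isomorphism is symplectic with $\sigma:=\tfrac{1}{4}\{\cdot,\cdot\}$. Since elements of $\mathscr{F}(M)$ are symmetric, $\bfab_i^s=\bfab_i$ in Theorem~\ref{thm:Poisson}, so
\begin{equation}
\sigma(F_{\bfab_1},F_{\bfab_2}) = \tfrac{1}{4}\{F_{\bfab_1},F_{\bfab_2}\} = \omega([\bE\overline{\bfab_1}],[\bE\overline{\bfab_2}]),
\end{equation}
which is precisely the pullback of the symplectic form $\omega$ on $\mathscr{P}(M)$ along the isomorphism $F_\bfab\mapsto[\bE\overline{\bfab}]$.

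The proof is essentially bookkeeping: there is no substantive new obstacle, because Lemma~\ref{lem:Ghat} has already absorbed the analytic content (the time-compact extension of the Green's operators and the use of $\tilde{\bE}$ to lift a Killing-type vector field). The only point requiring a little care is the passage from Theorem~\ref{thm:solutionspace} to surjectivity, where one must remember that trace-reversal is involutive on $S_2^0(M)$ so that $\bh\in C_0^\infty(S_2^0(M))$ with $\nabla\cdot\overline{\bh}=0$ produces a legitimate element of $\mathscr{F}(M)$ via $\bfab=\overline{\bh}$.
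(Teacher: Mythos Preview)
Your proposal is correct and follows essentially the same approach as the paper's proof: Lemma~\ref{lem:Ghat} for well-definedness and injectivity, Theorem~\ref{thm:solutionspace} for surjectivity, and Theorem~\ref{thm:Poisson} (together with Theorem~\ref{thm:sympprodEFgammafinal}) for the symplectic compatibility. The paper's version is simply more terse, and the identification of $\ker F$ with $\hat{\mathscr{L}}(M)$ that you include as your second step is stated in the paper just before the proposition rather than inside its proof.
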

\begin{proof} Lemma~\ref{lem:Ghat} shows that the given map is
well-defined and injective, while surjectivity is evident by Theorem~\ref{thm:solutionspace}. 
The symplectic product is seen to be preserved by combining  Theorems~\ref{thm:sympprodEFgammafinal} and~\ref{thm:Poisson} to give $4\omega([\bE\overline{\bfab}_1],[\bE\overline{\bfab}_2]) =\{F_{\bfab_1}, F_{\bfab_2}\}$
for any $\bfab_1,\bfab_2\in\mathscr{F}(M)$.
\end{proof}

\subsection{Quantization}
\label{sec:quantization}

To quantize the theory we follow Dirac's prescription, which requires that we seek operators $\hat{F}_{\bfab}$, labelled by test tensors $\bfab$ with divergence-free symmetric part and whose commutator is given by $[\hat{F}_{\bfab},\hat{F}_{\bfab^{\prime}}]=i\{F_{\bfab},F_{\bfab^{\prime}}\} \II$. We also expect these operators to respect the quantum analogues of the relations for the classical observables from Theorem~\ref{thm:relations}. As we
will regard these objects as smeared quantum fields, we use the notation  $[\bgamma](\bfab)$ 
rather than $\hat{F}_{\bfab}$, cautioning that the $[\bgamma]$ in such expressions is not to be confused with an equivalence class
of classical solutions. 
Combining the above requirements gives the algebraic relations:
\begin{tightenumerate}
\item {\em Complex-linearity}: $[\bgamma](\alpha \bfab_{1} + \beta \bfab_{2}) = \alpha [\bgamma](\bfab_{1})+\beta [\bgamma](\bfab_{2})$ for all $ \alpha,\beta \in \mathbb{C}$ and all $\bfab_{i} \in C^{\infty}_{0}(T^{0}_{2}(M))$ such that $\nabla^{a}(f_{i})_{(ab)}=0$;
\item {\em Hermiticity}: $[\bgamma](\bfab)^{*} = [\bgamma](\bfab^{*})$ for all $\bfab \in C^{\infty}_{0}(T^{0}_{2}(M))$ such that $\nabla^{a}f_{(ab)}=0$;
\item {\em Symmetry}: $[\bgamma](\bfab) = 0$ for all antisymmetric $ \bfab \in C^{\infty}_{0}(T^{0}_{2}(M))$;
\item {\em Field equation}: $[\bgamma](L(\bfab)) = 0 $ for all $ \bfab \in C^{\infty}_{TC}(T^{0}_{2}(M))$ such that $L(\bfab)\in C_0^\infty(S^0_2 (M))$.
\item {\em Commutation relation}: $[[\bgamma](\bfab_{1}),\ [\bgamma](\bfab_{2})] = -2i\bE(\bfab^{s}_{1},\overline{\bfab^{s}_{2}}) \II$ for all $\bfab_{i} \in C^{\infty}_{0}(T^{0}_{2}(M))$ such that $ \nabla^{a}(f_{i})_{(ab)}=0$.
\end{tightenumerate}
Note that (a) in the statement of the field equation, the linearized Einstein tensor $L(\bfab)$
is symmetric (by definition of $L$) and divergence free (by the Bianchi identities) for any 
$\bfab \in C^{\infty}_{TC}(T^{0}_{2}(M))$;
(b) the final relation implies that the commutator vanishes if the $\bfab_{i}$ are spacelike separated, as required by the Bose statistics of the spin-two field. As discussed in section~\ref{sec:phasespacereform}, the use of time-compact test tensors in (iv) is necessary to allow for certain effects of nontrivial topology. 
An analogous modification to the axioms for the electromagnetic field is required if the 
topological restrictions imposed in~\cite{DimockEM,CJFMJP} are relaxed. Also note that (ii) corresponds to item (ii) of Theorem~\ref{thm:relations} for real-valued $\bgamma$. 

The algebra of observables $\mathcal{A} (M,\bg)$, where $(M,\bg)$ is the background spacetime, will consist of finite linear combinations of finite products of $[\bgamma](\bfab)$, $[\bgamma](\bfab)^{*}$ and a unit $\II$ obeying the above relations. More formally, it may be constructed as follows: we first  form the free unital $*$-algebra generated by symbols $[\bgamma](\bfab)$ together with a unit $\II$. However, this algebra is too large as it does not take into account the relations. To impose them we quotient by the two-sided $*$-ideal they generate.\footnote{We use the same notation for $[\bgamma](\bfab)$ and the corresponding equivalence class in the quotient.}

The algebra $\mathcal{A}(M,\bg)$ may also be constructed from a different perspective, as the 
application of a quantization functor to the classical complexified phase space $\mathscr{P}(M)$ with symplectic product $\omega$ (see~\cite[Sec.~5]{FewVer:dynloc2} for the definition and properties of this functor). This is particularly convenient in the discussion of covariance below. The equivalence of these approaches follows directly from Proposition~\ref{prop:phase_reform}.

We now state and prove a time-slice property. This shows that the algebra is generated
by smeared fields with smearings supported in any slice around a Cauchy surface.
\begin{thm}\label{thm:timeslice}
Given any connected causally convex open neighbourhood $\mathscr{N}$ of a spacelike Cauchy surface $\Sigma$ and a $\bfab \in C^{\infty}_{0}(S^{0}_{2}(M))$ with $\nabla\cdot\bfab=0$ then there exists a $\tilde{\bfab} \in C^{\infty}_{0}(S^{0}_{2}(\mathscr{N}))$ with $\nabla\cdot\tilde{\bfab}=0$ and a $\bh \in C^{\infty}_{0}(S^{0}_{2}(M))$ such that 
\begin{equation}
\bfab=\tilde{\bfab}-2 L(\bh).
\end{equation}
This entails that $[\bgamma](\tilde{\bfab})=[\bgamma](\bfab)$ in $\mathcal{A}(M,\bg)$.
\end{thm}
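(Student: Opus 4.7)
The plan is to adapt the standard scalar-field time-slice trick---in which a compactly supported test function is split across the slice using $\bE^+$ and $\bE^-$---to the linearised gravity setting, with the Einstein operator $L$ playing the role that the Klein--Gordon operator plays in the scalar case. The main technical work is handling the trace-reversal and the fact that the natural ``cutoff'' object is not itself in de Donder gauge, so the Lie-derivative correction in Theorem~\ref{thm:Lident} must be tracked carefully. I would begin by invoking the usual smooth Cauchy time function on $M$ to produce smooth spacelike Cauchy surfaces $\Sigma_\pm \subset \mathscr{N}$ with $\Sigma\subset I^+(\Sigma_-)\cap I^-(\Sigma_+)$, using causal convexity of $\mathscr{N}$ to deduce that $I^+(\Sigma_-)\cap I^-(\Sigma_+)\subset\mathscr{N}$. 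Pick $\chi\in C^\infty(M)$ with $\chi\equiv 1$ on $I^-(\Sigma_-)$ and $\chi\equiv 0$ on $I^+(\Sigma_+)$, so $\supp(\nabla\chi)\subset\mathscr{N}$.

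Next, define
\[
\bh := \chi\,\bE^+\overline{\bfab} + (1-\chi)\,\bE^-\overline{\bfab}.
\]
This is symmetric, and compactly supported by global hyperbolicity, since $\supp(\chi\bE^+\overline{\bfab})\subset J^+(\supp\bfab)\cap J^-(\Sigma_+)$ and similarly for the other term. I would then apply Theorem~\ref{thm:Lident} to compute $2L(\bh)=-P\overline{\bh}+\overline{\pounds_{(\nabla\cdot\overline{\bh})^\sharp}\bg}$. By Lemma~\ref{lem:overlinegammaEf}, $\overline{\bh}=\chi\bE^+\bfab+(1-\chi)\bE^-\bfab$. The hypothesis $\nabla\cdot\bfab=0$ together with Lemma~\ref{lem:nablaP} gives $(\Box+\Lambda)\nabla\cdot(\bE^\pm\bfab)=0$ with past/future-compact support, forcing $\nabla\cdot\bE^\pm\bfab=0$. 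A direct Leibniz computation, using $P\bE^\pm\bfab=\bfab$ and $\bE=\bE^--\bE^+$, then yields
\begin{equation*}
P\overline{\bh} = \bfab-[P,\chi]\bE\bfab,\qquad \nabla\cdot\overline{\bh} = -(\nabla\chi)\cdot\bE\bfab.
\end{equation*}

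Substituting these into $2L(\bh)=-P\overline{\bh}+\overline{\pounds_{(\nabla\cdot\overline{\bh})^\sharp}\bg}$ and rearranging gives $\bfab = \tilde{\bfab}-2L(\bh)$ with
\[
\tilde{\bfab} := [P,\chi]\bE\bfab - \overline{\pounds_{((\nabla\chi)\cdot\bE\bfab)^\sharp}\bg}.
\]
Both terms are symmetric; their supports lie in $\supp(\nabla\chi)\cap\supp(\bE\bfab)\subset\mathscr{N}\cap J(\supp\bfab)$, which is compact as the intersection of a timelike-compact with a spacelike-compact set. Divergence-freeness of $\tilde{\bfab}$ follows from $\nabla\cdot\tilde{\bfab}=\nabla\cdot\bfab+2\nabla\cdot L(\bh)=0$, where the linearised Bianchi identity $\nabla^a L_{ab}=0$ on cosmological vacuum backgrounds follows by combining Theorem~\ref{thm:Lident} with Lemmas~\ref{lem:divtracereversegauge} and~\ref{lem:nablaP}. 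The concluding algebraic identity $[\bgamma](\tilde{\bfab})=[\bgamma](\bfab)$ is then immediate: apply complex-linearity to $\bfab=\tilde{\bfab}-2L(\bh)$ and observe that the field-equation axiom gives $[\bgamma](L(\bh))=0$ since $\bh\in C^\infty_0(S^0_2(M))\subset C^\infty_{TC}(T^0_2(M))$ with $L(\bh)\in C^\infty_0(S^0_2(M))$. The step I expect to require the most care is verifying that the Lie-derivative correction term in $\tilde{\bfab}$ indeed sits inside $\mathscr{N}$; this hinges on the observation that its driving vector field $(\nabla\chi)\cdot\bE\bfab$ inherits its support from $\nabla\chi$ intersected with the spacelike-compact set $\supp(\bE\bfab)$.
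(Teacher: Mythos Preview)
Your proof is correct and takes a genuinely different route from the paper's. The paper defines $\tilde{\bfab}:=2L(\chi^+\bE\overline{\bfab})$ first (with $\chi^+$ corresponding to your $1-\chi$) and then obtains $\bh$ indirectly: it invokes Lemma~\ref{lem:Efpastfuture} to show $\bE\overline{\tilde{\bfab}}\sim\bE\overline{\bfab}$, writes the gauge vector as $\tilde{\bE}\bv$, and finally appeals to Lemma~\ref{lem:fequalsLh} to extract a compactly supported $\bh$ with $\bfab-\tilde{\bfab}=-2L(\bh)$. You instead construct $\bh$ explicitly as $\chi\bE^+\overline{\bfab}+(1-\chi)\bE^-\overline{\bfab}$ and compute $\tilde{\bfab}=\bfab+2L(\bh)$ directly via Theorem~\ref{thm:Lident}. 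In fact the two $\tilde{\bfab}$'s coincide: since $\nabla\cdot\bE^+\bfab=0$ one has $2L(\bE^+\overline{\bfab})=-\bfab$, and rewriting your $\bh$ as $\bE^+\overline{\bfab}+(1-\chi)\bE\overline{\bfab}$ gives $\bfab+2L(\bh)=2L((1-\chi)\bE\overline{\bfab})$, which is precisely the paper's $\tilde{\bfab}$. Your approach is more elementary in that it avoids Lemmas~\ref{lem:Efpastfuture} and~\ref{lem:fequalsLh} entirely and produces an explicit $\bh$; the paper's approach has the advantage of making the relationship $[\bE\overline{\tilde{\bfab}}]=[\bE\overline{\bfab}]$ in $\mathscr{P}(M)$ transparent along the way, which is closer in spirit to the phase-space reformulation of Section~\ref{sec:phasespacereform}.
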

\begin{proof} This follows the method used in the electromagnetic case from \cite[Prop.~A.3(b)]{CJFMJP}. By assumption, $\mathscr{N}$
is globally hyperbolic, and we may choose Cauchy surfaces $\Sigma^\pm$ for $M$  with $\Sigma^\pm\subset I^\pm(\Sigma)\cap\mathscr{N}$.
Take two scalar functions $\chi^{\pm} \in C^{\infty}(M)$ satisfying $\chi^{+}=1$ in $J^{+}({\Sigma^+})$, $\chi^{+}=0$ in $J^{-}({\Sigma^-})$ and $\chi^{+}+\chi^{-}=1$. Define
\begin{equation}\label{eq:LcEf}
\tilde{\bfab}:=2L(\chi^{+}\bE \overline{\bfab}),
\end{equation}
which satisfies $\nabla\cdot\tilde{\bfab}=0$ by the linearized Bianchi identity and is compactly supported within $\mathscr{N}$ ($\chi^{+}\bE \overline{\bfab}$ evidently vanishes to the past of $\mathscr{N}$ and coincides with a
de Donder solution to the linearized equations to the future of  $\mathscr{N}$ by hypothesis
on $\bfab$). Note that \eqref{eq:LcEf} implies that  $2L(\chi^{-}\bE \overline{\bfab})=-\tilde{\bfab}$. 
By Lemma~\ref{lem:Efpastfuture} we have $-\bE^{+}\overline{\tilde{\bfab}} \thicksim \chi^{+}
\bE \overline{\bfab}$, $\bE^{-}\overline{\tilde{\bfab}} \thicksim \chi^{-}\bE \overline{\bfab}$ and hence \begin{equation}
\label{eqn:EfgaugeequivEf}
\bE \overline{\tilde{\bfab}}=\bE \overline{\bfab} +\pounds_{\bw}\bg
\end{equation}
for some $\pounds_{\bw}\bg\in \mathscr{G}(M)$. As $\nabla\cdot\tilde{\bfab}=\nabla\cdot\bfab=0$, both $\bE \overline{\tilde{\bfab}}$ and $\bE \overline{\bfab}$ are de Donder solutions and so $\bw$ solves $(\Box+\Lambda )\bw=0$
(see the remarks following Theorem~\ref{thm:deDsplit}); hence by \cite[Thm~3.4.7]{Bar}, $\bw=\tilde{\bE}\bv$ for some $\bv \in C^{\infty}_{0}(T^{1}_{0}(M))$. Substituting this result into \eqref{eqn:EfgaugeequivEf} and using Lemma~\ref{lem:liederivE} gives $\bE(\overline{\tilde{\bfab}}-\overline{\bfab}-\pounds_{\bv}\bg)=0$. Using Lemma~\ref{lem:fequalsLh} gives the desired result.~\end{proof}

Finally, we note that the theory could also be quantized by means of the Weyl algebra. This
results in a unital $C^*$-algebra $\mathcal{W}(M,\bg)$ generated by elements $W([\bgamma])$
($[\bgamma]\in\mathscr{P}_{\mathbb{R}}(M)$) obeying the relations
\[
W(0)= \II, \quad W([\bgamma])^*=W(-[\bgamma]), \quad 
W([\bgamma]+[\bgamma']) = e^{i\omega([\bgamma],[\bgamma'])/2} W([\bgamma])W([\bgamma']);
\]
and with the unique norm making the resulting system a $C^*$-algebra -- for an explicit
construction of the algebra see, for example,~\cite{ManVer1968} or~\cite{Bar} (example 4.2.2 of that reference)
in the case of non-degenerate $\omega$, or~\cite{BiHoRi2004} in the degenerate case.

\subsection{Covariance}

We briefly discuss the extent to which the theory we have constructed may be formulated
as a locally covariant theory in the functorial sense introduced by Brunetti, Fredenhagen and Verch (BFV)~\cite{BFV}; however, for brevity, we will not emphasize the categorical structures here (they are 
easily reinserted). Let $(M,\bg_M)$ and $(N,\bg_N)$ be globally hyperbolic spacetimes solving the vacuum Einstein equation with cosmological constant $\Lambda$, and assume that these spacetimes are endowed with time-orientations and orientations, which we leave implicit in what follows. Consider
any smooth embedding $\psi:M\to N$ that is an isometry, preserves the (time-)orientation  and has a causally convex image. We restrict to 
spacetimes for which $\mathscr{P}(M)$ and $\mathscr{P}(N)$ are both
weakly non-degenerate, though we do not assume that $M$ and $N$ have
compact Cauchy surfaces.\footnote{Note, however, that if $M$ has compact Cauchy surface, the
existence of the embedding $\psi$ entails that $N$ also has compact Cauchy surfaces, which
are oriented-diffeomorphic to those of $M$; see~\cite[Prop.~2.3(a)]{FewVer:dynloc_theory}.}
 
Owing to the isomorphism in Proposition~\ref{prop:phase_reform}, the push-forward 
$\psi_*$ of compactly supported tensor fields induces a linear map $\mathscr{P}(\psi):
\mathscr{P}(M)\to\mathscr{P}(N)$ so that $\mathscr{P}(\psi)[\bE_M\overline{\bfab}] = [\bE_N\psi_*\overline{\bfab}]$,
provided that $\psi_*\hat{\mathscr{L}}(M)\subset \hat{\mathscr{L}}(N)$. This raises the following question:
given any $\bk\in C^\infty_{TC}(S^0_2(M))$ so that $L_M(\bk)$ is compactly supported, 
does there exist $\bk'\in C^\infty_{TC}(S^0_2(N))$ so that $L_N(\bk')=\psi_*L_M(\bk)$?
Clearly, if $\bk$ is compactly supported,  $\bk'=\psi_*\bk$ has the required property, so 
$\psi_*\mathscr{L}(M)\subset\mathscr{L}(N)\subset \hat{\mathscr{L}}(N)$.
However, this argument cannot be used if $\bk$ is a general element of $C^\infty_{TC}(S^0_2(M))$,
and it seems that there may be a genuine obstruction in some (though not all) cases where $\mathscr{L}(M)$ is a  proper subspace of $\hat{\mathscr{L}}(M)$. This is analogous to the distinction between
de Rahm cohomology with or without compact support; an equivalent question here is
whether or not $\psi^*\hat{\mathscr{G}}(N)$ exhausts $\hat{\mathscr{G}}(M)$. 

If, indeed, $\psi_*\hat{\mathscr{L}}(M)\subset \hat{\mathscr{L}}(N)$ then the map
$\mathscr{P}(\psi)$ is easily seen to be symplectic, as
\begin{align}
\omega_N(\mathscr{P}(\psi)[\bE_M\overline{\bfab}],\mathscr{P}(\psi)[\bE_M\overline{\bfab'}]) &= 
\omega_N([\bE_N\psi_*\overline{\bfab}], [\bE_N\psi_*\overline{\bfab'}]) \notag\\ & = 
-\frac{1}{2}\bE_N(\psi_*\bfab,\psi_*\overline{\bfab'})
 = -\frac{1}{2}
\bE_M(\bfab,\overline{\bfab'})  
\notag\\ &= \omega_M([\bE_M\overline{\bfab}],[\bE_M\overline{\bfab'}])
\end{align}
for arbitrary $\bfab,\bfab'\in\mathscr{F}(M)$. It is therefore also injective because
$\mathscr{P}(M)$ was assumed to be weakly non-degenerate. It is also clear that $\mathscr{P}(\psi)$
commutes with complex conjugation. Consequently, the general properties of the quantization
functor used to construct $\mathcal{A}(M,g_M)$ from $\mathscr{P}(M)$ (see~\cite{FewVer:dynloc2}) 
entail the existence of an injective, unit-preserving $*$-homomorphism $\mathcal{A}(\psi):\mathcal{A} (M,\bg_M)\to \mathcal{A} (N,\bg_N)$, uniquely
determined by the property $\mathcal{A}(\psi)[\bgamma]_M(\bfab)= [\bgamma]_N(\psi_*\bfab)$. 
If we consider an additional embedding $\varphi:M'\to M$ of the above type, it is
evident that $\mathscr{A}(\psi)\circ\mathscr{A}(\varphi) = \mathscr{A}(\psi\circ\varphi)$;
it is also clear that $\mathscr{A}$ maps the identity mapping of $M$ to the identity mapping
of $\mathscr{A}(M,\bg_M)$. Moreover, if the image of $\psi$ contains a Cauchy surface of $N$, 
then the map $\mathscr{A}(\psi)$ is surjective by Theorem~\ref{thm:timeslice} and hence
an isomorphism. 

This discussion may be summarised in the language of~\cite{BFV} by saying that  
$\mathscr{A}$ defines a covariant functor from a certain category of globally hyperbolic spacetimes
to the category of unital $*$-algebras with unit-preserving injective $*$-homomorphisms as
morphisms, and that this theory has the time-slice property. Similarly, the quantization of $\mathscr{P}_{\mathbb{R}}(M)$ in terms of Weyl algebras 
defines a functor to the
category of unital $C^*$-algebras. However, the category of spacetimes is a subcategory
of that usually studied  in the BFV formalism: the spacetimes themselves are restricted
to those cosmological vacuum solutions on which the phase space of linearized gravity is
weakly non-degenerate, while the morphisms must be restricted to permit
only those embeddings $\psi:M\to N$ for which $\psi_*\hat{\mathscr{L}}(M)\subset \hat{\mathscr{L}}(N)$. A more geometrical characterization of this class would be desirable. 

The restriction to cosmological vacuum solutions has an important consequence: it is
not possible to formulate the {\em relative Cauchy evolution} that plays an important
part in the BFV formalism -- see~\cite{BFV} 
and~\cite{FewVer:dynloc_theory,FewVer:dynloc2} -- but which requires the 
freedom to consider arbitrary (sufficiently small) compactly supported metric perturbations. 
This is related to the well-known absence of a local stress-energy tensor for the
gravitational field. Finally, the detailed study of circumstances under which 
$\mathscr{P}(M)$ is weakly non-degenerate, and of those morphisms 
$\psi:M\to N$ for which $\mathscr{P}(\psi)$ is well defined, are evidently interesting problems with a cohomological flavour and deserve further investigation.

{\noindent{\em Acknowledgements:} The authors thank Atsushi Higuchi for many extremely useful discussions on this subject and Vincent Moncrief for invaluable assistance (via Atsushi) in understanding non-degeneracy of the symplectic product. We also thank Ian McIntosh for several discussions on geometrical results and Benjamin Lang for discussions concerning analogues with electromagnetism. We are also grateful for comments received from Stanley Deser, Andrew Waldron and Ko Sanders.}

%\newpage

\appendix
\section{Transverse-traceless gauge}\label{appx:TT}

\renewcommand\thethm{\Alph{section}.\arabic{thm}}

\subsection{Conventions}
\label{sec:forms}

As the proof of Theorem~\ref{thm:TTsplit} makes use of differential forms we briefly
summarize the conventions employed. We work with the smooth spacetime $(M,\bg)$ satisfying all of the topological criteria from section~\ref{sec:lingrav}. Our conventions follow~\cite{AMR} and are consistent with those of \cite{CJFMJP,MJP}: for the space of $p$-forms on $M$ we use the notation $\Omega^{p}(M)$, the space of compactly supported $p$-forms on $M$ is denoted $\Omega^{p}_{0}(M)$ and the space of $p$-forms that are spacelike-compact is denoted by $\Omega^{p}_{SC}(M)$.

The wedge-product of $\balpha \in \Omega^{p}(M)$ and $\bbeta \in \Omega^{q}(M)$ is $\balpha \wedge \bbeta \in \Omega^{p+q}(M)$, which is given by
\begin{equation}
(\alpha \wedge \beta)_{a_{1} \dots a_{p+q}} = \frac{(p+q)!}{p!q!}\alpha_{[a_{1}\dots a_{p}}\beta_{a_{p+1}\dots a_{p+q}]}.
\end{equation}

The Hodge star operator is the map $*: \Omega^{p}(M) \to \Omega^{n-p}(M)$ uniquely defined~\cite[Prop.~6.2.12]{AMR} by the condition
%\begin{equation}
$\balpha \wedge * \bbeta = (\balpha,\bbeta)_{\bg}dvol_{\bg}$
%\end{equation}
for all $\balpha,\bbeta \in \Omega^{p}(M)$. Its square is, by~\cite[Prop.~6.2.13]{AMR}, equal to
\begin{equation}
(*)^{2}=(-1)^{p(n-p)+s},
\end{equation}
where $n$ is the dimension of the manifold and $s$ is the index of the metric $\bg$. On a four-dimensional Lorentzian manifold (in $-+++$ signature) this becomes $(*)^{2} = (-1)^{p+1}$, while on a spacelike Cauchy surface thereof it reduces to $(*)^{2} =1$.

There is a standard pairing, see \cite[p.~538]{AMR}, between $p$-forms on a manifold $M$: given $\bw \in \Omega^{p}(M)$ and $\bfab \in \Omega^{p}_{0}(M)$ one defines
\begin{equation}
\label{eqn:formspairing}
\langle \bw , \bfab \rangle_{M} := \int_{M}{\bw \wedge * \bfab}.
\end{equation}
If $M$ is compact then the restriction that $\bfab$ be compactly supported can be dropped.

The exterior derivative $\text{d}:\Omega^{p}(M) \to \Omega^{p+1}(M)$ is given by
\begin{equation}
(\text{d} \alpha)_{a_{1} \dots a_{p+1}}=(p+1)\nabla_{[a_{1}}\alpha_{a_{2}\dots a_{p+1}]},
\end{equation}
where $\balpha \in \Omega^{p}(M)$. The codifferential $\delta :\Omega^{p}(M)\to\Omega^{p-1}(M)$ is, by \cite[Dfn~6.5.21]{AMR}, given by
\begin{equation}
\label{eqn:codifferential}
\delta = (-1)^{n(p-1)+s+1}*\text{d}*
\end{equation}
and always annihilates $0$-forms. On the spacetime $(M,\bg)$, $\delta = * \text{d} *$, while on spacelike Cauchy surfaces, $\delta = (-1)^{p} * \text{d} *$.

By using Stokes' Theorem one can show that
\begin{equation}
\label{eqn:dtodelta}
\langle \text{d} \balpha,\bbeta \rangle=\langle \balpha, \delta \bbeta \rangle
\end{equation}
for $\balpha \in \Omega^{p-1}(M)$ and $\bbeta \in \Omega^{p}(M)$, provided at least one of them is
compactly supported.

\subsection{The transverse-traceless gauge}

By using differential forms we can make use of the methodology of \cite[Prop.~2.6]{MJP}, which deals with obtaining the Lorenz gauge in electromagnetism, where there is an identical hyperbolic equation and a constraint similar to \eqref{eqn:remDDw_and_traceconstraintw}.

As in \cite[Sec.~2.4]{MJP}, define $i: \Sigma \to M$ to be the embedding of the Cauchy surface $\Sigma$ in the spacetime $M$ and define the following forms on a Cauchy surface $\Sigma$:
\begin{align}
\label{eqn:initial}
{\boldsymbol{w_{(0)}}} &:= i^{*} \bw \in \Omega^{1}_{0}(\Sigma) \\
\label{eqn:initiald}
{\boldsymbol{ w_{(d)}}} &:= -*i^{*}* \text{d} \bw \in \Omega^{1}_{0}(\Sigma)\\
\label{eqn:initialdelta}
w_{(\delta)} &:= i^{*} \delta \bw \in \Omega^{0}_{0}(\Sigma) \\
\label{eqn:initialn}
w_{(n)} &:= -*i^{*}* \bw \in \Omega^{0}_{0}(\Sigma),
\end{align}
which together constitute the Cauchy data on $\Sigma$ for $\bw$, meaning they correspond to $\bw|_{\Sigma}$ and $\nabla_{\bn}\bw|_{\Sigma}$. We sometimes use the notation $\rho_{(0)}$, $\rho_{(d)}$, $\rho_{(\delta)}$ and $\rho_{(n)}$ from \cite[Sec.~2.4]{MJP} for the linear maps \eqref{eqn:initial}, \eqref{eqn:initiald}, \eqref{eqn:initialdelta} and \eqref{eqn:initialn} respectively, applied to one-forms and zero-forms. The forms on a Cauchy surface corresponding to the Cauchy data for a zero-form $\gamma$ are 
\begin{align}
\label{eqn:gamma0}
\gamma_{(0)} &:= \rho_{(0)}\gamma \in \Omega^{0}_{0}(\Sigma) \\
\label{eqn:gammad}
\gamma_{(\text{d})} &:= \rho_{(\text{d})}\gamma \in \Omega^{0}_{0}(\Sigma).
\end{align}
(Note: $\gamma_{(\delta)}$ and $\gamma_{(n)}$ are automatically zero.) Writing \eqref{eqn:remDDw_and_traceconstraintw} in forms notation gives
\begin{align}
\label{eqn:remDDforms}
-(\delta \text{d}+\text{d} \delta)\bw+2\Lambda \bw &= 0, \\
\label{eqn:TTconstraintforms}
\delta \bw &=\frac{1}{2}\gamma
\end{align}
respectively. In \eqref{eqn:TTconstraintforms} we used $\delta \bw=-\nabla^{a}w_{a}$ and in \eqref{eqn:remDDforms} the `extra' $\Lambda \bw$ comes from non-commutativity of covariant derivatives and using \eqref{eqn:Riccitenscalvalue}.

We will require scalar and one-form Green's identities, see \cite[Appx A]{Furlani:1999}, \cite[Sec.~2.4]{MJP},\footnote{The formulae in~\cite{Furlani:1999,MJP} are modified by changing to our $-+++$
signature convention. Note that there are sign errors in Eq.~(2.21) in \cite{MJP},
and that~\cite{Furlani:1999} uses the retarded-minus-advanced propagator, whereas we use
the advanced-minus-retarded.} which link solutions of hyperbolic equations to their initial data on a spacelike Cauchy surface $\Sigma$. For $\bw \in \Omega^{1}_{SC}(M)$ solving \eqref{eqn:remDDforms} we have
\begin{equation}
\label{eqn:greensidentity}
\langle \bw,\bfab \rangle_{M}=\langle {\boldsymbol{ w_{(0)}}},\rho_{(d)}\bE\bfab \rangle_{\Sigma} + \langle w_{(\delta)},\rho_{(n)}\bE\bfab \rangle_{\Sigma}  - \langle {\boldsymbol{ w_{(d)}}},\rho_{(0)}\bE\bfab \rangle_{\Sigma} - \langle w_{(n)},\rho_{(\delta)}\bE\bfab \rangle_{\Sigma},
\end{equation}
where $\bfab \in \Omega^{1}_{0}(M)$ and $\bE$ is the advanced-minus-retarded solution operator for the differential operator $-(\delta \text{d} + \text{d} \delta) + 2\Lambda$ acting on $1$-forms (see section~\ref{sec:greensoperators} for further details on Green's operators). 
The scalar case is \eqref{eqn:scalargreensidentity}.

We now show what constraints the Cauchy data need to satisfy in order that the solution to \eqref{eqn:remDDforms} also satisfies \eqref{eqn:TTconstraintforms}; we adapt \cite[Prop.~2.6]{MJP} to achieve this.
\begin{thm}
\label{thm:TTgauge}
Suppose $\bw \in \Omega^{1}_{SC}(M)$ solves $(-(\delta \text{d} + \text{d} \delta) + 2\Lambda) \bw=0$ and $\bgamma$ is a de Donder solution, then $\delta \bw=\frac{1}{2} \gamma$ if and only if $w_{(\delta)}=\frac{1}{2}\gamma_{(0)}$ and $\delta {\boldsymbol{ w_{(d)}}}+2\Lambda w_{(n)}=\frac{1}{2} \gamma_{(\text{d})}$.
\end{thm}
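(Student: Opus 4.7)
The plan is to convert the global identity $\delta\bw = \frac{1}{2}\gamma$ into a question about Cauchy data for a scalar wave equation on $M$. Set $\phi := \delta\bw - \frac{1}{2}\gamma$. Applying $\delta$ to the hypothesis $\bigl(-(\delta d + d\delta) + 2\Lambda\bigr)\bw = 0$ and using $\delta^2 = 0$ gives $-\delta d(\delta\bw) + 2\Lambda\,\delta\bw = 0$; since $\delta\bw$ is a $0$-form and $-\delta d$ coincides with $\Box$ on $0$-forms, this reads $(\Box + 2\Lambda)\delta\bw = 0$. Equation~\eqref{eqn:traceeqnmotion} supplies the corresponding identity $(\Box + 2\Lambda)\gamma = 0$ for the trace of any de Donder solution, so subtracting gives $(\Box + 2\Lambda)\phi = 0$ on $M$.

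Uniqueness of the Cauchy problem \cite[Cor.~3.2.4]{Bar} then gives $\phi \equiv 0$ on $M$ if and only if $\phi|_\Sigma = 0$ and $\nabla_{\bn}\phi|_\Sigma = 0$. The first of these is $i^*\delta\bw - \frac{1}{2}i^*\gamma = w_{(\delta)} - \frac{1}{2}\gamma_{(0)}$, exactly the first stated condition. Since $\nabla_{\bn}\gamma|_\Sigma = \gamma_{(d)}$ by definition, the second reduces to proving the pointwise identity
\[
\nabla_{\bn}(\delta\bw)|_\Sigma = \delta\,{\boldsymbol{w_{(d)}}} + 2\Lambda\,w_{(n)},
\]
where the right-hand $\delta$ is the codifferential on $\Sigma$.

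The main obstacle is this last identity, and it is where the full equation of motion for $\bw$ is used (not merely its consequence for the scalar $\delta\bw$). My approach would be to write $\delta\bw = -\nabla^a w_a$ and split it via $g^{ab} = -n^a n^b + q^{ab}$ into pieces normal and tangential to $\Sigma$, then take a normal derivative. The cosmological-vacuum commutator $\nabla^a\nabla_b w_a - \nabla_b \nabla^a w_a = \Lambda w_b$ (which follows from $R_{ab} = \Lambda g_{ab}$ and the Ricci identity, and was already exploited in Lemma~\ref{lem:divtracereversegauge}) together with the wave equation $(\Box + \Lambda)\bw = 0$ then trade the surviving second normal derivative of $\bw$ for a tangential Laplacian acting on the tangential Cauchy data plus a curvature term generating the $2\Lambda\,w_{(n)}$ summand. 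The tangential pieces reassemble into $\delta\,{\boldsymbol{w_{(d)}}}$ using the definitions \eqref{eqn:initial}--\eqref{eqn:initialn} of the Cauchy data, with extrinsic-curvature contributions cancelling on the nose, as they must for the compatibility of the two hyperbolic equations involved. With the identity in hand, both directions of the iff follow immediately from the uniqueness principle above.
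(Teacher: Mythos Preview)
Your strategy is correct and is in fact more economical than the paper's for the backward implication. Both you and the paper reduce the problem to the same pointwise identity on $\Sigma$, namely
\[
\rho_{(d)}(\delta\bw)=\delta\,\boldsymbol{w_{(d)}}+2\Lambda\,w_{(n)}
\]
for any $\bw$ solving $(-(\delta d+d\delta)+2\Lambda)\bw=0$. The paper establishes this identity in its $(\Rightarrow)$ argument via the forms calculus: apply $-*i^{*}*d$ to $\delta\bw$, use the wave equation to trade $d\delta\bw$ for $-\delta d\bw+2\Lambda\bw$, and then unpack $*i^{*}*\delta d\bw$ as $\delta\boldsymbol{w_{(d)}}$ using that $d$ commutes with $i^{*}$. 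That computation is short and avoids the extrinsic-curvature bookkeeping you sketch; you may simply borrow it rather than redo everything in indices.

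Where the two proofs diverge is the $(\Leftarrow)$ direction. The paper tests $\delta\bw-\tfrac12\gamma$ against an arbitrary $f\in\Omega^{0}_{0}(M)$, expands $\langle\bw,df\rangle_M$ by the one-form Green's identity~\eqref{eqn:greensidentity}, uses the intertwining $\bE d=dE$, substitutes the two Cauchy-data constraints, and finally compares with the scalar Green's identity~\eqref{eqn:scalargreensidentity} for $\gamma$. You instead observe directly that $\phi=\delta\bw-\tfrac12\gamma$ solves the normally hyperbolic scalar equation $(\Box+2\Lambda)\phi=0$ (one line, from $\delta$ applied to the one-form equation together with~\eqref{eqn:traceeqnmotion}), so that $\phi\equiv 0$ is equivalent to vanishing Cauchy data by~\cite[Cor.~3.2.4]{Bar}. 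This bypasses the propagator/Green's-identity machinery entirely and makes the equivalence manifestly symmetric.

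The only soft spot in your write-up is the index-calculus derivation of the key identity: the appeal to ``extrinsic-curvature contributions cancelling on the nose'' is plausible but not demonstrated. Since the paper's forms computation proves exactly this identity in two lines without ever meeting the extrinsic curvature explicitly, the cleanest fix is to adopt that argument for the identity and keep your uniqueness argument for the equivalence.
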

\begin{proof} $(\Rightarrow)$ The pull-back of the  
constraint $\delta \bw=\frac{1}{2} \gamma$ 
to the Cauchy surface gives $w_{(\delta)}=\frac{1}{2}\gamma_{(0)}$,
while applying $-*i^{*}*\text{d}$ to the constraint gives  
\begin{equation}
\label{eqn:rhodconstraintexplic}
-*i^{*}*\text{d} \delta \bw =  
\rho_{(d)} \delta \bw = \frac{1}{2} \rho_{(d)} \gamma = 
\frac{1}{2} \gamma_{(\text{d})}.
\end{equation}
Using \eqref{eqn:remDDforms}, this yields 
\begin{equation}
\label{eqn:rhodKGw}
*i^{*}* \delta \text{d} \bw - 2\Lambda * i^{*}* \bw = \frac{1}{2} \gamma_{(\text{d})},
\end{equation}
which may be rewritten as $\delta {\boldsymbol{ w_{(d)}}} + 2 \Lambda w_{(n)} = \frac{1}{2} \gamma_{(\text{d})}$ by using the definitions of the various quantities and the fact that $d$ commutes
with $i^{*}$.

$(\Leftarrow)$ To prove that such a $\bw$ will satisfy the constraint $\delta \bw = \frac{1}{2}\gamma$ globally on $(M,\bg)$ we begin by taking an arbitrary $f \in \Omega^{0}_{0}(M)$ and computing
\begin{multline}
\label{eqn:Greenexpwdf}
\langle \delta \bw,f \rangle_{M} = \langle \bw,\text{d}f \rangle_{M}=\langle {\boldsymbol{ w_{(0)}}},\rho_{(d)}\bE\text{d}f \rangle_{\Sigma} + \langle w_{(\delta)},\rho_{(n)}\bE\text{d}f \rangle_{\Sigma} \\ - \langle {\boldsymbol{ w_{(d)}}},\rho_{(0)}\bE\text{d}f \rangle_{\Sigma} - \langle w_{(n)},\rho_{(\delta)}\bE\text{d}f \rangle_{\Sigma},
\end{multline}
where we use \eqref{eqn:dtodelta} and \eqref{eqn:greensidentity}.

From \cite[Prop.~2.1]{MJP} we know that $\bE\text{d}= \text{d}E$, where $E$ is the 
advanced-minus-retarded solution operator for the differential operator $-(\delta \text{d} + \text{d} \delta) + 2\Lambda =-\delta \text{d}+ 2\Lambda$ acting on $0$-forms. Using this and $\rho_{(d)}\text{d}=0$, \eqref{eqn:Greenexpwdf} reduces to
\begin{equation}
\langle \delta \bw,f \rangle_{M}=\langle w_{(\delta)},\rho_{(n)}\text{d}Ef \rangle_{\Sigma} - \langle {\boldsymbol{ w_{(d)}}},\rho_{(0)}\text{d}Ef \rangle_{\Sigma} - \langle w_{(n)},\rho_{(\delta)}\text{d}Ef \rangle_{\Sigma}.
\end{equation}
For the second term on the right-hand side we can use that the pullback and the exterior derivative commute, and then \eqref{eqn:dtodelta} to obtain
\begin{align}
\langle \delta \bw,f \rangle_{M} &=\langle w_{(\delta)},\rho_{(n)}\text{d}Ef \rangle_{\Sigma} - \langle \delta {\boldsymbol{ w_{(d)}}},\rho_{(0)} Ef \rangle_{\Sigma} - \langle w_{(n)},\rho_{(\delta)}\text{d}Ef \rangle_{\Sigma}\\
\label{eqn:deltawfcauchy}
&= \frac{1}{2}\langle \gamma_{(0)},\rho_{(n)}\text{d}Ef \rangle_{\Sigma} - \frac{1}{2} \langle \gamma_{(d)} ,\rho_{(0)} Ef \rangle_{\Sigma}\notag \\ &\qquad\qquad+ 2\Lambda \langle w_{(n)},\rho_{(0)} Ef  \rangle - \langle w_{(n)},\rho_{(\delta)}\text{d}Ef \rangle_{\Sigma},
\end{align}
where in \eqref{eqn:deltawfcauchy} we substituted the restrictions on the Cauchy data. The last two terms cancel because
$\rho_{(\delta)}\text{d}Ef = \rho_{(0)}\delta \text{d}Ef = 2\Lambda\rho_{(0)}Ef$.
Therefore
\begin{equation}\label{eqn:deltawf}
\langle \delta \bw,f \rangle_{M} =\frac{1}{2}\langle \gamma_{(0)},\rho_{(d)}Ef \rangle_{\Sigma} - \frac{1}{2} \langle \gamma_{(d)} ,\rho_{(0)} Ef \rangle_{\Sigma},
\end{equation}
where we have used that, in this case, $\rho_{(d)}=\rho_{(n)}\text{d}$.

The trace of a de Donder solution $\bgamma$ satisfies the scalar wave equation \eqref{eqn:traceeqnmotion}, 
which in forms notation is $-\delta \text{d}\gamma+2\Lambda \gamma=0$. 
Therefore we can use the scalar version of \eqref{eqn:greensidentity} [recall that $\gamma_{(\delta)}=\gamma_{(n)}=0$], i.e., 
\begin{equation}
\label{eqn:scalargreensidentity}
\langle \gamma, f \rangle_{M} =\langle \gamma_{(0)},\rho_{(d)}Ef \rangle_{\Sigma} -\langle \gamma_{(d)},\rho_{(0)}Ef \rangle_{\Sigma}
\end{equation}
for any $f\in \Omega_0^0(M)$. Comparing this with \eqref{eqn:deltawf} gives 
$\langle \delta \bw,f \rangle_{M} = \frac{1}{2}\langle \gamma,f \rangle_{M}$ 
for all $f \in \Omega^{0}_{0}(M)$, 
and hence $\delta \bw=\frac{1}{2} \gamma$. \end{proof}

We may now give the proof of the splitting $\mathscr{S}(M) = \mathscr{S}^{TT}(M) + \mathscr{G}(M)$
for $\Lambda\neq 0$.

\begin{proofof}{Theorem~\ref{thm:TTsplit}} We know from Corollary~\ref{cor:deDsplitS} that $\mathscr{S}(M)=\mathscr{S}^{dD}(M)+\mathscr{G}(M)$, therefore if we can decompose the space of de Donder solutions as $\mathscr{S}^{dD}(M)=\mathscr{S}^{TT}(M)+\mathscr{G}(M)\cap \mathscr{S}^{dD}(M)$ then we can achieve \eqref{eqn:TTsplit}. Given a perturbation $\bgamma \in \mathscr{S}^{dD}(M)$ on a cosmological vacuum spacetime $(M,\bg)$ with $\Lambda\neq 0$, the constraints of Theorem~\ref{thm:TTgauge} are satisfied by $w_{(0)}=\frac{1}{4\Lambda}\text{d} \gamma_{(0)}$, $w_{(\text{d})}=0$, $w_{(n)} = \frac{1}{4\Lambda} \gamma_{(\text{d})}$ and $w_{(\delta)}=\frac{1}{2}\gamma_{(0)}$ as Cauchy data. The solution with this data is $\bw = \frac{1}{4\Lambda} \text{d} \gamma$, which corresponds to the choice in equation~(9) of~\cite{AHSKlargedist} for de Sitter spacetime. Therefore appropriate Cauchy data exists and one may gauge transform from the de Donder gauge to the transverse-traceless gauge. 
\end{proofof}

In the case $\Lambda=0$, the  second constraint of Theorem~\ref{thm:TTgauge} reduces to $\delta {\boldsymbol{ w_{(d)}}}=\frac{1}{2} \gamma_{(\text{d})}$, which becomes a cohomological problem. The scalar $\gamma_{(\text{d})}$ is co-closed, $\delta \gamma_{(\text{d})}=0$, but is it co-exact?
Equivalently, we must solve 
\begin{equation}
\label{eqn:exact*gammad}
\text{d}(* {\boldsymbol{ w_{(d)}}}) = -\frac{1}{2} * \gamma_{(\text{d})},
\end{equation}
in which $*\gamma_{(\text{d})}$ is a $3$-form on $\Sigma$ and necessarily closed.

There are two cases to consider: depending on whether or not $\bw$ has compact support on Cauchy surfaces. If as we assume, $\bw$ has compact support on Cauchy surfaces then from \cite[Thm~7.5.19(i)]{AMR}, $* \gamma_{(\text{d})}$ is exact if and only if
\begin{equation}
\int_{\Sigma}{* \gamma_{(\text{d})}}=0.
\end{equation}
If, on the other hand, $\Sigma$ is non-compact and $\bw$ is allowed to have non-compact support on $\Sigma$,
\cite[Thm~7.5.19(iii)]{AMR} gives that $H^{3}(\Sigma)=0$ and so $* \gamma_{(\text{d})}$ is exact and the TT gauge may be attained as is standard, e.g.,  in Minkowski spacetime.

\section{Non-degeneracy}
\label{sec:ADMnondegen}

\subsection{Background on the ADM formalism}

In order to prove Theorem~\ref{thm:radical} we need to appeal to the results of \cite{Moncriefdecomp,FM}, which use the ADM formalism (for original references, see
 \cite{ADM}). This formalism puts general relativity into the form of a dynamical system, where given a three-dimensional  smooth manifold $\Sigma$ and Cauchy data $(\bh,\bvarpi)\in C^{\infty}(S^{0}_{2}(\Sigma))\times C^{\infty}(\tilde{S}^{2}_{0}(\Sigma))$ we can obtain a solution $((-\epsilon,\epsilon)\times \Sigma,\bg)$ to Einstein's equation. Here 
$\tilde{S}^{2}_{0}(\Sigma)$ is the space of smooth second rank contravariant tensor densities on $\Sigma$, $\bh$ is the spatial metric on $\Sigma$ and $\varpi^{ab}=\sqrt{h}\left(k^{ab}-\frac{1}{2}h^{ab}k \right)$ is the conjugate momentum, where $\bk$ is the desired extrinsic curvature\footnote{Our convention differs from~\cite{FM}; however, the overall definitions of $\bvarpi$ coincide.} of $\Sigma$ in the solution spacetime and $h$ is the determinant of the metric $\bh$. In fact, to obtain a solution, one also needs to specify a lapse function and a shift vector field on $\Sigma$, which can both be time-dependent and are freely specifiable and non-dynamical. Together they make up the components of a vector field whose integral curves provide a flow of time in spacetime. The spacetime metric $\bg$ is constructed from the lapse and shift as well as the evolved spatial
metric $\bh$ obtained from solving the ADM equations \eqref{eqn:ADMevol} below.

The initial data $(\bh,\bvarpi)$ are not freely specifiable as they need to satisfy constraints given by the map $\Phi:C^{\infty}( S^{0}_{2}(\Sigma)) \times C^{\infty}(\tilde{S}^{2}_{0}(\Sigma)) \to C^{\infty}(\Sigma) \times C^{\infty}(T^{1}_{0}(\Sigma))$, where
\begin{equation}
\label{eqn:GRconstraints}
\Phi(\bh,\bvarpi)=(\mathscr{H}(\bh,\bvarpi),\bdelta(\bh,\bvarpi))
\end{equation}
and the Hamiltonian and momentum constraints are
\begin{equation}
\label{eqn:IVCH}
\mathscr{H}(\bh,\bvarpi)= -R^{(3)}(\bh)+\frac{\varpi^{ab}\varpi_{ab}}{h}-\frac{\varpi^{2}}{2h}+2\Lambda
\end{equation}
and
\begin{equation}
\label{eqn:IVCM}
\delta^{a}(\bh,\bvarpi)= D_{b}\left(\frac{\varpi^{ab}}{\sqrt{h}} \right)
\end{equation}
respectively. Here $R^{(3)}(\bh)$ is the Ricci scalar for the metric $\bh$ and $D_{a}$ is the covariant derivative associated with $\bh$. Vanishing of \eqref{eqn:GRconstraints} is a necessary condition for a spacetime to be a solution to the vacuum Einstein equation with cosmological constant.

For linearized gravity in the ADM formalism, one considers a one-parameter family of 
Cauchy data $(\bh(\lambda),\bvarpi(\lambda))$ and takes the derivative at $\lambda=0$. Thus the Cauchy data for the linearized ADM equations are $(\bgamma^{(3)},\bp)=\left(\frac{\partial \bh(\lambda)}{\partial \lambda},\frac{\partial \bvarpi(\lambda)}{\partial \lambda} \right)_{\lambda=0}$ and the linearized constraints are the components of the derivative of the constraint map \eqref{eqn:GRconstraints} at $(\bh,\bvarpi)$. One should note that to solve the linearized ADM equations one needs to specify a linearized lapse function and linearized shift vector field; as in the full non-linear case they are non-dynamical and freely specifiable.

From now on we assume that the background is a solution to the vacuum Einstein equation with cosmological constant and so $\Phi(\bh,\bvarpi)=0$. The linearized constraints are the derivative of the constraint map \eqref{eqn:GRconstraints},
\begin{equation}
D\Phi(\bh,\bvarpi): C^{\infty}(S^{0}_{2}(\Sigma)) \times C^{\infty}(\tilde{S}^{2}_{0}(\Sigma)) \to C^{\infty}(\Sigma) \times C^{\infty}(T^{1}_{0}(\Sigma))
\end{equation}
 evaluated at $(\bh,\bvarpi)$, where
\begin{equation}
\label{eqn:lincon}
D\Phi(\bh,\bvarpi)\begin{pmatrix}\bgamma^{(3)}\\ \bp\end{pmatrix}= \begin{pmatrix}D\mathscr{H}(\bh,\bvarpi)\begin{pmatrix}\bgamma^{(3)}\\\bp\end{pmatrix}\\ D \bdelta(\bh,\bvarpi)\begin{pmatrix}\bgamma^{(3)}\\ \bp\end{pmatrix}\end{pmatrix}
\end{equation}
whose actions are given by
\begin{multline}
\label{eqn:linconH}
D\mathscr{H}(\bh,\bvarpi)\begin{pmatrix}\bgamma^{(3)}\\ \bp\end{pmatrix}=\frac{1}{h}\left[-\left(\varpi^{ab}\varpi_{ab}-\frac{1}{2}\varpi^{2}\right)\gamma^{(3)}+2\left(\varpi_{ab}p^{ab}-\frac{1}{2}\varpi p\right) \right. \\ \left. +2\left(\varpi^{ac}\varpi_{cb}-\frac{1}{2}\varpi \varpi^{ab}\right)\gamma^{(3)}_{ab} \right]
-\left(D^{a}D^{b}\gamma^{(3)}_{ab}-D^{a}D_{a}\gamma^{(3)} - R^{(3)ab}\gamma^{(3)}_{ab}\right) 
\end{multline}
and
\begin{equation}
D \bdelta(\bh,\bvarpi)\begin{pmatrix}\bgamma^{(3)}\\ \bp\end{pmatrix} =\frac{1}{\sqrt{h}}\left[ 2D_{b}p^{ab}+\varpi^{bc}\left(D_{c}\gamma^{(3)a}_{\hspace{0.1cm}b}+D_{b}\gamma^{(3)a}_{\hspace{0.1cm}c} -D^{a}\gamma^{(3)}_{bc}\right)\right],
\end{equation}
where $\gamma^{(3)}=h^{ab}\gamma^{(3)}_{ab}$, $\varpi=h_{ab}\varpi^{ab}$ and $p=h_{ab}p^{ab}$.
To get these into a form analogous to that in \cite{Moncriefdecomp,FM} we evaluate \eqref{eqn:linconH}  on the constraint surface to give
\begin{multline}
\label{eqn:linconFM}
D\mathscr{H}(\bh,\bvarpi)\begin{pmatrix}\bgamma^{(3)}\\ \bp\end{pmatrix}=\frac{1}{h}\left[-\frac{1}{2}\left(\varpi^{ab}\varpi_{ab}-\frac{1}{2}\varpi^{2}\right)\gamma^{(3)}+2\left(\varpi_{ab}p^{ab}-\frac{1}{2}\varpi p\right) \right. \\ \left. +2\left(\varpi^{ac}\varpi_{c}^{\hspace{0.1cm}b}-\frac{1}{2}\varpi \varpi^{ab}\right)\gamma^{(3)}_{ab} \right]
-\left[D^{a}D^{b}\gamma^{(3)}_{ab}-D^{a}D_{a}\gamma^{(3)} \right. \\ - \left.\left(R^{(3)ab}-\frac{1}{2}h^{ab}R^{(3)}+\Lambda h^{ab}\right)\gamma^{(3)}_{ab}\right].
\end{multline}
Note that the difference between this and the $\Lambda=0$ case considered in equation~(2.8) of \cite{Moncriefdecomp} is the cosmological constant term.

We also require the following inner products, defined in \cite{Moncriefdecomp} (equations (2.4) and (2.6) in that reference). The first acts on the domain of  $D\Phi(\bh,\bvarpi)$, i.e., the vector space $C^{\infty}(S^{0}_{2}(\Sigma)) \times C^{\infty}(\tilde{S}^{2}_{0}(\Sigma))$ and is
\begin{equation}
\label{eqn:inprodhp}
\langle (\bgamma^{(3)},\bp);(\tilde{\bgamma}^{(3)},\tilde{\bp}) \rangle := \int_{\Sigma}{ \left(\sqrt{h}\gamma^{(3)}_{ab}\tilde{\gamma}^{(3)}_{cd}h^{ac}h^{bd}+\frac{1}{\sqrt{h}}h_{ac}h_{bd}p^{ab}p^{cd}\right)},
\end{equation}
where $\bgamma^{(3)},\tilde{\bgamma}^{(3)}\in C^{\infty}(S^{0}_{2}(\Sigma))$ and $\bp,\tilde{\bp}\in C^{\infty}(\tilde{S}^{2}_{0}(\Sigma))$. The second acts on the codomain
$C^{\infty}(\Sigma) \times C^{\infty}(T^{1}_{0}(\Sigma))$ of  $D\Phi(\bh,\bvarpi)$ as follows
\begin{equation}
\label{eqn:inprodCX}
\langle\! \langle (f,\bV);(\tilde{f},\tilde{\bV}) \rangle\!\rangle := \int_{\Sigma}{(f \cdot \tilde{f}+h_{ab}V^{a}V^{b})dvol_{\bh}},
\end{equation}
where $f,\tilde{f} \in C^{\infty}(\Sigma)$ and $\bV,\tilde{\bV} \in C^{\infty}(T^{1}_{0}(\Sigma))$.

The adjoint of the differential operator $D\Phi(\bh,\bvarpi)$ is calculated, using the inner products \eqref{eqn:inprodhp} and \eqref{eqn:inprodCX} and integration by parts, to be 
\begin{equation}
D\Phi(\bh,\bvarpi)^{*}\begin{pmatrix}f\\ \bV\end{pmatrix}=\begin{pmatrix}D\mathscr{H}(\bh,\bvarpi)^{*}(f)\\ D \bdelta(\bh,\bvarpi)^{*}(\bV)\end{pmatrix},
\end{equation}
where $D\mathscr{H}(\bh,\bvarpi)^{*}(f) = (\balpha,\bbeta)$ with
\begin{multline}
\alpha_{ab}=\frac{1}{h}\left[-\frac{1}{2} \left(\varpi^{cd}\varpi_{cd}-\frac{1}{2}\varpi^{2}\right)h_{ab}f \right.  \left.+2\left(\varpi_{ac}\varpi^{c}_{\hspace{0.1cm}b}-\frac{1}{2}\varpi_{ab}\varpi\right)f \right] \\
-\left[D_{a}D_{b}f-h_{ab}D^{c}D_{c}f-\left(R^{(3)}_{ab}-\frac{1}{2}h_{ab}R^{(3)}+\Lambda h_{ab}\right)f\right] 
\end{multline}
and
\begin{equation}
\beta^{ab}=2f\left(\varpi^{ab}-\frac{1}{2}\varpi h^{ab}\right).
\end{equation}
Also
\begin{equation}
D \bdelta(\bh,\bvarpi)^{*}(\bV)=\begin{pmatrix}\frac{1}{\sqrt{h}}\left(D_{c}(V^{c}\varpi_{ab})-2\varpi^{c}_{\phantom{c}(a}D_{|c|}V_{b)}\right)\\ -\sqrt{h}(D^{a}V^{b}+D^{b}V^{a})\end{pmatrix}.
\end{equation}

We now define a unitary $U:C^{\infty}(S^{0}_{2}(\Sigma))\times C^{\infty}(\tilde{S}^{2}_{0}(\Sigma)) \to C^{\infty}(S^{0}_{2}(\Sigma))\times C^{\infty}(\tilde{S}^{2}_{0}(\Sigma))$ by
\begin{equation}
U(\bgamma^{(3)},\bp):=\left(\frac{-1}{\sqrt{h}}\bp^{\flat \flat}, \sqrt{h} (\gamma^{(3)})^{\sharp \sharp} \right)
\end{equation}
so that $U \circ D\Phi(\bh,\bvarpi)^{*}$ corresponds to $\gamma(\bh,\bvarpi)\equiv\begin{pmatrix} 0 &-1 \\ 1 & 0 \end{pmatrix} \circ D\Phi(\bh,\bvarpi)^{\dagger}$ from \cite{Moncriefdecomp}, where $D\Phi(\bh,\bvarpi)^{\dagger}$ is the `new form of the adjoint' defined in equation~(4.2) of that reference. The inverse map $U^{-1}:C^{\infty}(S^{0}_{2}(\Sigma))\times C^{\infty}(\tilde{S}^{2}_{0}(\Sigma)) \to C^{\infty}(S^{0}_{2}(\Sigma))\times C^{\infty}(\tilde{S}^{2}_{0}(\Sigma))$ is given by
\begin{equation}
U^{-1}(\bgamma^{(3)},\bp)= \left( \frac{1}{\sqrt{h}}\bp^{\flat \flat},-\sqrt{h}(\bgamma^{(3)})^{\sharp \sharp} \right).
\end{equation}
The ADM evolution equations may be written as follows
\begin{equation}
\label{eqn:ADMevol}
\frac{\partial}{\partial \lambda} \begin{pmatrix} \bh(\lambda) \\ \bvarpi(\lambda)\end{pmatrix}=U^{-1} \circ D\Phi(\bh,\bvarpi)^{*}\begin{pmatrix} N \\ -\bN\end{pmatrix},
\end{equation}
where $N$ is the lapse function and $\bN$ is the shift vector field associated with the slicing. The ADM symplectic product on the background $(\bh,\bvarpi)$ is, see \cite[p.~333]{FM}, given by
\begin{equation}
\label{eqn:ADMsymprod}
\omega^{ADM}_{(\bh,\bvarpi)}((\bgamma^{(3)},\bp);(\tilde{\bgamma}^{(3)},\tilde{\bp}))=\int_{\Sigma}{(\gamma^{(3)}_{ab}\tilde{p}^{ab}-\tilde{\gamma}^{(3)}_{ab}p^{ab})d^{3}x}.
\end{equation}
Observe that
\begin{equation}
\label{eqn:ADMsympUinner}
\omega^{ADM}_{(\bh,\bvarpi)}((\bgamma^{(3)},\bp);(\tilde{\bgamma}^{(3)},\tilde{\bp}))=\langle (\bgamma^{(3)},\bp); U^{-1}(\tilde{\bgamma}^{(3)},\tilde{\bp})) \rangle.
\end{equation}

\subsection{Analogues of Moncrief's splitting theorems}

In \cite{Moncriefdecomp} it is shown that the space of initial data can be decomposed into orthogonal subspaces, using the inner product \eqref{eqn:inprodhp}. Here, we generalize these decompositions to the 
case of nonzero cosmological constant. The first splitting is
\begin{equation}
\label{eqn:firstmoncriefsplit}
C^{\infty}(S^{0}_{2}(\Sigma)) \times C^{\infty}(\tilde{S}^{2}_{0}(\Sigma))=\ker D\Phi(\bh,\bvarpi) \oplus \range D\Phi(\bh,\bvarpi)^{*},
\end{equation}
where $\ker D\Phi(\bh,\bvarpi)$ is the subspace of data satisfying the linearized constraints and $\range D\Phi(\bh,\bvarpi)^{*}$ is the unphysical data. This splitting was done for the case of $\Lambda=0$ in \cite[Sec.~3]{Moncriefdecomp} using ellipticity of the operator $D\Phi(\bh,\bvarpi)\circ D\Phi^{*}(\bh,\bvarpi)$, which is proven by showing that  $D\Phi^{*}(\bh,\bvarpi)$ has injective principal symbol and applying \cite[Thm~4.4]{BergerEbin}, which is valid on compact Riemannian manifolds. Since our modifications to the linearized constraint map and its adjoint only add a $\Lambda \bh$ term, the principal symbol will be unaffected and so the operator is still elliptic. Hence the remainder of the Moncrief argument of  \cite[Sec.~3]{Moncriefdecomp} remains valid and the first splitting holds for general $\Lambda$.

The second splitting decomposes the subspace $\ker D\Phi(\bh,\bvarpi)$ into a pure gauge subspace, meaning data for pure gauge solutions, and a physical subspace. In \cite[Sec.~IV]{MoncrieflinstabI} it is shown that data for a pure gauge solution to the linearized equations corresponding to $\pounds_\bw\bg$, on a vacuum spacetime with $\Lambda=0$, is given by
\begin{equation}
\label{eqn:puregaugedata}
(\bgamma^{(3)},\bp)_{\text{gauge}}=U \circ D\Phi(\bh,\bvarpi)^{*}\begin{pmatrix}C\\ \bX\end{pmatrix},
\end{equation}
where $C=n_{a}w^{a}$ and $X^{a}=q^{a}_{\phantom{a}b}w^{b}$ are respectively the normal (with respect to the future pointing normal vector $\bn$) and tangential projections, relative to $\Sigma$ (using the associated projection tensor $q^{a}_{\phantom{a}b}$), of the gauge vector field. The above result was initially proved via a lengthy calculation, and later by more geometrical methods, see \cite[Thm~4.7]{FM}. The result \eqref{eqn:puregaugedata} also holds on vacuum spacetimes with non-vanishing cosmological constant by following the same argument used in the proof of \cite[Thm~4.7]{FM} but instead using the vacuum ADM equations with cosmological constant \eqref{eqn:ADMevol}.

Before performing the final split, one needs to check that the pure gauge subspace lies in the constraint subspace. Again, one could check this by lengthy calculation, as was done in \cite[Thm~4.1]{Moncriefdecomp} for the $\Lambda=0$ case; instead, we appeal to the geometrical method of \cite[Prop.~3.2]{FM} whose result is  unaffected by the inclusion of a cosmological constant.

With the two preceding results and, as argued earlier, ellipticity of $D\Phi(\bh,\bvarpi)\circ D\Phi^{*}(\bh,\bvarpi)$ unaffected by addition of a cosmological constant, the subspace $\ker D\Phi(\bh,\bvarpi)$ can be decomposed, see the argument in between Theorem~{4.1} and Theorem~{4.2} of \cite{Moncriefdecomp}, into
\begin{equation}
\ker D\Phi(\bh,\bvarpi) = \range (U \circ D\Phi(\bh,\bvarpi)^{*})  \oplus \ker ((U \circ D\Phi(\bh,\bvarpi)^{*})^{*} \cap \ker D \Phi(\bh,\bvarpi),
\end{equation}
where the first space is pure gauge and the second space is the physical space.

Therefore the final split of the initial data is
\begin{multline}
C^{\infty}(S^{0}_{2}(\Sigma)) \times C^{\infty}(\tilde{S}^{2}_{0}(\Sigma))=\range D\Phi(\bh,\bvarpi)^{*} \oplus \range (U \circ D\Phi(\bh,\bvarpi)^{*}) \\ \oplus \ker ((U \circ D\Phi(\bh,\bvarpi)^{*})^{*} \cap \ker D \Phi(\bh,\bvarpi),
\end{multline}
which is the same result as the $\Lambda=0$ case from \cite[Thm~4.2]{Moncriefdecomp}. This decomposition allows one to prove that on the space of initial data obeying the constraints, $\ker  D \Phi(\bh,\bvarpi)$, the only degeneracies of the ADM symplectic product are pure gauge. We now give the analogue of \cite[Prop.~4.38]{FM_ECS}.
\begin{thm}
\label{thm:symplecticorthogonalcomplement}
The ADM symplectic orthogonal complement to the subspace $\ker D\Phi(\bh,\bvarpi)$ is the pure gauge space $\range (U \circ D\Phi(\bh,\bvarpi)^{*})$.
\end{thm}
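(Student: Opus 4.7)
The plan is to reduce the symplectic orthogonality condition to an ordinary $L^2$-orthogonality condition using the unitary $U$, and then invoke the first Moncrief splitting \eqref{eqn:firstmoncriefsplit}, which already provides the desired decomposition into constraint-satisfying data and the range of the formal adjoint.

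More concretely, I would argue as follows. By the identity \eqref{eqn:ADMsympUinner} relating $\omega^{ADM}_{(\bh,\bvarpi)}$ to $\langle\cdot,\cdot\rangle$ via $U^{-1}$, a pair $(\tilde{\bgamma}^{(3)},\tilde{\bp})$ lies in the symplectic orthogonal complement of $\ker D\Phi(\bh,\bvarpi)$ if and only if
\[
\langle (\bgamma^{(3)},\bp);\,U^{-1}(\tilde{\bgamma}^{(3)},\tilde{\bp})\rangle = 0
\quad\text{for all } (\bgamma^{(3)},\bp)\in\ker D\Phi(\bh,\bvarpi),
\]
i.e., if and only if $U^{-1}(\tilde{\bgamma}^{(3)},\tilde{\bp})$ is orthogonal, in the inner product \eqref{eqn:inprodhp}, to $\ker D\Phi(\bh,\bvarpi)$.

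Next I invoke the $\langle\cdot,\cdot\rangle$-orthogonal splitting \eqref{eqn:firstmoncriefsplit} (whose validity in the presence of a cosmological constant was noted earlier in the appendix, since the additional $\Lambda$ terms do not affect the principal symbol and hence do not spoil ellipticity of $D\Phi\circ D\Phi^*$). This splitting says precisely that the $\langle\cdot,\cdot\rangle$-orthogonal complement of $\ker D\Phi(\bh,\bvarpi)$ is $\range D\Phi(\bh,\bvarpi)^*$. Consequently, $U^{-1}(\tilde{\bgamma}^{(3)},\tilde{\bp})\in\range D\Phi(\bh,\bvarpi)^*$, and applying $U$ gives $(\tilde{\bgamma}^{(3)},\tilde{\bp})\in\range(U\circ D\Phi(\bh,\bvarpi)^*)$. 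The converse inclusion follows by reversing these steps: any element of $\range(U\circ D\Phi(\bh,\bvarpi)^*)$ pulls back under $U^{-1}$ into $\range D\Phi(\bh,\bvarpi)^*$, which by \eqref{eqn:firstmoncriefsplit} is $\langle\cdot,\cdot\rangle$-orthogonal to $\ker D\Phi(\bh,\bvarpi)$, hence symplectically orthogonal.

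There is essentially no hard step here once the identification \eqref{eqn:ADMsympUinner} has been made; the main point that required work was the extension of Moncrief's splitting theorems to nonzero $\Lambda$, which was already carried out earlier in the appendix. The only subtlety worth flagging is that $U$ is a genuine bijection of the relevant function spaces, so that taking the image under $U$ of an $\langle\cdot,\cdot\rangle$-orthogonal complement produces the symplectic orthogonal complement exactly, with no residual ambiguity.
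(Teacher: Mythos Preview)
Your proof is correct and follows essentially the same route as the paper: use \eqref{eqn:ADMsympUinner} to convert symplectic orthogonality into $\langle\cdot,\cdot\rangle$-orthogonality via $U^{-1}$, then invoke the first Moncrief splitting \eqref{eqn:firstmoncriefsplit}. You are in fact slightly more thorough than the paper, which only writes out the forward inclusion explicitly and also begins by assuming $(\tilde{\bgamma}^{(3)},\tilde{\bp})\in\ker D\Phi(\bh,\bvarpi)$ (which is all that is needed for the application to Theorem~\ref{thm:radical}), whereas you treat the full symplectic complement and spell out the converse.
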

\begin{proof} Let $(\tilde{\bgamma}^{(3)},\tilde{\bp}) \in \ker D\Phi(\bh,\bvarpi)$ satisfy $\omega^{ADM}_{(\bh,\bvarpi)}((\bgamma^{(3)},\bp);(\tilde{\bgamma}^{(3)},\tilde{\bp}))=0$ for all $(\bgamma^{(3)},\bp) \in \ker D\Phi(\bh,\bvarpi)$. Then by \eqref{eqn:ADMsympUinner},
\begin{equation}
\langle (\bgamma^{(3)},\bp); U^{-1}(\tilde{\bgamma}^{(3)},\tilde{\bp})) \rangle=0
\end{equation}
and so $U^{-1}(\tilde{\bgamma}^{(3)},\tilde{\bp})$ is orthogonal to $\ker D\Phi(\bh,\bvarpi)$. By the first Moncrief split \eqref{eqn:firstmoncriefsplit} this means that $U^{-1}(\tilde{\bgamma}^{(3)},\tilde{\bp}) \in \range D\Phi(\bh,\bvarpi)^{*}$. Hence
\begin{equation}
(\tilde{\bgamma}^{(3)},\tilde{\bp}) \in \range(U \circ D\Phi(\bh,\bvarpi)^{*})
\end{equation}
and is therefore pure gauge.  \end{proof}

\subsection{Proof of Theorem~\ref{thm:radical}} 

The main issue is to translate Theorem~\ref{thm:symplecticorthogonalcomplement} into
the setting studied in the main body of the paper.
Begin by taking an arbitrary smooth spacelike Cauchy surface $\Sigma$ and denote by $\mathscr{N}$ a normal neighbourhood of $\Sigma$. (For details about normal neighbourhoods, see the second paragraph of section~\ref{sec:synchronous}.) Assume that a solution $\bgamma^{\prime} \in \mathscr{S}(M)$ is a degeneracy
of the symplectic form $\omega$, i.e., $\omega(\bgamma^{\prime},\bgamma)=0$  for all  $\bgamma \in \mathscr{S}(M)$.
Without loss of generality, $\bgamma^{\prime}$ may be chosen synchronous near $\Sigma$; it will
be enough to restrict attention to synchronous $\bgamma$ as well. (Theorem~\ref{thm:synchronous} means that we can gauge transform any solution to the synchronous gauge near $\Sigma$ and since, by Lemma~\ref{lem:puregaugedegen},  pure gauge is a degeneracy then $\bgamma^{\prime}$ will still be a degeneracy of $\omega$.) 

We now restrict our attention to the normal neighbourhood $\mathscr{N}$, on which we can introduce Gaussian normal coordinates. In such coordinates the spacetime metric takes the form $\bg=-dt \otimes dt + \tilde{h}_{ij}dx^{i}\otimes dx^{j}$ and the synchronous condition is precisely $\gamma_{0\mu}=0$. The solutions $\bgamma^{\prime},\bgamma$ correspond to solutions to the linearized ADM equations
about the background $(\mathscr{N},\bg|_{\mathscr{N}})$ in the slicing given by the Gaussian
normal coordinates: thus we have unit lapse, vanishing shift (and vanishing linearizations thereof). 
The corresponding ADM  Cauchy data are $(\bgamma^{\prime(3)},\bp^{\prime}),(\bgamma^{(3)},\bp)\in C^{\infty}(S^{0}_{2}(\Sigma))\times C^{\infty}(\tilde{S}^{2}_{0}(\Sigma))$ respectively, where in these coordinates
\begin{equation}
\label{eqn:gammaij}
\gamma^{(3)}_{ij}=\gamma_{ij}|_{\Sigma}
\end{equation}
\begin{multline}
\label{eqn:pij}
p^{ij}=\sqrt{h}
\frac{\gamma^{(3)}}{4}\left(h^{im}h^{jn}-h^{ij}h^{mn}\right)\partial_{0}h_{mn} \\
 -\frac{\sqrt{h}}{2}\left(\gamma^{im}_{(3)}h^{jn}+h^{im}\gamma^{jn}_{(3)}-\gamma^{ij}_{(3)}h^{mn}-h^{ij}\gamma^{mn}_{(3)}\right) \partial_{0}h_{mn} \\ 
+ \frac{\sqrt{h}}{2}\left(h^{im}h^{jn}-h^{ij}h^{mn}\right)\left((\nabla_{n}\gamma)_{mn}|_{\Sigma}+\frac{1}{2}h^{kl}\left(\partial_{0}h_{ml}\gamma^{(3)}_{kn}+\partial_{0}h_{nl}\gamma^{(3)}_{mk}\right)\right).
\end{multline}
Using, for convenience, Gaussian normal coordinates one may show\footnote{We 
caution that the relationship between $\bpi$ and $\bp$ is not straightforward, although they do coincide 
on constant time hypersurfaces in Minkowski space.} that
\begin{equation}
\label{eqn:omegaADMlink}
\omega(\bgamma^{\prime},\bgamma)=\omega^{ADM}_{(\bh,\bvarpi)}((\bgamma^{\prime(3)},\bp^{\prime});(\bgamma^{(3)}\bp)).
\end{equation}
By Theorem~\ref{thm:symplecticorthogonalcomplement} and degeneracy of $\bgamma'$, $(\bgamma^{\prime(3)},\bp^{\prime})$ is data for a pure gauge solution; therefore on the region $\mathscr{N}$, $\bgamma^{\prime}=\pounds_{\bw}\bg$ for some $\bw \in C^{\infty}(T^{1}_{0}(\mathscr{N}))$. Now perform a global gauge transformation on $\bgamma^{\prime}$ using a vector field $\bv\in C^{\infty}(T^{1}_{0}(M))$ satisfying $\bv=-\bw$ on an open neighbourhood of $\Sigma$ within $\mathscr{N}$. The result will still be both a solution and a degeneracy in $\mathscr{S}(M)$ but has $\Data_{\Sigma}(\bgamma^{\prime}-\pounds_{\bv}\bg)=(0,0)$ and therefore by Theorem~\ref{thm:uniqueness}, $\bgamma^{\prime} = \pounds_{\bu}\bg$ for some $\bu \in C^{\infty}(T^{1}_{0}(M))$. Note that due to the compactness of $\Sigma$, all three vector fields $\bw$, $\bv$ and $\bu$ will be spacelike-compact and hence so will their associated pure gauge perturbation. Therefore $\bgamma^{\prime}\in\mathscr{G}(M)=\hat{\mathscr{G}}(M)$. 

\section{Identity connecting the symplectic product and linearized constraints}
\label{sec:symprodconstraint}

\begin{thm} \label{thm:C}
For any $\bgamma\in{\mathscr T}(M)$, $\bw\in C^{\infty}(T^{1}_{0}(M))$
and smooth spacelike Cauchy surface $\Sigma$ with unit future-pointing normal vector $\bn$, we have
\begin{equation}
\omega_{\Sigma}(\bgamma,\pounds_{\bw}\bg) = 2\int_{\Sigma}{w^{a}L_{ab}(\bgamma)n^{b}dvol_{\bh}} = 2\int_{\Sigma}{w^{b} C^{\Sigma}_{b}(\Data_{\Sigma}(\bgamma))dvol_{\bh}}.
\end{equation}
\end{thm}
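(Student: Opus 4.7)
The strategy is to reduce the identity to a divergence-theorem calculation in spacetime. The second equality is immediate from the definition of the constraint map in~\eqref{eqn:IVCbC}, so the main content is the first equality
\[
\omega_\Sigma(\bgamma, \pounds_\bw\bg) = 2\int_\Sigma w^a L_{ab}(\bgamma) n^b \, dvol_\bh.
\]

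First I would repeat the computation from the proof of Lemma~\ref{lem:conserved}, now applied to the pair $(\bgamma, \pounds_\bw\bg)$. That computation rests only on the Euler--Lagrange identity~\eqref{eqn:Eulerlagrange} and the symmetries of $S^{abcd}$ and $T^{abcdef}$, so it goes through verbatim for any smooth symmetric tensors, not just solutions. Setting $j^c := (\pounds_\bw\bg)_{ab}\Pi^{cab}(\bgamma) - \gamma_{ab}\Pi^{cab}(\pounds_\bw\bg)$, so that $\omega_\Sigma(\bgamma,\pounds_\bw\bg) = \int_\Sigma n_c j^c\, dvol_\bh$, one obtains
\[
\nabla_c j^c = (\pounds_\bw\bg)_{ab} L^{ab}(\bgamma) - \gamma_{ab} L^{ab}(\pounds_\bw\bg) = 2(\nabla_a w_b)\, L^{ab}(\bgamma),
\]
because $L^{ab}(\pounds_\bw\bg) = 0$ on the cosmological vacuum background by~\eqref{eqn:SWliederiv} and $L^{ab}$ is symmetric in its indices. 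The Leibniz rule together with the linearised contracted Bianchi identity $\nabla_a L^{ab}(\bgamma) = 0$---which holds on any cosmological vacuum background by differentiating the standard identity $\nabla^a(G_{ab}+\Lambda g_{ab})=0$ along a one-parameter family of metrics---then reduce this to $\nabla_c j^c = 2\nabla_a(w_b L^{ab}(\bgamma))$. Hence the vector field $k^c := j^c - 2 w_b L^{cb}(\bgamma)$ is divergence-free on $M$.

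It remains to show $\int_\Sigma n_c k^c\, dvol_\bh = 0$, which I would handle via a cutoff argument. Inspecting the explicit form of $j^c$, the restriction $n_c k^c|_\Sigma$ depends only on the $2$-jet of $\bw$ along $\Sigma$; consequently the substitution $\bw \mapsto \tilde{\bw} := \phi\bw$ leaves the integrand on $\Sigma$ unchanged whenever $\phi$ is a smooth function equal to $1$ on a neighbourhood of $\Sigma$. Choosing $\phi$ additionally to vanish on a neighbourhood of some Cauchy surface $\Sigma^-$ to the past of $\Sigma$, the modified current $\tilde{k}^c$ built from $\tilde{\bw}$ in place of $\bw$ is again divergence-free; is spacelike-compact, inherited from the spacelike-compactness of $\bgamma$, $\Pi^{cab}(\bgamma)$ and $L^{cb}(\bgamma)$; and vanishes on a neighbourhood of $\Sigma^-$. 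Applying the divergence theorem to $\tilde{k}^c$ on the spacetime region between $\Sigma^-$ and $\Sigma$---whose intersection with $\supp\tilde{k}^c$ is compact by global hyperbolicity---then yields $\int_\Sigma n_c\tilde{k}^c\,dvol_\bh = 0$, and the identity follows.

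The principal technical checkpoint is the linearised Bianchi identity; apart from this, the argument is a direct generalisation of Lemma~\ref{lem:conserved} with a modest cutoff manipulation to accommodate the fact that $\bw$ is neither compactly supported nor required to come from an element of $\mathscr{S}(M)$.
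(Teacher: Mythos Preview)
Your proof is correct and follows essentially the same route as the paper: both arguments modify $\bw$ by a cutoff so as to vanish to the past (the paper's $\bv$, your $\phi\bw$), then combine the Euler--Lagrange identity~\eqref{eqn:Eulerlagrange}, the vanishing of $L^{ab}(\pounds_{\bw}\bg)$, and the linearised Bianchi identity with the divergence theorem. The only cosmetic difference is that the paper shows each side separately equals $-\int_{M^-} 2\nabla_{(a}v_{b)}L^{ab}(\bgamma)\,dvol_{\bg}$, whereas you package the two sides into a single divergence-free current $k^c$ and show its flux through $\Sigma$ vanishes.
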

\begin{proof} Choose a vector field $\bv \in C^{\infty}(T^{1}_{0}(M))$ such that $\bv=\bw$ in a neighbourhood of $\Sigma$ and $\bv$ vanishes to the far past of $\Sigma$. (This trick originates, 
as far as we know, from \cite{Friedman}, see the paragraph preceding equation (79) in that reference). 
By \eqref{eqn:IVCbC} we have
\begin{equation}
\label{eqn:intCLn}
\int_{\Sigma}{w^{a}C^{\Sigma}_{a}(\Data_{\Sigma}(\bgamma))dvol_{\bh}}=\int_{\Sigma}{w^{a}L_{ab}(\bgamma)n^{b}dvol_{\bh}}.
\end{equation}
The proof now uses two identities  to successively re-express the right-hand side of \eqref{eqn:intCLn}.
\begin{lem} With $\bgamma$, $\bw$ and $\bv$ as above,
\label{lem:gaugelinE}
\begin{equation}
\label{eqn:gaugevLab}
\int_{\Sigma}{w^{a}L_{ab}(\bgamma)n^{b}dvol_{\bh}} = -\int_{M^{-}}{\nabla_{(a}v_{b)}L^{ab}(\bgamma)dvol_{\bg}},
\end{equation}
where $M^{-}=I^-(\Sigma)$ is the region to the past of the Cauchy surface $\Sigma$.
\end{lem}
\begin{proof} Using the properties of $\bv$ and the Gauss Theorem on the region $M^{-}$ we have
\begin{equation}
\int_{\Sigma}{w^{a}L_{ab}(\bgamma)n^{b}dvol_{\bh}} =- \int_{M^{-}}{\nabla^{b}(v^{a}L_{ab}(\bgamma))dvol_{\bg}},
\end{equation}
where we have used that $\bv$ vanishes to the far past. The right-hand side can be rearranged using the Leibniz rule, symmetry of $L_{ab}$ and $\nabla^{a}L_{ab}=0$ to give the result.
\end{proof}

Now utilise the pre-symplectic product \eqref{eqn:presymp} to re-express the right-hand side of \eqref{eqn:gaugevLab}.
\begin{lem} With $\bgamma$, $\bw$ and $\bv$ as above,
\label{lem:gaugesymprodlinE}
\begin{equation}
\omega_{\Sigma}(\bgamma,\pounds_{\bw}\bg) =- \int_{M^{-}}{2 \nabla_{(a}v_{b)}L^{ab}(\gamma)dvol_{\bg}}.
\end{equation}
\end{lem}
\begin{proof} Expanding the left-hand side and using~\eqref{eqn:presymp} and the properties of $\bv$ gives
\begin{equation}
\omega_{\Sigma}(\bgamma,\pounds_{\bw}\bg) = \int_{\Sigma}{n_{a}[2\nabla_{(b} v_{c)}\Pi^{abc}(\bgamma)-\gamma_{bc}\Pi^{abc}(\pounds_{\bv}\bg)]dvol_{\bh}}.
\end{equation}
Applying the Gauss Theorem on the region $M^{-}$ gives
\begin{equation}
\omega_{\Sigma}(\bgamma,\pounds_{\bw}\bg)  =-\int_{M^{-}}{\nabla_{a}[2\nabla_{(b} v_{c)}\Pi^{abc}(\bgamma)-\gamma_{bc}\Pi^{abc}(\pounds_{\bv}\bg)]dvol_{\bg}}.
\end{equation}
The integrand is
\begin{multline}
\label{eqn:stokesgaugeexpansion}
\nabla_{a}(2\nabla_{(b}v_{c)}\Pi^{abc}(\bgamma)-\gamma_{bc}\Pi^{abc}(\pounds_{\bv}\bg)) =2\nabla_{a}\nabla_{(b}v_{c)}\Pi^{abc}(\bgamma) \\+2\nabla_{(b}v_{c)}\nabla_{a}\Pi^{abc}(\bgamma)-\nabla_{a}\gamma_{bc}\Pi^{abc}(\pounds_{\bv}\bg)-\gamma_{bc}\nabla_{a}\Pi^{abc}(\pounds_{\bv}\bg).
\end{multline}
Using \eqref{eqn:momentum} and the symmetries of $T^{abcdef}$, the first and third terms cancel. The remaining two terms reduce to
\begin{equation}
2\nabla_{(b}v_{c)}\nabla_{a}\Pi^{abc}(\bgamma)-\gamma_{bc}\nabla_{a}\Pi^{abc}(\pounds_{\bv}\bg) = 2 \nabla_{(b}v_{c)}L^{bc}(\bgamma),
\end{equation}
where we used the first identity in \eqref{eqn:Eulerlagrange} and $L^{bc}(\pounds_{\bv}\bg)=0$. Hence we achieve the desired result. \end{proof}

The proof of Theorem~\ref{thm:C} is completed by combining the results of Lemma~\ref{lem:gaugelinE} and Lemma~\ref{lem:gaugesymprodlinE} with \eqref{eqn:intCLn}. \global\logotrue %Fixes bug in ws-rmp.cls 
\end{proof}

\small

%\newpage

%\bibliographystyle{ws-acs-mod_rmp}
%\bibliography{references}

\end{document}